\providecommand{\keywords}[1]{\textbf{\textit{Keywords:}} #1}
\definecolor{ForestGreen}{RGB}{34,139,34}
\newcommand\rrevise[1]{{\color{black} #1}}
\newcommand\revise[1]{{\color{black} #1}}
\def \test {\textrm{test}}
\def \mod {\textrm{mod}}
\def \ols {\textrm{ols}}
\def \ml {\textrm{ml}}
\def \full {\textrm{full}}
\def \Lasso {\textrm{Lasso}}
\def \partial {{\textrm{part}}}
\def \miss {\textrm{miss}}
\def \lm {\textrm{lm}}
\definecolor{ForestGreen}{RGB}{34,139,34}
\title{Modular Regression: Improving Linear Models \\ by Incorporating Auxiliary Data} 
\author{Ying Jin}  
\author{Dominik Rothenh\"ausler}
\affil{Department of Statistics, Stanford University}
\begin{document}

\maketitle

\begin{abstract}
This paper develops a new framework, called modular regression, 
to utilize auxiliary information --
such as variables other than the 
original features or 
additional data sets -- in the training process of linear models. 
At a high level, our method follows the routine: 
(i) decomposing the regression task into several  sub-tasks, (ii) fitting the sub-task models,
and (iii) using the sub-task models to provide an improved estimate for the original regression problem.
This routine applies  to 
widely-used low-dimensional (generalized) linear models 
and high-dimensional regularized linear regression.  
It also naturally extends to missing-data 
settings where only partial observations are available.
By incorporating auxiliary information, 
our approach improves the 
estimation efficiency and prediction accuracy  
upon linear regression or 
the Lasso under a conditional independence assumption 
for predicting the outcome. 
For high-dimensional settings, we develop an extension of our 
procedure  that is robust to violations of the conditional independence assumption, in the sense that it improves efficiency  
if this assumption holds and coincides with the Lasso otherwise. 
We demonstrate the efficacy of our 
methods with simulated and real data sets. 
\end{abstract}

\keywords{Data fusion; High dimensional statistics; Missing data; Regression; Semiparametric efficiency; Surrogates.}

\section{Introduction}
\label{sec:intro}


Suppose for a patient  subject to 
a surgical procedure, we are interested in predicting their future health outcome $Y\in \RR$ 
in two years 
using some features $X\in \RR^{p_x}$ 
collected at the time of the surgery. 
The standard approach  is supervised learning:
The training data containing $n$
observations $\{(X_i,Y_i)\}_{i=1}^n$ 
from previous patients 
is used to pick a predictor $\hat{f}\colon \RR^{p_x}\to \RR$ that 
minimizes the average prediction error 
$\frac{1}{n}\sum_{i=1}^n \ell(f(X_i),Y_i)$ 
over $f\in \cF$ for some loss function $\ell\colon \RR\times\RR\to \RR$ 
and some function class $\cF$. 

However, the long duration 
of the study can pose various practical challenges to 
this paradigm. 
The number of joint observations of $(X,Y)$ may be very limited because they
take at least two years to collect.
The training data 
may also contain intermediate measurements $Z$, such as 
the health outcome one year after their surgery.  
In addition, there may be some patients for whom we 
only observe $(X,Z)$ which are easier to collect.  
It is also probable that 
another data set provides 
several $(Z,Y)$ samples 
from  patients who had surgery more than two years ago, 
but their features $X$ were missing 
due to limited technology. 
In this case, the standard  
approach of empirical risk minimization only over $(X,Y)$ samples
falls short as it cannot make use of 
auxiliary variables 
or additional partial observations. 

In the missing data literature, 
intermediate outcomes $Z$ are often 
utilized to estimate 
treatment effects
under a
conditional independence assumption (i.e.,
$X\indep Y\given Z$ in  our setting) 
when the outcome $Y$ is difficult to measure~\citep{prentice1989surrogate,begg2000use,athey2016estimating,athey2019surrogate}.  
Incorporating intermediate outcomes 
improves the estimation efficiency when joint observations 
of $(X,Y,Z)$ are available. It 
also offers more flexibility in data usage because it allows to 
incorporate 
$(X,Z)$ and $(Z,Y)$ samples. 
However, the conditional independence assumption may not 
hold in practice, and the existing methods are highly specialized
to estimating causal effects.

In this work, we aim to build a framework 
for flexibly incorporating 
auxiliary information into generic 
estimation and prediction procedures 
while maintaining rigorous guarantees. 
Such information may come 
from other variables $Z$ in addition to 
the original features and responses in the training process; 
it may also be additional data sets 
that only cover a subset of variables. 
We will extend the ideas of leveraging independence  
from the missing data 
literature 
to generalized and high-dimensional linear models, 
and develop more robust methods for general dependence patterns.
Let us begin with the tasks we consider.

\paragraph{A common algorithmic structure.}
We focus on 
statistical learning algorithms that take the form
\begin{equation}\label{eq:erm}
  \hat \theta = \argmin_{\theta \in \mathbb{R}^{p_x}} ~ \frac{1}{n}  \sum_{i=1}^n\big\{  h(X_i,\theta) - Y_i X_i^\top \theta \big\},
\end{equation}
where $h\colon \RR^{p_x} \times \RR^{p_x}\to \RR$  is a known function that is convex in $\theta$, and ${p_x}\in \NN$ is the 
dimension of the prediction features. 
The simplest example is ordinary least squares (OLS):
\begin{equation*}
    \hat \theta = \argmin_{\theta \in \mathbb{R}^{p_x}} ~ \frac{1}{n} \sum_{i=1}^n \Big\{  \frac{1}{2} \theta^\top X_i X_i^\top \theta - Y_i X_i^\top \theta \Big\}.
\end{equation*}  
Logistic regression with 
$Y\in\{0,1\}$ also satisfies~\eqref{eq:general}: 
\begin{equation*}\label{eq:general}
    \hat \theta = \argmin_{\theta \in \mathbb{R}^{p_x}}  ~ \frac{1}{n} \sum_{i=1}^n \big\{ \log(1+ \exp(X_i^\top \theta)) -  Y_i X_i^\top \theta \big\}. 
\end{equation*}  
We will later extend to other generalized linear models as 
we develop our method.  
The final example we consider is the 
$\ell_1$-regularized 
linear regression~\citep{tibshirani1996regression}: 
\#\label{eq:lasso}
\hat\theta  
= \argmin_{\theta \in \RR^{p_x}} 
\big\{ -
 {\textstyle \frac{1}{n}\sum_{i=1}^n } Y_iX_i^\top \theta + \theta^\top \big({\textstyle \frac{1}{2n}\sum_{i=1}^n }X_iX_i^\top\big) \theta + \lambda  \|\theta\|_1 \big\}.
\#

In the situation we discuss 
at the beginning, 
the supervised learning estimators of the form~\eqref{eq:general} 
may not be able 
to utilize auxiliary data. 
Instead of minimizing equation~\eqref{eq:erm},
we propose an alternative risk minimization criterion: 
 
\begin{equation}\label{eq:ours}
   \hat \theta =  \argmin_{\theta\in\RR^{p_x}} ~ \bigg\{ \frac{1}{n} \sum_{i=1}^n h(X_i,\theta) - \hat C^\top \theta \bigg\},
\end{equation}
where $\hat C$ is a proxy term for $C:= \mathbb{E}[XY]$
computed from the data.
We will see that the performance 
of $\hat\theta$ is closely related to the 
estimation accuracy of $\hat{C}$. 
At a high level, our idea is to 
replace $\frac{1}{n}\sum_{i=1}^n X_iY_i$ by 
$\hat{C}$ with
negligible bias 
and a lower variance, which translates 
to the improved performance of $\hat\theta$. 
We achieve this by developing a 
``modular'' estimator $\hat{C}$ -- whose meaning 
will be made precise soon -- 
that naturally allows for flexible incorporation of 
auxiliary information to improve the learning performance.
 
We emphasize that we still aim to minimize the same population risk (e.g., the population version of~\eqref{eq:erm}) as when using $\frac{1}{n}\sum_{i=1}^n X_iY_i $. 
As a result, the estimator still 
converges to the same limit $\theta^*$ that minimizes the population risk, that is, $ \theta^* =  \arg \min  \mathbb{E}[h(X,\theta) - Y X^\top \theta]$. Relatedly, our methods can be used for both estimation and prediction.

\subsection{Leveraging the dependence structure}
\label{subsec:ci}

The driving force of our approach 
is  to leverage the conditional 
independence structure among variables.
Conditional independencies are often used to improve performance over saturated models. 
As a classical example, suppose 
we have i.i.d.\ copies of a vector $(X,Y)$ obeying $X_{-S} \indep Y \given X_S$, 
that is, $X_{-S}$ is conditionally independent of $Y$ given $X_S$,  where  $X_S$ denotes a subset $S$ of features, and $X_{-S}$ stores the remaining ones. 
It is known that running a regression of $Y$ on $X_S$ can have lower mean-squared error than regressing $Y$ on $X$, since the latter may have very large variance (see, e.g., \citet{hastie2009elements} for more intuition). However, the former estimate may be biased if the conditional independence relation is violated. Furthermore, even if the conditional independence holds, the set $S$ is generally unknown. It is common to use model selection methods such as the best subset selection, Lasso, AIC, or BIC to navigate the bias-variance tradeoff. 

Perhaps less widely known is that  structures of the type 
$Y \indep X_S \given Z$  for auxiliary variables $Z$
can also be leveraged 
to improve estimation and prediction.
Analogous to classical model selection strategies, 
we want to derive a data-driven strategy to learn and exploit such structures. 
To fix ideas, let us consider 
an extreme case where $X\indep Y\given Z$.  

\paragraph{A naive approach.}
If $X \indep Y \given Z$, 
we can re-write 
\begin{equation}\label{eq:ci_naive}
    \mathbb{E}[X Y] = \mathbb{E}[\mathbb{E}[X\given Z] \mathbb{E}[Y \given Z]].
\end{equation}
Assuming access to i.i.d.~copies  $\{(X_i,Y_i,Z_i)\}_{i=1}^n$,
in view of~\eqref{eq:ci_naive}, 
we can (i) estimate $\mathbb{E}[Y|Z=z]$ via $\hat \mu_y(z)$, (ii) estimate $\mathbb{E}[X|Z=z]$ via $\hat \mu_x(z)$ and (iii) combine these two estimates 
to compute $\hat C = \frac{1}{n} \sum_{i=1}^n \hat \mu_y(Z_i) \hat \mu_x(Z_i)$ for $C= \mathbb{E}[XY]$. If $\hat \mu_x$ and $\hat \mu_y$ are accurate, 
then this estimate has smaller variance than 
$\frac{1}{n}\sum_{i=1}^nX_iY_i$, as 
\begin{equation}\label{eq:naive}
    \text{Var}( \hat \mu_y(Z) \hat \mu_x(Z) ) \approx \text{Var}( \mathbb{E}[Y \given Z] \mathbb{E}[X \given Z] ) = \text{Var}( \mathbb{E}[XY \given Z] ) \le \text{Var}(  XY ).
\end{equation}
However, this naive approach has a pressing issue 
that may hinder its performance: even if \eqref{eq:ci_naive} holds, 
there can be considerable bias 
if the estimation of $\hat\mu_y$ and $\hat\mu_x$ 
has slow convergence rates. 
In this case, $\hat{C}$ is not a good estimator for $C$ 
because the bias $\mathbb{E}[\hat C] - C$ can be comparable with, or even larger in large samples than,  the variance of $\frac{1}{n} \sum_{i=1}^n X_i Y_i$. 

Figure~\ref{fig:motivate} illustrates this point 
via a simple numerical example, where $Z\in \RR^{20}$, $X,Y\in \RR$, and the conditional expectation functions 
$\EE[Y\given Z]$ and $\EE[X\given Z]$ only involve the 
first three entries in $Z$. 
The goal is to compute the best linear predictor for $Y$, 
and the default choice is OLS.
We apply the above naive approach with $\hat\mu_y$ and $\hat\mu_x$ fitted by random forests in R, and show the bias and standard deviation scaled by $\sqrt{n}$  for various sample sizes $n$.

\begin{figure}[ht]
    \centering
    \includegraphics[width=4.5in]{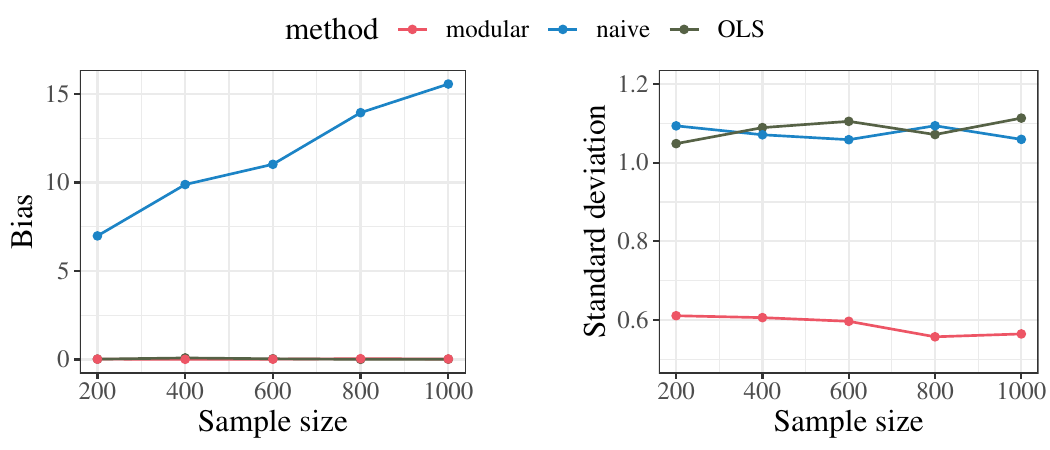}
    \caption{{Scaled by $\sqrt{n}$, the bias (on the left) and the standard deviation (on the right) are shown for plain OLS, the \texttt{naive} approach, and our to-be-introduced \texttt{modular} approach, across various sample sizes $n$.}} 
    
    \label{fig:motivate}
\end{figure}
Even in this simple example  where random forests should excel, the naive approach exhibits a significantly larger bias compared to plain OLS. This bias is larger than $n^{-1/2}$, and is due to the slow convergence of random forests. 
In addition, contrary to our prediction, 
the standard deviation of the naive approach 
is close to, 
instead of smaller than, the OLS, which is 
also due to the variability in training the
random forests.

\paragraph{A better approach.} The performance of our new 
approach is depicted in red in Figure~\ref{fig:motivate}: 
It has negligible bias and
reduces the standard deviation by more than $40\%$  compared with OLS at all sample sizes. 
\rrevise{Our new approach uses the same intuitions as mentioned earlier but additionally addresses bias using well-known semi-parametric techniques based on the mixed-bias property~\citep{rotnitzky2021characterization,robins2008higher}.} 
We further rewrite~\eqref{eq:ci_naive} as
\begin{equation} 
\mathbb{E}[X Y] = \mathbb{E}[X \mathbb{E}[Y \given Z]] + \mathbb{E}[\mathbb{E}[X|Z] Y] - \mathbb{E}[\mathbb{E}[X\given Z] \mathbb{E}[Y \given Z]].
\end{equation}
Again, an estimator can be obtained via
three sub-tasks: (i) estimating $\mathbb{E}[Y|Z=z]$ via $\hat \mu_y(z)$, (ii) estimating $\mathbb{E}[X|Z=z]$ via $\hat \mu_x(z)$, and (iii) combining these two estimates via
\begin{equation}\label{eq:indep}
\hat C = \frac{1}{n} \sum_{i=1}^n \big\{ \hat \mu_y(Z_i) X_i + Y_i \hat \mu_x(Z_i) - \hat \mu_y(Z_i)  \hat \mu_x(Z_i) \big\}.
\end{equation}
We will show that a slight variation of this approach has very favorable properties. Most importantly, 
even if $\hat \mu_y$ and $\hat \mu_x$ converge 
to the ground truth at slow speed,  
the bias of 
$\hat C$ in~\eqref{eq:indep} can still be negligible, 
and it has a lower variance than $\frac{1}{n} \sum_{i=1}^n X_i Y_i$:
\$
n\cdot \Var(\hat{C}) \approx \Var(XY) - \Var\big( (X-\mu_x(Z))\cdot(Y-\mu_y(Z))  \big) \leq \Var(XY).
\$
Indeed, it leads to the most efficient estimator 
if $X\indep Y\given Z$  (see Section~\ref{subsec:dr_semi} for details).
This approach also opens 
a door for data fusion:~\eqref{eq:indep} 
only involve pairwise observations 
of $(X,Z)$ or $(Z,Y)$. 
Thus, it can  flexibly use
additional data to improve
estimation accuracy, and works even when 
no joint observations of $(X,Z,Y)$ are available.

\subsection{A modular regression framework}

We propose a modular regression framework 
that concretize the above ideas by carefully 
dealing with the nuances in obtaining estimators 
$\hat\mu_x$, $\hat\mu_y$ to eliminate bias, 
navigating bias-variance tradeoffs under
potential violation of the conditional independence 
assumption, and developing principled algorithms 
for general estimators 
and data fusion scenarios.

As the name suggests, we 
decompose the estimation of $\mathbb{E}[XY]$
into smaller modules -- each being a sub-task 
that involves a subset of variables 
such as $(Y,X)$, $(Y,Z)$ or $(Z,X)$ -- and  
then re-assemble  the modules 
to construct $\hat C$. 
This idea is visualized
in Figure~\ref{fig:mod_ill}.  


 
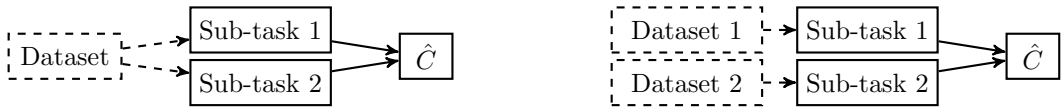
\begin{figure}[h]
\centering
\begin{subfigure}[t]{0.48\linewidth}
\centering
\begin{tikzpicture}[->,>=stealth', thick, main node/.style={circle,draw}]
\node[rectangle,draw, minimum width=1.5cm, minimum height=0.6cm] (3) at (0.8,1.25) {Sub-task 1};
\node[rectangle,draw, minimum width=1.5cm, minimum height=0.6cm] (4) at (0.8,0.55) {Sub-task 2};
\node[rectangle,draw, minimum width=0.7cm, minimum height=0.6cm] (5) at (3,0.9) { $\hat C$};
\node[rectangle,draw, dashed, minimum width=1.5cm, minimum height=0.6cm] (1) at (-1.8,0.9) {Dataset};
\draw[->] (1) edge [draw=black, dashed] (3);
\draw[->] (3) edge [draw=black] (5); 
\draw[->] (1) edge [draw=black, dashed] (4); 
\draw[->] (4) edge [draw=black] (5); 
\end{tikzpicture}  
\caption{Conditional independencies can be leveraged to decompose statistical tasks into subtasks.}
\end{subfigure}
\begin{subfigure}[t]{0.48\linewidth}
\centering
\begin{tikzpicture}[->,>=stealth', thick, main node/.style={circle,draw}]
\node[rectangle,draw, minimum width=1.5cm, minimum height=0.6cm] (3) at (0.8,1.25) {Sub-task 1};
\node[rectangle,draw, minimum width=1.5cm, minimum height=0.6cm] (4) at (0.8,0.55) {Sub-task  2};
\node[rectangle,draw, minimum width=0.7cm, minimum height=0.6cm] (5) at (3,0.9) {$\hat C$};
\node[rectangle,draw, dashed, minimum width=2cm, minimum height=0.6cm] (6) at (-1.6,1.25) {Dataset 1};
\node[rectangle,draw, dashed, minimum width=2cm, minimum height=0.6cm] (2) at (-1.6,0.55) {Dataset 2};
\draw[->] (6) edge [draw=black, dashed] (3);
\draw[->] (3) edge [draw=black] (5); 
\draw[->] (2) edge [draw=black, dashed] (4);
\draw[->] (4) edge [draw=black] (5); 
\end{tikzpicture}  
\caption{Modularity allows combining multiple datasets.}
\end{subfigure}
\caption{
Decomposing the estimation of $C=\mathbb{E}[YX]$ 
tackles two settings that may seem different at first sight. In (a), one can leverage conditional independencies to reduce statistical uncertainty. In (b), one can use additional data to increase  effective sample size.
} 
\label{fig:mod_ill}
\end{figure}
 
As shown in panel (a), 
our method adapts to specific structures, such as conditional independence between variables, 
so as to ensure 
a properly chosen decomposition.  
Such structure can also 
be learned from data.  
The decomposition also
allows data fusion; see panel (b).  
As long as 
a data set covers the variables in a sub-task, 
it can be incorporated to improve efficiency. 
We summarize our main results below.

\begin{itemize}
    \item \emph{Modular linear regression.}  
    We develop a generic approach for \revise{improving the estimation and prediction 
    in (generalized) linear models.  
    This is achieved by rewriting the estimation equation 
    to utilize auxiliary variables.} 
    Leveraging semi-parametric theory, 
    we show \revise{the optimality of our approach}
    under \revise{a conditional independence condition}. 
    
    \item \emph{High-dimensional modular regression.}
    By adding $\ell_1$-regularization, 
    our tools extend 
    to the high-dimensional setting. 
    We develop a proxy empirical risk that has negligible bias and lower variance compared with the Lasso. 
    We show that \revise{it improves the upper bounds for prediction accuracy.} 
    
    \item \emph{Robustness under conditional dependence.}  
    Since the conditional independence condition may be 
    violated in practice, 
    we develop an extension of our method that adapts to the dependency structures among variables, and we allow  
    such structures to be learned from data. 
    The proposed procedure interpolates between linear models (non-modular) 
    and the fully-modular approach (which is efficient under conditional independence). 
    The proposed procedure, bridging linear (non-modular) and fully-modular models, enhances robustness while still allowing for efficiency gains.
    
    \item \emph{Data fusion.} 
    We further extends
    method to combine multiple data sets. 
    This formulation allows using additional data on sub-tasks to improve overall statistical efficiency.
\end{itemize}

The rest of the paper is organized as follows. 
Section~\ref{sec:lowd} develops modular 
(generalized) linear regression 
in the fixed-$p$ setting. 
Section~\ref{sec:highd} studies modular regression 
for penalized linear regression in high dimensions, 
and develops a robust variant that can adapt to different dependence structures. 
Applications to data fusion and partial observations are discussed
in Section~\ref{sec:partial}. 
Finally, we evaluate our methods on simulated datasets in Section~\ref{sec:simu} 
and apply them to real datasets in Section~\ref{sec:real}.

\subsection{Related work}
\label{subsec:rel_work}

Modular regression combines evidence from sub-tasks, leveraging 
a ``modular structure'' provided by 
the conditional independence. 
%
One related strand is to use 
surrogates when the outcome of interest is expensive to measure~\citep{fleming1994surrogate,post2010analysis}. 
In particular, in   causal inference,  
a series of work~\citep{prentice1989surrogate,lauritzen2004discussion,chen2007criteria,vanderweele2013surrogate,athey2016estimating,athey2019surrogate,kallus2020role}  
have advocated using 
intermediate (short-term) outcome(s) as ``surrogates'' 
for long-term outcomes, and assumes various 
surrogate conditions  such as  
conditional independence of the long-term outcome and the treatment given 
the surrogates. 
Similar to our approach, 
the surrogate condition allows to decompose 
treatment effect estimation  into 
sub-tasks of estimating the effect of the treatment on the intermediate outcome 
and  estimating the effect of the intermediate outcome on the long-term outcome. 
We 
work under a generic conditional independence condition and 
mainly focus on  regression and prediction tasks. 
In addition, we propose a method that is robust to violations of  conditional independence.

Our framework is related to missing data and 
in particular data fusion, which combines data sets  that 
cover different subsets of variables. 
The preceding paragraph is also related; 
more generally,  there has been a surge of interest in combining different data sets, 
such as  combining experimental and observational data 
for treatment effect estimation~\citep{rosenman2022propensity,rosenman2020combining,colnet2020causal}, generalizing inference to 
different populations~\citep{dahabreh2019generalizing,hartman2015sample,jin2021attribute}, etc. 
Many of these works require identifiability assumptions for the target estimand, 
which are similar to the conditional independence condition. 
Compared to these works,
we study general tasks and enjoy robustness to 
failure of conditional independence.

In missing data scenarios, some 
other works study regression problem in settings that differ from ours. 
Those for high-dimensional data include
\cite{loh2011high} on
penalized regression,~\cite{lounici2014high,cai2016minimax} on the
estimation of covariance matrices, 
and~\cite{elsener2019sparse,zhu2019high} 
on sparse principal component analysis, which 
consider estimation when some 
covariates are missing at random. 
In contrast, we focus more on the data fusion 
aspect; 
we draw ideas from semiparametric statistics, 
and account for the dependence 
structure. A recent work of~\cite{cannings2022correlation} on U-statistics 
with low-dimensional data is related, which 
devises an estimator with 
smaller prediction MSE  by incorporating 
a partially missing data set.  
Instead, 
our approach leverages conditional 
independence structures among variables 
instead of correlation between estimators; 
we focus on regression 
and consider the high dimensional regime.

Our estimator  
is inspired by recent progress in 
efficient regression adjustment~\citep{henckel2019graphical,rotnitzky2019efficient} 
in low-dimensional graphical models. 
Our framework is more general as it applies to (generalized) 
linear models,  
works for high dimensional settings, does not require  the graphical structure, 
and imposes minimal model assumptions; 
it is also more restricted in that we do not 
consider choosing the optimal covariate sets. 

The modular estimator is  optimal for linear regression 
under  conditional independence, 
in the sense of semiparametric efficiency~\citep{bickel1993efficient,tsiatis2006semiparametric}. 
Thus, this work is also connected to the literature of 
 missing data.
Furthermore, the double robustness of our estimator is
similar to the AIPW estimator~\citep{robins1994estimation} and leverages the ``mixed-bias'' property~\citep{rotnitzky2021characterization,robins2008higher}.

The conditional independence structure has been used 
in a series of early works~\citep{sargan1958estimation,hansen1982large} 
to improve estimation efficiency; 
also related is~\cite{causeur2003linear} on 
linear regression under conditional independence, 
which is closely related to our 
first modular regression algorithm 
in low dimensions. However,  
these works assume strong parametric models such as a joint Gaussian distribution and independent Gaussian noise, 
while our method does not require stringent model assumptions.

\revise{Broadly speaking, our variance reduction idea may also apply to other statistical problems where the `modular' structure and conditional independence are present. For instance, sufficient dimensional reduction (SDR)~\citep{li2018sufficient} imposes $Y\indep X\given X^\top \Gamma$ for some unknown matrix $\Gamma$, whose estimation often relies on inverse regression algorithms~\citep{li1991sliced,adragni2009sufficient,ma2013efficient} that demonstrate a modular structure. Our idea may be adapted to enable more efficient inverse regression given an approximate solution for $\Gamma$.}

\paragraph{Notations.}

We use the standard $O_P(\cdot)$ and $o_P(\cdot)$ 
notation to denote smaller order 
in probability. 
For any two sequences $\{a_n\}$ and $\{b_n\}$ of positive numbers, 
we denote  $a_n=\Omega(b_n)$ if $\lim_{n\to \infty}a_n/b_n<\infty$ 
and $\lim_{n\to \infty} b_n /a_n<\infty$; 
we denote $a_n = O(b_n)$ if $a_n/b_n \leq C$ for some constant $C>0$.  
We use $\PP_{X,Y,Z}$ to denote the joint distribution of $(X,Y,Z)$. 
For any functions $f,g\colon \cZ\to \RR^d$, 
we denote their $L_2$-distance as $\|f-g\|_{L_2(\PP_Z)}^2 = \EE[\|f(Z)-g(Z))\|_2^2]$, 
where $\|\cdot\|_2$ is the Euclidean norm and 
the expectation is with respect to the distribution $\PP_Z$ of $Z$. For any vector $v \in \mathbb{R}^l$ and positive definite matrix $\Sigma \in \mathbb{R}^{l \times l} $ we set $\| v \|_\Sigma^2 = v^\top \Sigma v$.

\section{Modular linear regression in low dimensions}
\label{sec:lowd}

We first introduce modular regression in low-dimensional 
linear models, where
$X\in \RR^{p_x}$, $Z\in \RR^{p_z}$, and
$p_x$, $p_z$ do not grow with $n$ asymptotically.  

\begin{assumption}
\label{assump:ci}
The joint distribution of $(X,Z,Y)$ satisfies $X\indep Y\given Z$. 
\end{assumption}

In this section, we use this working assumption to show how conditional independencies can be leveraged to improve estimation accuracy. 
It will be relaxed later 
in Section~\ref{sec:robustness}, where 
we develop a method that is robust to violations of Assumption~\ref{assump:ci}.

Intuitively, Assumption~\ref{assump:ci} ensures that
$Z$ contains all the information 
in $Y$ that is relevant to $X$.  
While this assumption seems strong, it
is reasonable in many situations and 
has been widely used in a variety of applications. 
For example, $Z$ could be an intermediate outcome in the 
training data that is unavailable 
for the test samples, while $Y$ is a long-term outcome of interest. 
Intermediate outcomes are widely used as a proxy 
in estimating 
long-term treatment effects~\citep{athey2016estimating,athey2019surrogate}, 
while we focus on regression problems.

\subsection{Modular linear regression}
\label{subsec:modular_linear}

For simplicity of exposition, we start with linear (OLS) regression to show 
the benefits of 
a modular structure. 
Our goal is to predict $Y\in \RR$ by 
$\hat{Y} = X^\top \theta$ for some $\theta\in \RR^{p_x}$, 
but we do not necessarily assume a well-posed linear model. 
In the training process,   
we have observations of 
i.i.d.~triples $\{(X_i,Y_i,Z_i)\}_{i=1}^n$ for some 
auxiliary variables $Z_i \in \RR^{p_z}$, 
and $X_i\in \RR^{p_x}$ are random vectors (we view all vectors 
as column vectors throughout). 
The training and the target distributions 
are induced by the same distribution $(X,Y,Z)\sim \PP$.

Modular regression proceeds in three steps to leverage  conditional 
independence. 
\begin{enumerate}
    \item \emph{Decompose into sub-tasks.} First, the conditional independence structure in Assumption~\ref{assump:ci} allows us to decompose the estimation of $\EE[XY]$
    into sub-tasks that only involve 
    observations of $(X,Z)$ or $(Y,Z)$.
    \item \emph{Solve the sub-tasks.} We then learn the 
    conditional mean functions.
    We use cross-fitting~\citep{chernozhukov2018double}  
    to ensure desirable statistical properties: we 
randomly split $\cI=\{1,\dots,n\}$
into two disjoint folds $\cI_1$, $\cI_2$. Then 
for each fold $k=1,2$, 
we fit models $\hat\mu_x^{(i)}\colon \cZ\to \RR^{p_x}$ 
and $\hat\mu_y^{(k)} \colon \cZ\to \RR^{p_y}$ for  $\mu_x(z)=\EE[X\given Z=z]$ and 
$\mu_y(z) = \EE[Y\given Z=z]$ using data in 
$\cI\backslash\cI_k$. 
With slight abuse of notation, we write $\hat\mu_y(Z_i)=\hat\mu_y^{(k)}(Z_i)$ 
and $\hat\mu_x(Z_i)=\hat\mu_x^{(k)}(Z_i)$, $i\in \cI_k$.
    \item \emph{Assemble the sub-tasks.}  
Finally, we solve a modular 
least squares regression
\#\label{eq:modular_default}
\hat\theta_{n}^\mod = \argmin_{\theta\in \RR^{p_x}}~
\bigg\{\frac{1}{n}\sum_{i=1}^n 
  \theta^\top X_iX_i^\top \theta /2 - \hat{C}_{\lm}^\top \theta \bigg\}, 
\#
where we use a  proxy cross-term 
\#\label{eq:C_default}
\hat{C}_{\lm} = \frac{1}{n}\sum_{i=1}^n \big[ X_i \hat\mu_y (Z_i) + \hat\mu_x (Z_i)Y_i - 
\hat\mu_x (Z_i)\hat\mu_y (Z_i)\big].
\#
\end{enumerate}  
The procedure is summarized in Algorithm~\ref{alg:lowd}. 
\revise{When $X$ contains an intercept, 
the corresponding entry of $\hat{C}_{\lm}$ can be simply set as $\frac{1}{n}\sum_{i=1}^nY_i$.}
\begin{algorithm}[h]
\caption{Modular linear regression}\label{alg:lowd}
\begin{algorithmic}[1]
\REQUIRE Dataset $\cI =\{(Y_i,X_i,Z_i\}_{i=1}^{n}$.
\STATE Randomly split $\cI$ into equally-sized folds $\cI_k$, $k=1,2$.
\FOR{$k=1,2$} 
\STATE Fit models $\hat\mu_x^{(k)}(\cdot)$ for $\EE[X\given Z=\cdot]$ using observations $(X_i,Z_i)$, $i\notin \cI_k$; 
\STATE Fit models $\hat\mu_y^{(k)}(\cdot)$ for $\EE[Y\given Z=\cdot]$ using observations $(Y_i,Z_i)$, $i\notin \cI_k$; 
\STATE Compute $\hat\mu_x(Z_i) = \hat\mu_x^{(k)}(Z_i)$ and  $\hat\mu_y(Z_i) = \hat\mu_y^{(k)}(Z_i)$ for all $i\in \cI_k$.
\ENDFOR 
\STATE Compute the proxy cross-term  $\hat{C}_\lm$ as in~\eqref{eq:C_default}. 
\ENSURE The modular 
linear regression $\hat\theta_n^\mod$ as in~\eqref{eq:modular_default}. 
\end{algorithmic}
\end{algorithm}

\subsection{Double robustness and semiparametric efficiency}
\label{subsec:dr_semi}

The modular estimator $\hat\theta_n^\mod$ is 
doubly robust with respect to the estimation error of 
$\hat\mu_{x }$ and $\hat\mu_{y }$. 
To be more precise, the estimator 
converges at $n^{-1/2}$ rate to $\theta^*$ 
and is semiparametrically efficient even 
when $\hat\mu_{x }$ and $\hat\mu_{y }$
are consistent with slow nonparametric rates, that is if $\|\hat\mu_{x}^{(k)}-\mu_x\|_{L_2(\PP_Z)} = o_P(n^{-1/4})$ and $\|\hat\mu_{y}^{(k)}-\mu_y\|_{L_2(\PP_Z)} = o_P(n^{-1/4})$. 
The proof of the next theorem is in Appendix~\ref{app:subsec_thm_lowd}. 

\begin{theorem}[Double robustness and efficiency] \label{thm:lowd}
Suppose $\|\hat\mu_{x}^{(k)}-\mu_x\|_{L_2(\PP_Z)}\cdot\|\hat\mu_{y}^{(k)}-\mu_y\|_{L_2(\PP_Z)}=o_P(1/\sqrt{n})$ 
for $k=1,2$, 
$\EE[XX^\top]\succ 0$ is finite, 
and $X(Y-X^\top\theta^*)$ has finite 
second moment. Then 
$\sqrt{n}(\hat\theta_n^\mod - \theta^*)\stackrel{d}{\to}N(0,\Sigma^\mod)$ as $n\to \infty$, where $\Sigma^\mod=\Cov(\phi^\mod(X_i,Y_i,Z_i))$ 
for $\phi^\mod(x,y,z) = \EE[XX^\top]^{-1} \big(x\mu_y(z) + \mu_x(z) y - \mu_x(z)\mu_y(z) - xx^\top \theta^*\big)$. 
\revise{Further, suppose the  distribution of $(X,Y,Z)$ 
has density with respect to a base measure $\mu$. Then} $\phi^\mod$ is the (semiparametrically) 
efficient influence function
for estimating $\theta^*$.
\end{theorem}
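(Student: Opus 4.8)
The plan is to establish asymptotic normality by showing that $\hat\theta_n^\mod - \theta^*$ admits an asymptotically linear expansion with influence function $\phi^\mod$, and then separately argue that $\phi^\mod$ is the efficient influence function. First I would write the first-order optimality condition for the convex program~\eqref{eq:modular_default}: since $\hat\theta_n^\mod$ minimizes a smooth convex quadratic, it satisfies $\big(\frac{1}{n}\sum_i X_iX_i^\top\big)\hat\theta_n^\mod = \hat C_{\lm}$, which gives the closed form $\hat\theta_n^\mod = \hat\Sigma_X^{-1}\hat C_{\lm}$ with $\hat\Sigma_X := \frac1n\sum_i X_iX_i^\top$. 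Using $\theta^* = \EE[XX^\top]^{-1}\EE[XY]$ and the algebraic identity $\hat\theta_n^\mod - \theta^* = \hat\Sigma_X^{-1}(\hat C_{\lm} - \hat\Sigma_X\theta^*)$, the whole problem reduces to analyzing $\hat C_{\lm} - \hat\Sigma_X\theta^* = \frac1n\sum_i\big(X_i\hat\mu_y(Z_i) + \hat\mu_x(Z_i)Y_i - \hat\mu_x(Z_i)\hat\mu_y(Z_i) - X_iX_i^\top\theta^*\big)$, since $\hat\Sigma_X \overset{p}{\to} \EE[XX^\top] \succ 0$ by the law of large numbers and the finite-second-moment assumption, and Slutsky's theorem handles the prefactor.

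The core step is the standard cross-fitting / Neyman-orthogonality argument applied fold by fold. Fix a fold $k$; conditionally on the out-of-fold data $\cI\setminus\cI_k$, the estimators $\hat\mu_x^{(k)},\hat\mu_y^{(k)}$ are fixed functions, so I can decompose the fold-$k$ average of the integrand as (i) the oracle empirical average $\frac{1}{|\cI_k|}\sum_{i\in\cI_k}\psi(X_i,Y_i,Z_i)$ with $\psi := x\mu_y(z) + \mu_x(z)y - \mu_x(z)\mu_y(z) - xx^\top\theta^*$, plus (ii) an empirical-process ``drift'' term $\frac{1}{|\cI_k|}\sum_{i\in\cI_k}\big[(\hat\mu-\mu)\text{-difference} - \EE(\cdot)\big]$, plus (iii) a conditional-bias term equal to the conditional expectation of the difference. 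Term (ii) is $o_P(n^{-1/2})$ by the conditional Chebyshev/Markov bound together with consistency $\|\hat\mu^{(k)}_x-\mu_x\|_{L_2},\|\hat\mu^{(k)}_y-\mu_y\|_{L_2}=o_P(1)$ — this is exactly what cross-fitting buys us, avoiding Donsker conditions. Term (iii) is where the conditional independence Assumption~\ref{assump:ci} and the product structure of the augmentation enter: one computes $\EE\big[X\hat\mu_y(Z) + \hat\mu_x(Z)Y - \hat\mu_x(Z)\hat\mu_y(Z)\mid \hat\mu\big] - \EE[XY] = \EE\big[(\mu_x(Z)-\hat\mu_x(Z))(\mu_y(Z)-\hat\mu_y(Z))\big]$, where $X\indep Y\mid Z$ is used to replace $\EE[XY\mid Z]$ by $\mu_x(Z)\mu_y(Z)$ and to pull $\EE[X\mid Z],\EE[Y\mid Z]$ through the products. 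By Cauchy–Schwarz this is bounded by $\|\hat\mu_x^{(k)}-\mu_x\|_{L_2(\PP_Z)}\cdot\|\hat\mu_y^{(k)}-\mu_y\|_{L_2(\PP_Z)} = o_P(n^{-1/2})$, precisely the rate assumed. Combining both folds, $\hat C_{\lm} - \hat\Sigma_X\theta^* = \frac1n\sum_{i=1}^n\psi(X_i,Y_i,Z_i) + o_P(n^{-1/2})$, and since $\EE[\psi] = \EE[X\mu_y(Z)] + \EE[\mu_x(Z)Y] - \EE[\mu_x(Z)\mu_y(Z)] - \EE[XX^\top]\theta^* = \EE[XY] - \EE[XY] = 0$ (again using the conditional independence rewriting of $\EE[XY]$), the classical CLT applies to the i.i.d.\ mean-zero sum, giving $\sqrt n(\hat C_{\lm}-\hat\Sigma_X\theta^*)\overset{d}{\to}N(0,\Cov(\psi))$. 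Left-multiplying by $\hat\Sigma_X^{-1}\overset{p}{\to}\EE[XX^\top]^{-1}$ and noting $\phi^\mod = \EE[XX^\top]^{-1}\psi$ yields the stated limit with $\Sigma^\mod = \EE[XX^\top]^{-1}\Cov(\psi)\EE[XX^\top]^{-1} = \Cov(\phi^\mod)$.

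For the efficiency claim, the approach is to identify the model as a semiparametric model indexed by the nuisance $(\mu_x,\mu_y$, and the law of $Z$, and the conditional laws subject to $X\indep Y\mid Z)$ with $\theta^*$ the parameter of interest, and to verify that $\phi^\mod$ lies in the tangent space and satisfies the pathwise-differentiability (Neyman orthogonality) condition: the Gateaux derivative of the moment map $\EE[X\mu_y(Z)+\mu_x(Z)Y-\mu_x(Z)\mu_y(Z)-XX^\top\theta]$ with respect to each nuisance direction vanishes at the truth — which the bias calculation above already verifies, since the bias is second-order in the nuisance errors. One then checks that $\phi^\mod$ is the unique influence function of any regular asymptotically linear estimator, or equivalently that it equals its own projection onto the tangent space of the model; this is the content of the semiparametric efficiency bound and can be cited from \cite{bickel1993efficient,tsiatis2006semiparametric}. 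I expect the asymptotic-normality half to be routine given the cross-fitting machinery; the main obstacle — or at least the part requiring genuine care — is the efficiency statement, specifically writing down the tangent space of the model with the $X\indep Y\mid Z$ restriction correctly and confirming that no further orthogonalization of $\phi^\mod$ is possible, i.e.\ that the augmentation term $\mu_x(Z)\mu_y(Z)$ is exactly the right centering rather than merely a valid one.
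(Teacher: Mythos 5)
Your proposal matches the paper's proof in all essentials: the closed-form reduction $\hat\theta_n^\mod-\theta^*=\hat\Sigma_X^{-1}(\hat C_{\lm}-\hat\Sigma_X\theta^*)$, the fold-wise cross-fitting decomposition into mean-zero fluctuation terms (killed by conditional Chebyshev) plus a product-of-errors bias term (killed by Cauchy--Schwarz and the assumed $o_P(n^{-1/2})$ rate), then CLT and Slutsky; and for efficiency, the tangent space $\cT_1\oplus\cT_2\oplus\cT_3$ induced by the factorization $p(z)p(x\given z)p(y\given z)$. The only cosmetic difference is that the paper establishes efficiency by explicitly projecting $\phi^\ols$ onto $\cT$ and showing the projection equals $\phi^\mod$, whereas you propose the dual verification that $\phi^\mod$ is an influence function already lying in $\cT$; both are standard and correct.
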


 A short discussion of the implications of Theorem~\ref{thm:lowd} is in order.  
When $\hat\mu_x$ and $\hat\mu_y$ are estimated by nonparametric regression which typically comes with slow rates, Assumption~\ref{assump:ci} (conditional independence) is needed to ensure fast convergence of $\hat\theta_n^\mod$. 
However, when they are estimated by parametric regression (such as OLS)
and converge at $\sqrt{n}$-rate to any $\mu_x^*$ and $\mu_y^*$ (not necessarily equal $\mu_x$ or $\mu_y$), 
our estimator is consistent and asymptotically normal with $\sqrt{n}$-rate as long as $\EE [ (Y-\mu_y^*(Z)) (X-\mu_x^*(Z)) ] = 0$, i.e., the residuals of $Y$ and $X$ are uncorrelated.  This effect will also be visible in our simulations \rrevise{in Section~\ref{sec:simu}.}

Theorem~\ref{thm:lowd} shows that modular regression 
has the lowest asymptotic variance among all regular estimators~\citep{bickel1993efficient}. 
We now compare it with the OLS estimator  
$\hat\theta_n^\ols = \argmin_{\theta\in \RR^d} ~\sum_{i=1}^n ( Y_i - X_i^\top \theta )^2$, 
which only uses the information in $(X_i,Y_i)$ pairs.  
We know 
$
\hat\theta_n^\ols - \theta^*  
= \frac{1}{n}\sum_{i=1}^n \phi^\ols(X_i,Y_i) + o_P(1/\sqrt{n})  
$
with $\phi^\ols(x,y) = \EE[XX^\top]^{-1} (xy-xx^\top \theta^*)$
as $n\to\infty$. 
Under Assumption~\ref{assump:ci}, the 
asymptotic variance 
of $\phi^\mod(X_i,Y_i,Z_i)$ is dominated by that of $\phi^\ols(X_i,Y_i)$. 
Indeed, one can check that
\$
\Cov\big(\phi^\mod(X_i,Y_i,Z_i)\big) = \Cov\big(\phi^\ols(X_i,Y_i)\big) - 
\Sigma^{-1}\Cov\big( (Y-\mu_y(Z))(X-\mu_x(Z) \big)\Sigma^{-1},
\$
where $\Cov(A)$ denotes the covariance matrix 
for a random vector $A$, so that 
$\Cov\big( (Y-\mu_y(Z))(X-\mu_x(Z) \big)\succ 0$.
More efficient parameter estimation also translates to 
more accurate prediction: 
The prediction mean squared error (MSE) 
for an independent test sample $(X,Y)$ is  
$\EE[(Y-X^\top\hat\theta)^2\given \hat\theta]  
= \EE[ (Y-X^\top\theta^*)^2] + \|\hat\theta-\theta^*\|_{\Sigma}^2$, 
where the second term is smaller when $\hat\theta$ is 
the modular estimator instead of the OLS estimator.  
In our numerical experiments, 
we mostly focus on improving prediction in very noisy settings  with low sample size because 
the 
improvement in $\|\hat\theta-\theta^*\|_\Sigma^2$ 
is more pronounced in those cases.

\revise{
\begin{remark}[Alternative estimators]\label{rem:alternative}
    There are several natural alternatives to our modular estimator. The first is outcome regression, i.e., 
    running OLS of $\hat\mu_y(Z_i)$ over $X_i$, where
    $\hat\mu_y(\cdot)$ is an estimator for $\mu_y(\cdot)=\EE[Y\given Z=\cdot]$. 
    The second is orthogonal regression, i.e., 
    extracting the $X$-coefficients in OLS of $Y_i$ over $(X_i,Z_i^{\rm resid})$, where $Z^{\rm resid}:=Z_i-\mu_z(X_i)$, and 
    $\hat\mu_z(\cdot)$ is an estimator for $\mu_z(\cdot)=\EE[Z\given X=\cdot]$.
    However, both options  can be less efficient than our proposals. When $\hat\mu_y$ and $\hat\mu_z$ are estimated by flexible nonparametric  algorithms, both methods suffer from substantial bias that is comparable to or larger than the OLS variance (this is similar to the issue with our naive approach in Section 1); in contrast, our estimator achieves fast (product) convergence rate, and is the most efficient among all regular and asymptotically linear estimators. An extended discussion is in Appendix~\ref{app:discuss_alternative}. 
\end{remark}
}

Practitioners may be interested in quantifying the uncertainty of modular regression estimates. 
Wald-type confidence intervals can be derived 
using the asymptotic distribution in Theorem~\ref{thm:lowd}.  
\rrevise{The next corollary formalizes this result for the one-dimensional case; the multi-dimensional case follows similar ideas.}

\begin{corollary}[Confidence intervals]
    When $p_x=1$, set $\hat{\rm CI}:=\hat\theta_n^\mod \pm z_{1-\alpha/2} \hat\sigma_n^\mod /\sqrt{n}$, where 
$(\hat\sigma_n^\mod)^2 = \frac{1}{n}\sum_{i=1}^n (d_i-\bar{d} )^2$ for
$\bar{d} =\frac{1}{n}\sum_{i=1}^n  d_i $, and $d_i:= \hat\EE[XX^\top]^{-1} \big(X_i\hat\mu_y(Z_i) + \hat\mu_x(Z_i) Y_i - \hat\mu_x(Z_i)\hat\mu_y(Z_i) - X_iX_i^\top\hat\theta_n^\mod\big)$ from Algorithm~\ref{alg:lowd}. Here, $z_{1-\alpha/2}$ denotes the $1-\alpha/2$ quantile of a standard Gaussian random variable.
Then, under the conditions of Theorem~\ref{thm:lowd}, $(\hat\sigma_n^\mod)^2 \stackrel{P}{\to}  \Var(\phi^\mod(X,Y,Z))$, and thus $\PP(\theta^*\in \hat{\rm CI})\to 1-\alpha$ as $n\to \infty$.
\end{corollary}
 
One may also learn and adapt to the  
dependence structure from data  (see Section~\ref{sec:robustness} for a related discussion). In such cases, confidence intervals have to be adjusted to account for the variation induced by estimation of the structure. In practice, one can deal with this issue by data-splitting (i.e., use one fold of data for model selection and the second for estimation) and cross-fitting~\citep{chernozhukov2018double}.

\begin{remark}[Efficiency gain over OLS]\label{rem:linear}
To gain more intuition 
on the improvement in parameter estimation, 
we consider a one-dimensional special case 
where $Z=\alpha X+ \epsilon_z$,
$Y=\beta Z + \epsilon_y$  
for $\alpha,\beta\in \RR$. 
Here $X\sim N(0,\sigma_x^2)$, and $\epsilon_z\sim N(0,\sigma_z^2)$, 
$\epsilon_y\sim N(0,\sigma_y^2)$ 
are independent random noise. 
The true parameter for OLS regression is $\theta^*=\alpha \beta$. 
The OLS estimator $\hat\theta_n^\ols$ has asymptotic variance 
$
\sigma_{\ols}^2 
= \Var(X)^{-2} \Var(X\epsilon_y+X\beta\epsilon_z) 
= \Var(X)^{-1}(\sigma_y^2+\beta^2\sigma_z^2)
$, 
and our modular estimator $\hat\theta_n^\mod$ 
has asymptotic variance 
$
\sigma_\mod^2 
= \Var(X)^{-2}\Var(\mu_x(Z)\epsilon_y + X\beta\epsilon_z)
= \Var(X)^{-1}(\sigma_y^2 \frac{\Var(\mu_x(Z))}{\Var(X)}+\beta^2\sigma_z^2)$. 
That is, the absolute improvement in asymptotic variance is 
$\sigma_{\ols}^2-\sigma_{\mod}^2 = 
\sigma_y^2\frac{\Var(X-\mu_x(Z))}{\sigma_x^4}$; this quantity is large 
if $\sigma_y^2$ is large or $Z$ explains
a small proportion of the variance in $X$, i.e., $\Var(X-\mu_x(Z))$ is large compared to $\sigma_x^2$. The relative improvement in asymptotic variance is 
$1-\frac{\sigma_\mod^2}{\sigma_\ols^2} = \frac{1}{1+\alpha^2\sigma_x^2/\sigma_z^2}\frac{1}{1+\beta^2\sigma_z^2/\sigma_y^2}$; 
this quantity is large if $\alpha^2\frac{\sigma_x^2}{\sigma_z^2}$ 
and $\beta^2 \frac{\sigma_z^2}{\sigma_y^2}$ are small, which  
corresponds to (i) weak signal $\alpha,\beta$, or 
(ii) large noise $\sigma_z^2$ compared to $\sigma_x^2$, and 
$\sigma_y^2$ compared to $\sigma_z^2$. 
In the most extreme case where $\alpha=\beta=0$, 
the best prediction is $0$, and we achieve the asymptotic variance $\sigma_{\mod}^2=0$.  
\end{remark}

Our discussion so far is mainly asymptotic, 
so that the bias incurred by the estimation error of $\hat\mu_x$ and $\hat\mu_y$ is negligible compared to the $O(1/\sqrt{n})$ statistical error. 
Next, we use a simple example to provide 
insights on the finite sample behavior of 
$\hat{\theta}_n^{\mod}$ and $\hat{\theta}_n^{\ols}$. 

\begin{remark}[Finite-sample efficiency]
\label{rem:finite_sample}
We assume $(X,Y,Z)\in \RR^3$ are jointly Gaussian, and $\hat\mu_x$ and $\hat\mu_y$ 
are estimated by OLS.  Given OLS's $O_P(1/\sqrt{n})$ consistency, we forego cross-fitting to prevent cumbersome calculations.  As a result,  
$\hat\mu_x(z)=\hat{\EE}_n[Z^2]^{-1}\hat\EE_n[XZ]z$, 
and $\hat\mu_y(z)=\hat{\EE}_n[Z^2]^{-1}\hat\EE_n[YZ]z$, 
where $\hat\EE_n[\cdot]$ denotes empirical mean over all 
the data. One can then check that $\hat{\theta}_n^{\mod}  =\hat\EE_n[X^2]^{-1} \hat{\EE}_n[Z^2]^{-1}\hat\EE_n[YZ] \hat\EE_n[XZ]$ 
and $\hat{\theta}_n^{\ols}=\hat\EE_n[X^2]^{-1} \hat\EE_n[YX]$. 
By joint Gaussianity, there exists some $\alpha,\beta\in \RR$ such that $X=\alpha Z+\epsilon_x$ and 
$Y=\beta Z+ \epsilon_y$, where $\epsilon_x\sim N(0,\sigma_x^2)$ and $\epsilon_y\sim N(0,\sigma_y^2)$ conditional on $Z$. 
Both $\hat{\theta}_n^{\mod}$ and $\hat{\theta}_n^{\ols}$ 
are unbiased for $\theta^*$, while 
$\Var(\hat\theta_n^\ols)-\Var(\hat\theta_n^\mod) = \sigma_y^2 \EE\big[ \frac{  \hat{\EE}_n[Z^2]\cdot \hat{\EE}_n[\epsilon_x^2] - \hat{\EE}_n[Z\epsilon_x] ^2 }{n(\hat{\EE}_n[X^2])^2 \cdot \hat{\EE}_n[Z^2]}\big] \geq 0$. That is, in finite sample, $\hat{\theta}_n^{\mod}$ is always more 
efficient than $\hat{\theta}_n^{\ols}$. See detailed calculation and discussion in Appendix~\ref{app:detail_finite_sample}.
\end{remark} 

\subsection{Modular regression in generalized linear models}
\label{subsec:glm}

The ideas outlined in the previous section also apply to generalized linear models
(GLMs) as long as the estimation equation  has a modular structure. 

 Following the general setup in Section~\ref{sec:intro},  
we suppose 
the true parameter $\theta^*\in\Theta\subseteq\RR^{p_x}$ 
is the unique minimizer of 
$ 
\EE\big[ \ell(X_i,Y_i,\theta^*) \big]
$, where
$\ell(x,y,\theta)=YX^\top \theta  + h(x,\theta)  
$ 
for some function $h\colon \cX\times\Theta\to \RR^p$ 
that is convex in $\theta$. 
A default estimator $\hat\theta_n^{\ml}$ 
is the unique minimizer of the empirical loss~\eqref{eq:erm}. 
Maximum likelihood estimation for 
GLMs with $\log$ links 
satisfies this condition in general, such as  
logistic regression with
$\ell(x,y,\theta)=-x^\top \theta y + \log(1+\exp(x^\top \theta))$
for $y\in\{0,1\}$, and 
Poisson regression with
$\ell(x,y,\theta)=-x^\top \theta y + \exp(x^\top\theta) $
for $y\in \NN$. 
More generally, for  
exponential regression with 
$\ell(x,y,\theta)= 2\log y \cdot \theta^\top x - (\theta^\top x)^2$ 
for $y\in \RR^+$, one can use the transformed outcome  $\tilde{Y}:=2\log(Y)$  to make it a special case of equation~\eqref{eq:erm}.

To obtain a
more accurate estimator for $C=\EE[XY]$, 
the first two steps are exactly the same 
as in Section~\ref{subsec:modular_linear}.  
In the third step, 
we use  $\hat{C}_{\lm}$ 
defined in~\eqref{eq:C_default}, 
and compute 
\#\label{eq:def_mod_glm}
\hat\theta_n^\mod = \argmin_{\theta\in\Theta}~
\bigg\{\frac{1}{n}\sum_{i=1}^n h(X_i,\theta) + \hat{C}_{\lm}^\top \theta \bigg\}.
\# 
This  estimator 
is again doubly-robust to the error of $\hat\mu_x$ 
and $\hat\mu_y$, 
and has smaller asymptotic variance 
than $\hat\theta_n^\ml$ 
under mild conditions.
Its theoretical justification follows similar ideas 
as before with slightly more involved  technical conditions;
we defer all the results to Appendix~\ref{app:glm}.

\section{Modular linear regression in high dimensions}
\label{sec:highd}

Many prediction problems involve a huge number of predictive variables,
entering the high-dimensional regime where 
$p_x$ grows with, and can even be large than, the sample size $n$. 
A popular approach to 
dealing with high-dimensional data is penalized regression 
such as the Lasso~\citep{tibshirani1996regression}. 
By assuming a sparse linear regression function, 
high-dimensional regression has been shown 
to be consistent 
under various well-conditioning assumptions~\citep{candes2005decoding,candes2007dantzig,van2007deterministic,zhang2008sparsity}.  

Our modular regression idea extends naturally to the high-dimensional setting. 
In this section, we show that by including an $\ell_1$ penalty, modular regression 
improves upon the Lasso  
by seeking for a more efficient estimation equation.
Throughout this section,  
we assume 
a sparse linear model 
to ensure the task is tractable. 

\begin{assumption}
There exists some $\theta^*\in \RR^{p_x}$ 
with $\|\theta^*\|_0 = s$, such that 
$\EE[Y\given X]=X^\top \theta^*$. 
\end{assumption}

\subsection{Regularized modular regression}
\label{subsec:highd_method}

We start with cross-fitting, 
and let $\hat\mu_x^{(k)}$, $\hat\mu_y^{(k)}$ 
be estimators for $\mu_x(\cdot)=\EE[X\given Z=\cdot]$ 
and $\mu_y(\cdot)=\EE[Y\given Z=\cdot]$ 
obtained from $\cI\backslash \cI_k$, and define 
the cross-terms as in~\eqref{eq:C_default}. 
The only distinction from Section~\ref{subsec:modular_linear} is that we  encourage sparsity by $\ell_1$-regularization. 
We left
\#\label{eq:def_modular_highd}
\hat\theta_n^\mod = \argmin_{\theta\in \RR^{p_x}}~ \bigg\{ \frac{1}{n}\sum_{i=1}^n 
 \theta^\top X_iX_i^\top \theta/2  - \hat{C}_\lm^\top \theta  + \lambda \|\theta\|_1 \bigg\}
\#
for some regularization parameter $\lambda>0$, 
where $\|\theta\|_1=\sum_{j=1}^{p_x}|\theta_j|$. 
As~\eqref{eq:def_modular_highd} is convex in $\theta$, the optimization  can 
be efficiently solved similarly to  
the Lasso (e.g.~coordinate  descent). 
In practice, $\lambda$ can 
be determined by cross validation.

\subsection{Theoretical guarantee}

With proper choice of $\lambda$, 
modular regression
leads to a sharper upper bound 
on the estimation error. 
We assume a $\mu$-RSC property for the 
design matrix $X\in \RR^{n\times p_x}$ which is standard in the literature for the consistency of 
Lasso-type methods~\citep{bickel2009simultaneous,negahban2012unified}. 
Our theoretical analysis may be generalized 
to other conditions, which is 
beyond the scope of this work.

\begin{assumption}\label{assump:RSC}
$X\in \RR^{n\times p_x}$ obeys $\zeta$-Restricted Strong Convexity ($\zeta$-RSC), i.e., 
$\|X\Delta\|_2^2/n \geq \zeta\|\Delta\|_2^2$ 
for any $\Delta\in \CC_3$, where 
$\CC_3 := \big\{x\in \RR^p\colon \|x_{S^c}\|_1 \leq 3\|x_{S}\|_1\big\}$, and $S=\{j\colon \theta_j^*\neq 0\}$.
\end{assumption}

We assume entries of $X$ and the response $Y$ 
are bounded by constants. 
This condition is mild since 
the Lasso is often implemented after normalization. 

\begin{assumption}\label{assump:bounded}
$|X_j|\leq 1$ and 
$\|\hat\mu_{x,j}^{(k)}(\cdot)\|_{\infty}\leq 1$  almost surely for $k=1,2$ and all $1\leq j\leq p$. 
Also, $|Y_i|\leq c_0$ and 
$\|\hat\mu_y^{(k)}(\cdot)\|_\infty \leq c_0$ 
almost surely for $k=1,2$ for some constant $c_0>0$. 
\end{assumption}

Finally, we assume 
$\hat\mu_y^{(k)}$ and $\hat\mu_{x}^{(k)}$ are 
consistent with $o(n^{-1/4})$ convergence rates.

\begin{assumption}\label{assump:consistency}
Suppose $\|\hat\mu_y^{(k)}(\cdot)-\mu_y(\cdot)\|_{L_2(\PP_Z)} = o_P(n^{-1/4})$ for $k=1,2$. 
Also, there exists a sequence of constants $c_n \to 0$ such that 
for sufficiently large $n$,  
\#\label{eq:assump_consist}
\max_{1\leq j\leq p_x,k=1,2}\PP\bigg( \big\|\hat\mu_{x,j}^{(k)}(\cdot)-\mu_{x,j}(\cdot) \big\|_{L_2(\PP_Z)} \geq \frac{c_n\log(1/\delta)}{n^{1/4}}\bigg)  \leq \delta.
\#
Here $\hat{\mu}_{x,j}^{(k)}(\cdot)$ is the $j$-th entry 
of $\hat\mu_x^{(k)}$, an 
estimator for $\mu_{x,j}(\cdot):=\EE[X_j\given Z=\cdot]$. 
\end{assumption}

Assumption~\ref{assump:consistency} slightly 
differs from the commonly adopted consistency
conditions,
which is often of the form $\|\hat\mu_{x,j}^{(k)}-\mu_{x,j}\|_{L_2(\PP_Z)}=o_P(n^{-1/4})$, 
on the estimation of nuisance components. 
This is because 
we need an exponential tail bound to control 
all $2\times p_x$ estimated functions $\{\hat\mu_{x,j}^{(k)}\}_{j=1,k=1}^{p_x,2}$ simultaneously. 
Running  many regressions may 
incur considerable computational cost 
in high dimensions; we  provide a 
computational shortcut 
in Section~\ref{subsec:proj}.

Many estimators in the literature obey 
Assumption~\ref{assump:consistency}. 
When $Z$ is high-dimensional (i.e., $p_z$ 
may grow with $n$ or be larger than $n$), 
if $\EE[X_j\given Z]$ is an $s'$-sparse linear function of $Z$, 
then~\eqref{eq:assump_consist} 
holds for the Lasso estimator with 
$c_n \asymp \sqrt{s'\log(p_z)}/n^{1/4}$ 
when the regularization parameter is properly chosen.  
When $Z$ is low-dimensional (i.e., $p_z$ 
does not grow with $n$), 
under the standard assumption 
that $\EE[X_j\given Z]$ is sufficiently smooth, 
the well-established convergence results of 
sieve estimation can be turned into 
such bounds 
by exponential tail bounds 
for the concentration of empirical loss functions~\citep{chen1998sieve,chen2007large}.

We show an improved bound for the estimation error of $\hat\theta_n^\mod$ compared to the Lasso estimator using $(X_i,Y_i)$. 
The proof of Theorem~\ref{thm:highd} 
is in Appendix~\ref{app:subsec_thm_highd}. 

\begin{theorem}[Finite-sample bound]\label{thm:highd}
Suppose Assumptions~\ref{assump:ci},~\ref{assump:RSC},~\ref{assump:bounded}
and~\ref{assump:consistency} hold. 
Then there exists a sequence of constants $\{\bar c_n\}$ with $\bar c_n\to 0$ as $n\to \infty$, such that 
for any fixed $\delta\in(0,1)$ 
and 
any regularization parameter $\lambda$ obeying
\#\label{eq:thm_highd}
\lambda \geq \frac{2}{n} \big\| 
\mu_y^\top X +  Y^\top \mu_x -  \mu_y^\top \mu_x - (X\theta^*)^\top X
\big\|_\infty +  \frac{\bar{c}_n (\log(3p_x/\delta))^2}{\sqrt{n}},
\#
it holds with probability at least $1-\delta$ that 
\$
\|\hat\theta_n^{\textnormal \mod} - \theta^*\|_2 \leq \frac{3\lambda \sqrt{s}}{2\zeta}.
\$  
Here we denote $\|z\|_\infty=\sup_j|z_j|$, $Y,\mu_y \in \RR^n$ are vectors 
whose $i$-th entries are $Y_i$ and $\mu_y(Z_i)$, 
and $X,\mu_x\in \RR^{n\times p_x}$ are matrices 
whose $(i,j)$-th entries are $X_{ij}$ and $\mu_{x,j}(Z_i)$. 
\end{theorem}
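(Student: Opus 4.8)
The plan is to adapt the standard Lasso oracle‑inequality argument to the modular objective, with essentially all of the real work going into a uniform bound on the empirical gradient of the modular loss at $\theta^*$. Write the objective in \eqref{eq:def_modular_highd} as $L_n(\theta)+\lambda\|\theta\|_1$ with $L_n(\theta)=\tfrac12\theta^\top\hat\Sigma\theta-\hat C_\lm^\top\theta$ and $\hat\Sigma=\tfrac1n\sum_{i=1}^n X_iX_i^\top$. Since $L_n$ is quadratic, optimality of $\hat\theta_n^\mod$ gives the exact identity $L_n(\hat\theta_n^\mod)-L_n(\theta^*)=\langle\nabla L_n(\theta^*),\Delta\rangle+\tfrac12\Delta^\top\hat\Sigma\Delta$ with $\Delta:=\hat\theta_n^\mod-\theta^*$, so the basic inequality $L_n(\hat\theta_n^\mod)+\lambda\|\hat\theta_n^\mod\|_1\le L_n(\theta^*)+\lambda\|\theta^*\|_1$ rearranges into
\[
\tfrac12\Delta^\top\hat\Sigma\Delta \;\le\; \lambda\big(\|\theta^*\|_1-\|\hat\theta_n^\mod\|_1\big)+\|\nabla L_n(\theta^*)\|_\infty\,\|\Delta\|_1 .
\]
I would first record the purely deterministic consequence: on the event $\mathcal E:=\{2\|\nabla L_n(\theta^*)\|_\infty\le\lambda\}$, the support decomposition $\|\theta^*\|_1-\|\hat\theta_n^\mod\|_1\le\|\Delta_S\|_1-\|\Delta_{S^c}\|_1$ (with $S=\{j:\theta_j^*\ne0\}$ as in Assumption~\ref{assump:RSC}, so $|S|=s$) gives $\tfrac12\Delta^\top\hat\Sigma\Delta\le\tfrac{3\lambda}{2}\|\Delta_S\|_1-\tfrac\lambda2\|\Delta_{S^c}\|_1$; nonnegativity of the left‑hand side forces $\Delta\in\CC_3$, so Assumption~\ref{assump:RSC} applies, and combining $\zeta\|\Delta\|_2^2\le\Delta^\top\hat\Sigma\Delta$ with $\|\Delta_S\|_1\le\sqrt s\|\Delta\|_2$ yields the stated $\ell_2$ bound. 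Everything therefore reduces to showing $\PP(\mathcal E)\ge1-\delta$ under the stated $\lambda$.

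To that end I would split $\nabla L_n(\theta^*)=\hat\Sigma\theta^*-\hat C_\lm$ into an oracle part and a nuisance‑error part, $\nabla L_n(\theta^*)=(\hat\Sigma\theta^*-\tilde C)+(\tilde C-\hat C_\lm)$, where $\tilde C$ is obtained from \eqref{eq:C_default} by replacing $\hat\mu_x,\hat\mu_y$ with the true $\mu_x,\mu_y$. A direct computation identifies $\hat\Sigma\theta^*-\tilde C$ with (the transpose of) $\tfrac1n\big((X\theta^*)^\top X-\mu_y^\top X-Y^\top\mu_x+\mu_y^\top\mu_x\big)$, so its $\ell_\infty$‑norm is exactly the first term on the right of \eqref{eq:thm_highd}; under Assumption~\ref{assump:ci} together with $\EE[Y\given X]=X^\top\theta^*$ one checks each coordinate of this vector is mean zero — this is what makes the $\lambda$ of \eqref{eq:thm_highd} choosable of the usual order $\sqrt{\log p_x/n}$ and hence the bound non‑vacuous — but for the stated conditional inequality I simply retain it as is. It then remains to bound $\|\tilde C-\hat C_\lm\|_\infty$ by $\bar c_n(\log(3p_x/\delta))^2/\sqrt n$.

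Setting $\delta_y=\hat\mu_y-\mu_y$, $\delta_{x,j}=\hat\mu_{x,j}-\mu_{x,j}$ and expanding, the $j$‑th coordinate of $\hat C_\lm-\tilde C$ equals
\[
\frac1n\sum_{i=1}^n\Big[\delta_y(Z_i)\big(X_{ij}-\mu_{x,j}(Z_i)\big)+\delta_{x,j}(Z_i)\big(Y_i-\mu_y(Z_i)\big)-\delta_y(Z_i)\delta_{x,j}(Z_i)\Big].
\]
Conditioning on the fold used to fit the nuisance functions, each $\delta_y,\delta_{x,j}$ is a fixed function, the first two summands are averages of bounded independent terms (Assumption~\ref{assump:bounded}) that are conditionally mean zero because $\EE[X_{ij}\given Z_i]=\mu_{x,j}(Z_i)$ and $\EE[Y_i\given Z_i]=\mu_y(Z_i)$, with conditional variances controlled by $\|\delta_y\|_{L_2(\PP_Z)}^2$, resp.\ $\|\delta_{x,j}\|_{L_2(\PP_Z)}^2$; Bernstein's inequality then bounds each term by $O\big(\|\delta_y\|_{L_2}\sqrt{\log(1/\delta')/n}+\log(1/\delta')/n\big)$, resp.\ $O\big(\|\delta_{x,j}\|_{L_2}\sqrt{\log(1/\delta')/n}+\log(1/\delta')/n\big)$, with conditional probability $\ge1-\delta'$, while the product term is handled by $\EE[|\delta_y\delta_{x,j}|\mid\text{fold}]\le\|\delta_y\|_{L_2}\|\delta_{x,j}\|_{L_2}$ plus a Bernstein bound on its fluctuation. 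Taking $\delta'\asymp\delta/p_x$, union‑bounding over the $p_x$ coordinates, the constantly many pieces, and the two folds, and then plugging in the consistency rates — $\|\delta_y\|_{L_2}=o_P(n^{-1/4})$ and, crucially, the exponential‑tail condition \eqref{eq:assump_consist}, which gives $\max_j\|\delta_{x,j}\|_{L_2}\lesssim c_n\log(p_x/\delta)\,n^{-1/4}$ with probability $\ge1-\delta/3$ — bounds $\|\tilde C-\hat C_\lm\|_\infty$ by a sum of terms each of order $o\big((\log(3p_x/\delta))^2 n^{-1/2}\big)$ uniformly, hence by $\tfrac12\bar c_n(\log(3p_x/\delta))^2/\sqrt n$ for a suitable deterministic $\bar c_n\to0$. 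Combined with the lower bound on $\lambda$ this yields $2\|\nabla L_n(\theta^*)\|_\infty\le\lambda$ on an event of probability $\ge1-\delta$, and the first paragraph closes the proof.

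The step I expect to be the main obstacle is exactly this uniform control of $\|\tilde C-\hat C_\lm\|_\infty$: one must feed the $\delta$‑dependent nuisance rates into the conditional Bernstein bounds so that the output fits precisely the $(\log(3p_x/\delta))^2/\sqrt n$ envelope, and in particular one must upgrade the plain $o_P(n^{-1/4})$ rate assumed for $\hat\mu_y$ (which carries no explicit confidence‑level dependence) into a high‑probability statement compatible with that envelope — e.g.\ by extracting a deterministic sequence $a_n\to0$ with $\PP\big(n^{1/4}\|\hat\mu_y^{(k)}-\mu_y\|_{L_2(\PP_Z)}>a_n\big)\le\delta$ for large $n$. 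Once this empirical‑process bound and the accompanying bookkeeping are in place, the remaining ingredients — the basic inequality, the cone inclusion $\Delta\in\CC_3$, and the invocation of $\zeta$‑RSC — are routine.
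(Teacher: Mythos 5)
Your proposal is correct and follows essentially the same route as the paper's proof: the basic inequality plus the cone condition and $\zeta$-RSC give the deterministic bound, and the gradient at $\theta^*$ is split into the oracle term $\tfrac1n\big((X\theta^*)^\top X-\mu_y^\top X-Y^\top\mu_x+\mu_y^\top\mu_x\big)$ (retained in the $\lambda$ condition) plus the three cross-fitted nuisance terms, each controlled by a conditional Bernstein inequality and a union bound over the $O(p_x)$ fitted functions via the exponential tail in Assumption~\ref{assump:consistency}. Your explicit remark that the plain $o_P(n^{-1/4})$ rate for $\hat\mu_y^{(k)}$ must be upgraded to a high-probability statement with a deterministic sequence $a_n\to0$ is a point the paper handles only implicitly when it folds $\hat\mu_y$ into the event $\cE_{\textrm{est}}^{n,\delta}$, so your treatment is, if anything, slightly more careful there.
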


The second term in~\eqref{eq:thm_highd} arises from 
the estimation error in $\hat\mu_y$ and $\hat\mu_x$. 
It enjoys a double robustness property that is 
similar to the low-dimensional case: 
under the slow convergence rate in Assumption~\ref{assump:consistency},
this error is negligible compared to the first term in~\eqref{eq:thm_highd} that is typically 
$\Omega(\sqrt{(\log p_x)/n})$~\citep{vershynin2018high}.  
The estimation error of $\hat\theta_n^\mod$ is then 
characterized by the first term  in~\eqref{eq:thm_highd}.

When $n$ is sufficiently large such that the 
second term in~\eqref{eq:thm_highd} is negligible, 
the deviation of the first term can be  
decided by the variance of each entry. 
We define the random vector 
$\epsilon^{\mod}:=X^\top\theta^* X-\mu_y(Z)X-Y\mu_x(Z)+\mu_y(Z)\mu_x(Z)\in\RR^{p_x}$. Then, in Theorem~\ref{thm:highd}, 
choosing 
$\lambda \asymp  L\sqrt{\max_j\Var(\epsilon_j^\mod)\cdot \log(p_x) /n}$ yields the estimation error 
\#\label{eq:bd_mod_simple}
\|\hat\theta_n^{\textnormal \mod} - \theta^*\|_2 \leq \frac{L\cdot \sqrt{s \log(p_x)}}{\sqrt{n}\mu} \sqrt{\max_j\Var(\epsilon_j^\mod)} ,
\#
where $L>0$ is a universal constant.  
On the other hand, 
we let $\hat\theta_n^\Lasso$ be the 
Lasso estimator~\eqref{eq:lasso}.   
Under similar conditions like Assumption~\ref{assump:RSC},
existing results in the literature~\citep{negahban2012unified} 
show that 
$\|\hat\theta_n^\Lasso - \theta^*\|_2 
\leq \frac{3\lambda \sqrt{s}}{2\mu}$ for any 
regularization parameter obeying
$ 
\lambda \geq  \big\| \frac{2}{n}\sum_{i=1}^n (Y_i-X_i^\top\theta^*)X_i \big\|_\infty.
$
We define $\epsilon^\Lasso = X(Y-X^\top\theta^*)$. 
Similarly, choosing 
$\lambda\asymp L\sqrt{\max_j\Var(\epsilon_j^\Lasso)\cdot \log(p_x) /n}$ yields 
\#\label{eq:bd_Lasso_simple}
\|\hat\theta_n^{\textnormal \Lasso} - \theta^*\|_2 \leq \frac{L\cdot \sqrt{s \log(p_x)}}{\sqrt{n}\mu} \sqrt{\max_j\Var(\epsilon_j^\Lasso)} ,
\#
where $L>0$ is the same universal scaling   as above. 
The bounds in~\eqref{eq:bd_mod_simple} 
and~\eqref{eq:bd_Lasso_simple} distinguish
our modular estimator 
from the Lasso, 
as one can check that $\Var(\epsilon_j^\mod) \leq \Var(\epsilon_j^\Lasso)$ for all $j$. 
That is, modular regression 
reduces the uncertainty  
by a constant order.

Though this is only an upper bound analysis, 
our numerical experiments later on confirm the 
improvement in estimation accuracy. In particular, 
we will see that 
the regularization parameter $\lambda$ 
chosen by cross-validation is smaller in 
modular regression. 
Intuitively, the reduction in $\text{Var}(\epsilon_j^\text{mod})$  allows cross-validation to choose a smaller $\lambda$ than for the Lasso. 

\revise{Interested readers may wonder whether 
the modular idea can be leveraged 
to construct more efficient confidence intervals 
following the  `debiasing' Lasso ideas~\citep{zhang2014confidence,javanmard2014confidence,van2014asymptotically}. 
While this may be feasible\footnote{Note the `modular' structure 
in the debiased estimator $\hat\beta^{\rm debias} = \hat\beta +\hat\Theta (X^\top Y-X^\top X\hat\beta)/n$, where $\hat\beta$ is the Lasso estimator, and $\hat\Theta$ is the estimated precision matrix. A natural and heuristic idea is to apply modular regression to improve the estimation of $X^\top Y$ and $\hat\beta$.}, this problem warrants a careful and rigorous investigation that is beyond the scope of this work. In addition, structure learning (which we introduce in the next part) may lead to irregular estimators that deserve extra care. In the current paper, we mainly focus on the 
estimation and prediction aspects, and leave this point for future work.}

\subsection{Robustness to the conditional independence condition}\label{sec:robustness}

In practice, Assumption~\ref{assump:ci} 
may be violated,  and  
the true dependence structure among the variables 
may be completely unknown.  
In this part, we generalize 
our modular regression framework  
to estimate and adapt to the 
conditional independence structure.  

To be precise, Assumption~\ref{assump:ci} 
posits that  
$X_j\indep Y\given Z$ for every $j$. 
That is, $Z$ captures all the predictive power 
of every $X_j$ for $Y$. 
This condition can be violated in many ways. 
For instance, 
$Z$ may capture all the information 
for a subset of variables in $X$, while 
others in $X$ still 
have direct effects for $Y$.
In this subsection, we 
assume that 
$X \indep Y \given (Z,X_{\cJ })$ for some subset $\cJ \subseteq \{1,\dots,p_x\}$,
and  
$X_{\cJ}$ 
is the vector containing $X_j$ for all $j\in \cJ $. 
The choice of $\cJ $ can be from 
prior knowledge, or   
by estimating (consistently) 
the dependence structure 
among all the variables when joint 
observations of $(X,Z,Y)$ are available.  
Learning the conditional independence structure is 
beyond our focus;  
popular methods in the literature include constraint-based~\citep{spirtes2000causation,margaritis1999bayesian,tsamardinos2006max}, 
score-based~\citep{lam1994learning,jordan1999learning,friedman2013learning}, 
and regression-based~\citep{lee2006efficient,meinshausen2006high,roth2004generalized,banerjee2006convex} 
ones assuming a high-dimensional graphical model, to name a few.

\begin{remark}[Recommendation in practice]\label{remark:learn-from-data}
For high-dimensional linear regression, 
a heuristic idea is to use the Lasso for 
structure learning. For example, 
one can regress $Y$ on $(X,Z)$ via the Lasso, 
and use all features in $X$ with 
nonzero estimated coefficient as $X_{\cJ }$. 
In our numerical experiments in Section~\ref{sec:simu},
We find that this heuristic approach works 
well in improving estimation accuracy.
\end{remark}

In the following, we outline how to use the output from  structure learning  such as in Remark~\ref{remark:learn-from-data} in conjunction with modular regression. 
The condition $X \indep Y \given (Z,X_{\cJ })$ can be seen as a special case of Assumption~\ref{assump:ci} 
for a different choice of 
the conditioning set ``$Z$'':  
\#\label{eq:general_ci}
X \indep Y \given Z^{\full},\quad \textrm{where} \quad Z^{\full} = (Z,X_{\cJ}).
\#
This again allows us to 
break the problem into sub-tasks. 
In learning   $\mu_y$, 
after data splitting, for each $k=1,2$, 
we aim $\hat\mu_y^{(k)}$ for $\EE[Y\given Z^\full]$ 
using data in $\cI\backslash \cI_k$,
instead of $\EE[Y\given Z]$. 
We only learn  $(\mu_x)_j$ for $j\notin \cJ $:  
we let 
$\hat\mu_{x,j}^{(k)}(x)$ be an estimate for $\EE[X_j\given Z^\full]$ 
using the data in $\cI\backslash \cI_k$ for $k=1,2$. 
Then for each $i \in \cI_k$, we compute 
the cross-term $C_i$ via
\#\label{eq:C_struc}
(C_i)_j = 
\begin{cases}
     (X_i)_j \hat\mu_y^{(k)}(Z_i^\full) + \hat\mu_{x,j}^{(k)}(Z_i^\full)Y_i - 
    \hat\mu_{x,j}^{(k)}(Z_i^\full)\hat\mu_y^{(k)}(Z_i^\full), \quad \text{if }j\notin \cJ , \\[1.5ex]
    (X_i)_j  Y_i , \quad \text{if }j\in \cJ  .
\end{cases}
\#
Finally, we solve the same modular least squares~\eqref{eq:modular_default} or penalized least squares~\eqref{eq:def_modular_highd} with 
$
\hat{C}_{\lm} := \frac{1}{n}\sum_{i=1}^n C_i
$
for the above $C_i$. 
When Assumption~\ref{assump:ci} is violated, 
the original $\hat{C}_{\lm}$ defined in~\eqref{eq:C_default}
might be biased even if the estimators for conditional 
expectations are correct. 
In contrast, the new cross-terms we derive in~\eqref{eq:C_struc} 
are unbiased for $(X_{i})_jY_i$ 
with potentially smaller variance 
under the generalized condition~\eqref{eq:general_ci}. 

Let $\hat{J}$ be the output of the structure learning step, and set $Z^{\hat{\full}}=(X_{\hat{J}},Z)$. We note that $Y\indep X\given Z^{\hat{\full}}$ holds with probability tending to 1 if $\hat{J}$ converges 
in probability to some superset $\tilde{J}\supseteq\cJ$, our asymptotic expansion in Theorems~\ref{thm:lowd} 
and~\ref{thm:highd}  holds  on an event with probability tending to 1, and theoretical guarantees for this approach  
can be directly implied in view of~\eqref{eq:general_ci} 
and the fact that the definition of $C_i$ in~\eqref{eq:C_struc}
is equivalent to taking $\hat\mu_{x,j}^{(k)}(Z_{i}^\full):=(X_i)_j$, which 
we omit here. 
\rrevise{The convergence of $\hat{J}$ can be obtained under, e.g.,~irrepresentability-type conditions~\citep{zhao2006model}, and is typically compatible with Assumption~\ref{assump:consistency}; see Appendix~\ref{app:subsec_conditions} for a more detailed discussion. 
}

This variant can be viewed as 
a data-driven interpolation between the 
fully modular regression we introduce 
in the preceding part
and the standard OLS or Lasso. 
If the conditional independence condition holds for $X_j$, 
then utilizing the information in $Z^\full$ 
reduces the estimation error in $[\hat\theta_n^\mod]_j$; 
otherwise, it reduces to 
$[\hat\theta_n^\mod]_j = [\hat\theta_n^\ols]_j$ in 
the low-dimensional case, 
and yields a similar bound as $\hat\theta_n^\Lasso$ for the 
high-dimensional regression.

\rrevise{In general, there is a trade-off between the robustness  (related to  accuracy of $\hat{J}$) and efficiency gain of $\hat\theta_n^\mod$. When $\hat{J}=J$, $\hat\theta_n^\mod$ achieves the most efficiency gain. When $\hat{J}=\{1,\dots,p\}$, then $\hat\theta_n^{\mod}$ reduces to $\hat\theta_n^{\mod}$ without any improvement.}

Our simulation studies in Section~\ref{sec:simu} 
show that this approach robustly improves the  
estimation and prediction accuracy  in cases where 
Assumption~\ref{assump:ci} is violated 
but the conditional independence structure~\eqref{eq:general_ci} 
may be learned from the data.



\subsection{Practical implementation via linear transformation}
\label{subsec:proj}

We now discuss 
a computational shortcut 
for high-dimensional modular regression. 
By using 
linear transformations for 
estimating the conditional mean functions, 
it reduces the computational costs 
and is readily compatible with standard implementation 
of the Lasso (e.g., \texttt{glmnet} R-package~\citep{glmnet}).  

First, let us discuss why the estimator as defined in equation~\eqref{eq:def_modular_highd} is computationally demanding. 
Note that the
modular estimator~\eqref{eq:def_modular_highd} minimizes
\#\label{eq:general_highd}
\frac{1}{n}\big\{ \theta^\top X^\top X\theta/2 - \hat\mu_y^\top X\theta - Y^\top \hat\mu_x\theta + \hat\mu_y^\top \hat\mu_x \theta \big\} + \lambda \|\theta\|_1
\#
where 
$\hat\mu_x \in \RR^{n\times p_x}$ is a matrix 
whose $(i,j)$-th entry stores 
an estimator for  $\mu_{x,j}(Z_i)$, 
and $\hat\mu_y \in \RR^n$ is a vector 
whose $i$-th entry is an 
estimator for $\mu_{y}(Z_i)$. 
In the cross-fitting approach we outline 
in Section~\ref{subsec:highd_method}, 
the estimators are specified as $[\hat\mu_x]_{i,j}=\hat\mu_{x,j}^{(k)}(Z_i)$ and 
$[\hat\mu_y]_i = \hat\mu_y^{(k)}(Z_i)$ for $i\in \cI_k$. 
That is, we need to run $\Omega(p_x)$ times of 
regression to obtain $\hat\mu_x^{(k)}$ and $\hat\mu_y^{(k)}$. 

Here, 
we take a different approach to estimating $\mu_x$ and $\mu_y$: 
\$
\hat\mu_x = \bPi_x X,\quad \hat\mu_y = \bPi_y Y,
\$
where $\bPi_x, \bPi_y\in \RR^{n\times n}$ 
are symmetric matrices. 
Examples include 
OLS regression for 
$\Pi_x=\Pi_y = Z(Z^\top Z)^{-1}Z^\top$ and 
ridge regression (with $\ell_2$-penalty parameter $\eta$) for
$\Pi_x= Z(Z^\top Z+\eta \mathbf{I})^{-1}Z^\top$, 
where $Z\in \RR^{n\times p_z}$ is the data matrix, 
$\mathbf{I}\in \RR^{n\times n}$ is the identity matrix, 
{and we call $\eta$ the ridge penalty for clarity.} 
We then compute the modular estimator by minimizing~\eqref{eq:general_highd}, or equivalently,
\#\label{eq:proj_highd}
\frac{1}{n}\big\{ \theta^\top X^\top X\theta/2 - Y^\top (\bPi_y + \bPi_x - \bPi_y\bPi_x) X\theta \big\} + \lambda \|\theta\|_1.
\# 
The objective~\eqref{eq:proj_highd} is equivalent to
the Lasso estimator~\eqref{eq:lasso}
applied to the design matrix $X$ and 
the response vector $(\bPi_y + \bPi_x - \bPi_x\bPi_y)Y$. 
Our modular estimator could then be computed 
with standard libraries or packages for the 
Lasso~\citep{glmnet}. 
The parameters in $\Pi_x,\Pi_y$ (such as 
the ridge penalty) 
can also be chosen with cross-validation. 
In our real data experiments, 
this computation shortcut 
using ridge regression and cross-validated 
ridge penalty $\eta$ achieves a
prediction accuracy that is comparable
to the fully modular approach (with entry-wise regression) and better than the Lasso. 

As this shortcut combines ridge regression 
and the Lasso,  
it is related to the LAVA estimator~\citep{chernozhukov2017lava} 
that is designed for recovering sums of dense 
and sparse signals. 
We develop a different estimator than theirs, 
which also serves a distinct goal of  
improving efficiency by 
exploiting the conditional independence 
structure.

\section{Extension to missing data}
\label{sec:partial}


Modular regression allows for 
flexible combination of data sets 
in various missing data settings. 
In this part, we discuss a general scenario 
where we may have access to a collection of 
pairwise observations
$(X_i,Z_i)$ or $(Z_i,Y_i)$ and/or  some 
tuples $(X_i,Z_i,Y_i)$.

\revise{
While our work finds deep connections to the missing data literature (see Section~\ref{subsec:rel_work} for more discussion), 
a major distinction is that 
we mainly focus on linear models and their extensions. 
We leverage 
specific algorithmic structure (the product structure of expectation) and general independence patterns among variables, 
which differs from the missing data literature that usually
draws upon independence between variables and the missing indicator. Furthermore, the modular algorithmic structure allows us to deal with various scenarios of data availability with a unified approach.}  






\revise{Throughout, we  assume a  `completely-missing-at-random' (MCAR) 
mechanism, such that the probability of missing any variable is independent of $(X,Y,Z)$. 
Formally, suppose we have access to 
data sets $\cI^{xyz} =\{(Y_i,X_i,Z_i\}_{i=1}^{n_{xyz}}$, 
$\cI^{xz} = \{( X_i,Z_i)\}_{i=1}^{n_{xz}}$, 
$\cI^{yz}=\{(Y_i,Z_i)\}_{i=1}^{n_{yz}}$. For our exposition, it suffices to impose the following as a consequence of MCAR.}

\begin{assumption}\label{assump:MAR}
There exists a super-population $\PP_{X,Y,Z}$, such that 
the joint observations obey $\{(X_i,Y_i,Z_i\}_{i=1}^{n_{xyz}}\iid \PP_{X,Y,Z}$, 
and the pairwise observations obey 
$\{( X_i,Z_i)\}_{i=1}^{n_{xz}}\iid \PP_{XZ}$ and $\{(Y_i,Z_i)\}_{i=1}^{n_{yz}} \iid \PP_{YZ}$, i.e., 
the marginal  distributions are consistent with $\PP_{X,Y,Z}$. 
\end{assumption}
 
We will see that the decomposition 
of the regression task
allows us to flexibly modify 
our methods in Sections~\ref{sec:lowd}  
and~\ref{sec:highd} 
according 
to the availability of data. 
We provide a general procedure in Algorithm~\ref{alg:missing} 
for low-dimensional linear regression. Extension to GLMs and high-dimensional setting can be similarly obtained by replacing $\theta^\top X_iX_i^\top\theta$ by $h(X_i,\theta)$ as in Section~\ref{subsec:glm} or adding an $\ell_1$-regularizer to the estimation equation as in Section~\ref{sec:highd}.

\begin{algorithm} 
\caption{Modular linear regression with missing data}\label{alg:missing}
\begin{algorithmic}[1]
\REQUIRE Datasets $\cI^{xyz} =\{(Y_i,X_i,Z_i\}_{i=1}^{n_{xyz}}$, 
$\cI^{xz} = \{( X_i,Z_i)\}_{i=1}^{n_{xz}}$, 
$\cI^{yz}=\{(Y_i,Z_i)\}_{i=1}^{n_{yz}}$. 
\STATE Randomly split $\cI^{xyz}$, $\cI^{xz}$ and $\cI^{yz}$ into two equal-sized folds $\cI_k^{xyz}$, $\cI_k^{xz}$, $\cI_k^{yz}$, $k=1,2$.
\FOR{$k=1,2$} 
\STATE Fit models $\hat\mu_x^{(k)}(\cdot)$ for $\EE[X\given Z=\cdot]$ using $\{(X_i,Z_i)\colon i \in(\cI^{xyz}\backslash \cI^{xyz}_k) \cup(\cI^{xz}\backslash \cI_k^{xz})\}$; 
\STATE Fit models $\hat\mu_y^{(k)}(\cdot)$ for $\EE[Y\given Z=\cdot]$ using  $\{(Y_i,Z_i)\colon i\in (\cI^{xyz}\backslash \cI^{xyz}_k) \cup(\cI^{yz}\backslash \cI_k^{yz})\}$; 
\STATE Compute $\hat\mu_x(Z_i) = \hat\mu_x^{(k)}(Z_i)$ and  $\hat\mu_y(Z_i) = \hat\mu_y^{(k)}(Z_i)$ for all $i\in \cI_k$.
\ENDFOR 
\STATE Compute $\hat{C}_{\miss}^{zz}=\frac{1}{n_{xz}+n_{yz}+n_{xyz}}\sum_{i\in \cI^{xz}\cup \cI^{yz}\cup\cI^{xyz}} \hat\mu_y (Z_i) \hat\mu_x (Z_i)$, \\$\hat{C}_{\miss}^{xz}=\frac{1}{n_{xz}+n_{xyz}}\sum_{i\in \cI^{xz}\cup\cI^{xyz}} X_i \hat\mu_y (Z_i)$,  and 
$\hat{C}_{\miss}^{yz}=\frac{1}{n_{yz}+n_{xyz}}\sum_{i\in \cI^{yz}\cup\cI^{xyz}} Y_i \hat\mu_x (Z_i)$.
\STATE Compute $\hat{C}_{\miss} = \hat{C}_{\miss}^{xz} + \hat{C}_{\miss}^{yz} - \hat{C}_{\miss}^{zz}$, $\hat\Sigma_{\miss} = \frac{1}{n_{xz}+n_{xyz}}\sum_{i\in \cI^{xz}+\cI^{xyz}}X_iX_i^\top$.
\ENSURE The modular estimator $\hat\theta_n^{\miss} = \argmin_{\theta\in \RR^{p_x}}~
\big\{ \theta^\top \hat\Sigma_{\miss} \theta /2 - \hat{C}_{\miss}^\top \theta \big\}$.
\end{algorithmic}
\end{algorithm}

\revise{Algorithm~\ref{alg:missing} 
is operable even when no joint observations of $(X,Z,Y)$ 
are available, i.e., $\cI^{xyz}=\varnothing$ and $n_{xyz}=0$.
This is inspired by the crucial fact that 
\eqref{eq:C_default} 
only involves pairwise observations of 
$(X,Z)$ and $(Z,Y)$, 
and the same for 
learning $\mu_x(z)$ 
and $\mu_y(z)$. When $n_{xz}=n_{yz}=0$, 
it reduces to Algorithm~\ref{alg:lowd} that uses joint observations of $(X,Y,Z)$.} 

\revise{The next proposition generalizes Theorem~\ref{thm:lowd}, whose proof is in Appendix~\ref{app:subsec_proof_miss}.  Similar results can be derived for GLMs and the high-dimensional case, which we omit for brevity.}

\begin{proposition}
\label{prop:miss}
    Suppose Assumptions~\ref{assump:ci} and~\ref{assump:MAR} hold. Suppose $n_{xz}/(n_{xz}+n_{yz}+n_{xyz})\to \rho_{xz}$, 
and $n_{yz}/(n_{xz}+n_{yz}+n_{xyz})\to \rho_{yz}$ 
for some $\rho_{xz},\rho_{yz}\in (0,1)$. 
Denote $N_{xz}=n_{xz}+n_{xyz}$ and $N_{yz}=n_{yz}+n_{xyz}$.
Assume $\|\hat\mu_{x}^{(k)}-\mu_x\|_{L_2(\PP_Z)}\cdot\|\hat\mu_{y}^{(k)}-\mu_y\|_{L_2(\PP_Z)}=o_P(1/\sqrt{N_{xz} }) +o_P(1/\sqrt{N_{yz} })$ 
for $k=1,2$,  
$\EE[XX^\top]\succ 0$ is finite, 
and $X(Y-X^\top\theta^*)$ has finite 
second moments.  
Then  $
\hat\theta_n^\mod -\theta^* = \EE[XX^\top]^{-1}\big\{\frac{1}{n_{xz}}\sum_{i\in\cI^{xz}}\phi_{xz}(X_i,Z_i) 
+ \frac{1}{n_{yz}}\sum_{i\in\cI^{yz}}\phi_{yz}(Y_i,Z_i)
+ \frac{1}{n_{xyz}}\sum_{i\in\cI^{xyz}}\phi_{xyz}(X_i,Y_i,Z_i)\big\}+o_P(1/\sqrt{N_{xz} }+1/\sqrt{N_{yz}})
$, where 
$\phi_{xz}(X,Z)=\frac{\rho_{xz}}{1-\rho_{yz}} X\mu_y(Z)  - \rho_{xz}\mu_{y}(Z )\mu_x(Z)-\frac{\rho_{xz}}{1-\rho_{yz}}XX^\top\theta^* $,
$\phi_{yz}(Y,Z)= \frac{\rho_{yz}}{1-\rho_{xz}} Y \mu_x(Z ) - \rho_{yz}\mu_{y}(Z )\mu_x(Z ) $, and 
$\phi_{xyz}(Y,Z)= \frac{ X \mu_y(Z )}{1-\rho_{yz}} + \frac{1}{1-\rho_{xz}} Y \mu_x(Z ) -  \mu_{y}(Z )\mu_x(Z )-\frac{1}{1-\rho_{yz}}XX^\top\theta^* $. When $n_{xyz}=0$, in the formulas above one should interpret $0/0$ as $0$.
\end{proposition}

\subsection{Pairwise observations}
\label{subsec:pair}

\revise{We discuss a few consequences 
when only pairwise observations of $(X,Z)$ and $(Y,Z)$ 
are available.  
In general, 
predicting $Y$ with $X$  is impossible without identification assumptions (Assumption~\ref{assump:ci}).}
Also, the structure learning approach 
in Section~\ref{sec:robustness} is no longer feasible. 

Since $n_{xyz}=0$ and $\cI^{xyz}=\varnothing$, 
we know $\rho_{xz}+\rho_{yz}=1$. 
The asymptotic linear expansion in Proposition~\ref{prop:miss} reduces to 
$
\hat\theta_n^\mod -\theta^* = \frac{1}{n_{xz}}\sum_{i\in\cI^{xz}}\phi_{xz}(X_i,Z_i) 
+ \frac{1}{n_{yz}}\sum_{i\in\cI^{yz}}\phi_{yz}(Y_i,Z_i)
+o_P(1/\sqrt{n_{xz} }) + 1/\sqrt{n_{yz} })
$, where 
$\phi_{xz}(X,Z)= X\mu_y(Z)  - \rho_{xz}\mu_{y}(Z )\mu_x(Z)- XX^\top\theta^*$, and 
$\phi_{yz}(Y,Z)= Y \mu_x(Z ) - \rho_{yz}\mu_{y}(Z )\mu_x(Z )$.
The modular estimator gets more efficient as $n_{xz}$ and $n_{yz}$ increases, while plain OLS is no longer feasible. 

\subsection{Partially pairwise observations}
\label{subsec:missing_all}

Modular prediction can also be adapted to 
settings where a limited number of 
triple observations are available in addition to  
$(X,Z)$ and $(Y,Z)$ pairs, i.e., $n_{xyz}>0$. 

In this case,  one can run OLS on $\cI^{xyz}$, 
whose asymptotic expansion is $\hat\theta_n^\ols - \theta^* = \frac{1}{n_{xyz}}\sum_{i\in\cI^{xyz}}\EE[XX^\top]^{-1}(Y_i-X_i^\top\theta^*) +o_P(1/\sqrt{n_{xyz}})$. 
The estimation error is of the scale $O_P(1/\sqrt{n_{xyz}})$. 
In contrast, by utilizing additional pairwise observations, 
Algorithm~\ref{alg:missing} achieves the  rate of
$O_P(1/\sqrt{n_{xz}+n_{xyz}}) + O_P(1/\sqrt{n_{yz}+n_{xyz}})$; this is a substantial improvement upon OLS 
if $n_{xz}$ and $n_{yz}$ are much larger than $n_{xyz}$, 
which may happen in the example at the beginning of this paper. Even if $n_{xz}$ and $n_{yz}$ are of the same order as $n_{xyz}$, i.e., $\rho_{xz}+\rho_{yz}<1$, Algorithm~\ref{alg:missing} may still achieve a smaller asymptotic variance than  OLS.

With a few $(X,Z,Y)$ observations, one may leverage them to learn the conditional independence structure. 
Following the notations in  Section~\ref{sec:robustness}, 
$Z$ will then be replaced by $Z^\full$ which 
consist of both the original features 
in $Z$ and some other features in $X$. 
Thus, in Line 7 of Algorithm~\ref{alg:missing}, 
$\hat{C}_\miss^{yz}$ has to be computed with 
data in $\cI^{xyz}$, and $\hat{C}_\miss^{zz}$ has 
to be computed with data in $\cI^{xz}\cup \cI^{xyz}$. 
That said, we do note that 
in our real data application (see Section~\ref{subsec:real}), modular regression 
performs well without structure learning. 
Theoretically, the above procedure without structure learning 
may induce substantial bias in cases where 
conditional independence does not hold. 
However, in settings where we have recorded a rich 
set of intermediate covariates in $Z$ (as often assumed  
in surrogate methods), the conditional independence 
assumption is often plausible or the bias is sufficiently
small compared with the reduced variance.

\section{Simulation studies}
\label{sec:simu}

We evaluate 
our methods on simulated datasets  
to compare the 
bias, variance and overall estimation error 
of our methods to non-modular counterparts in both 
low and high dimensional settings. 
We also investigate the robustness to approximate conditional 
independence in the high dimensional setting.

\subsection{Low-dimensional setting}

We focus on parameter estimation in the low-dimensional setting.  
We are interested in estimating  
$\theta^* = \argmin_{\theta\in \RR^{p_x}} \EE[(Y-X^\top\theta)^2]$, 
where $Y\in \RR$ is the response, and $X\in \RR^{p_x}$ are the covariates. 
We suppose in the training data we have access to $Z\in \RR^{p_z}$ such that 
$X\indep Y\given Z$. 
We fix $p_x=4$, $p_z=6$, and 
a relatively small sample size at $n=500$. 

We design $2\times 2$ data generating processes depending on 
(i) whether the true relation is linear and (ii) whether the 
data generating process follows the graphical model $X\to Z\to Y$ 
or $X \leftarrow Z \rightarrow Y$. 
The details are summarized in Table~\ref{tab:simu}. 
The linear regression model is not necessarily well-specified 
for all settings, while the OLS parameter $\theta^*$ is always well-defined. 

\begin{table}[ht]
\centering
\renewcommand\arraystretch{1}
\begin{tabular}{c|c |c }
\toprule
    Setting  & Data generating process & Comment \\
\hline
1  & $Z\sim \textrm{Unif}[-1,1]$, $X=BZ + \epsilon_z$, $Y=Z^\top \gamma + \epsilon_y$  &  linear, $X \leftarrow Z \rightarrow Y$   \\  
    \hline
2  & $X\sim \textrm{Unif}[-1,1]$, $Z=BX + \epsilon_z$, $Y=Z^\top \gamma + \epsilon_y$ &   linear, $X\to Z\to Y$  \\  
\hline
3  &   $Z\sim \textrm{Unif}[-1,1]$, $X=f(Z) + \epsilon_z$, $Y= g(Z) + \epsilon_y$   &   nonlinear, $X \leftarrow Z \rightarrow Y$   \\  
\hline
4  &  $X\sim \textrm{Unif}[-1,1]$, $Z=f(X)+ \epsilon_z$, $Y=g(Z)  + \epsilon_y$ &   nonlinear, $X\to Z\to Y$   \\   
\bottomrule 
\end{tabular}
\vspace{0.5em}
\caption{Data generating processes in all settings, where $\epsilon_z\sim N(0,\sigma_z^2)$ and $\epsilon_y\sim N(0,\sigma_y^2)$ 
are independent noise, and $f$, $g$ are nonlinear functions defined in the text.}
\label{tab:simu}
\end{table}

In settings 1 and 2, we set $\gamma = (0.531, -0.126, 0.312,0,0,0)^\top\in \RR^{p_z}$, and 
$B\in \RR^{p_x\times p_z}$ or $B\in \RR^{p_z\times p_x}$ is a constant matrix 
where 8 out of 24 entries are randomly set to $0.5$ then fixed for all configurations, 
while the other entries are zero.   
\revise{The $f$ and $g$ functions in Settings 3 and 4 are defined as follows.} 
In setting 3, we set $X_4=[BZ]_{4}$ with the same $B$ as setting 1, 
and $X_1=0.5X_1+\ind\{Z_1>0\}$, $X_2=-0.5Z_3+\ind\{Z_4>0\}$ 
and $X_3=\ind\{Z_4>0\}$. 
In setting 4, we set $Z_{4:6} =[BX]_{4:6}$ with the same $B$ as setting 2, and 
$Z_1=0.5X_1+\ind\{X_1>0\}$, $Z_2=-0.5X_3+\ind\{X_4>0\}$ 
and $Z_3=\ind\{X_4>0\}$. 
In all settings, $\epsilon_z\sim N(0,\sigma_z^2)$ and $\epsilon_y\sim N(0,\sigma_y^2)$ 
are independent noise, where we vary 
the noise strengths $\sigma_z,\sigma_y\in \{0.1,0.5,1,2\}$,
and $X\sim \text{Unif}[-1,1]$ means 
all entries in $X$ are i.i.d~from Unif$[-1,1]$.

We compute the OLS parameter $\hat\theta_n^\ols$ 
and our modular estimator with $\hat\mu_y^{(k)}$ and $\hat\mu_x^{(k)}$ 
estimated with (i) cross-validated Lasso~\citep{glmnet}, 
(ii) cross-validated ridge regression~\citep{glmnet}, 
(iii) regression random forest from \texttt{grf} R-package, 
and (iv) linear regression. 
All procedures are repeated for $N=1000$ independent runs. 
For comprehensive illustration here, 
we aggregate all coefficients 
and evaluate the rooted mean square error (RMSE) 
$\EE[\|\hat\theta   -\theta ^*\|_2^2]^{1/2}$  
(summation of) standard deviation (SD) $\sum_{j=1}^{p_x}\textrm{sd}(\hat\theta_j -\theta_j^*)$ 
and  bias $\sum_{j=1}^{p_x}|\EE[\hat\theta_j - \theta_j^*]|$ 
for the five estimators. 
\revise{We plot the aggregated RMSE   
in settings 2 and 3 
in Figure~\ref{fig:lowd_rmse_sub}.  
Bias, SD and RMSE in all settings
(either aggregated or for each entry) are in Appendix~\ref{app:simu_lowd} which 
convey similar messages.}

\begin{figure}[ht]
\centering
\includegraphics[width=5.5in]{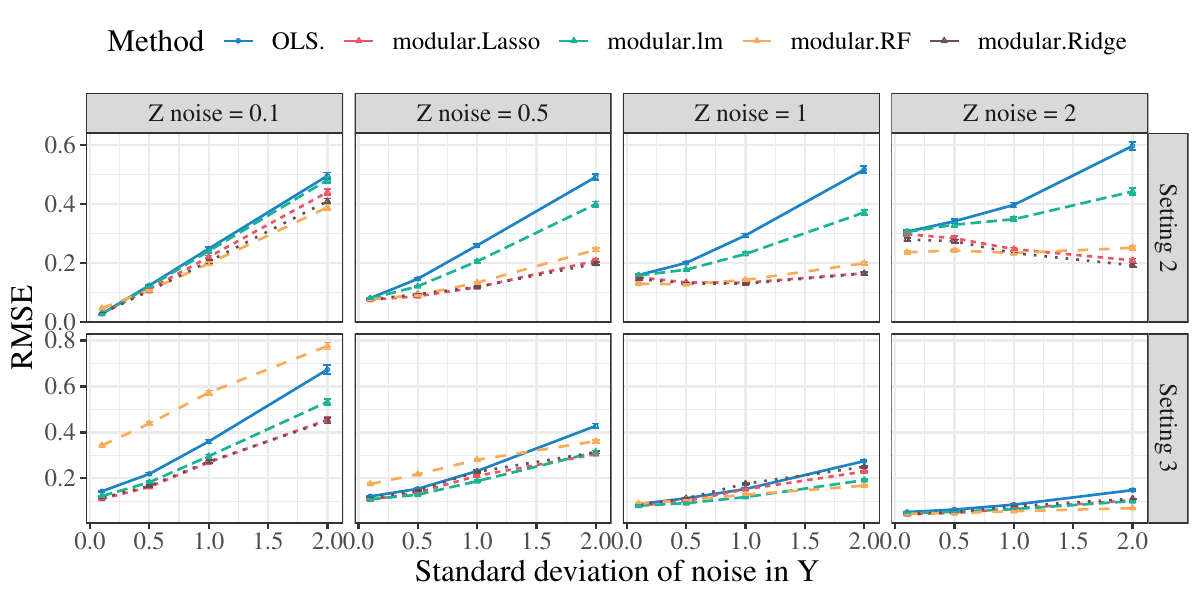}
\caption{RMSE averaged over $N=1000$ runs with $n=200$. 
Each column corresponds to a value of $\sigma_z$. In each subplot, 
the $x$-axis (noise strength) equals $\sigma_y$, the standard deviation of noise in $Y$. 
Modular regression achieves smaller estimation 
error in almost all settings.}
\label{fig:lowd_rmse_sub}
\end{figure}

Modular regression -- no matter which machine learning regressor is used 
for $\hat\mu_x$ and $\hat\mu_y$ -- achieves smaller RMSE than the plain OLS in almost all configurations.  Modular regression  has a larger bias than OLS due to the additional regression  (Figure~\ref{fig:lowd_bias_sub}, Appendix~\ref{app:simu_lowd}), 
yet much more substantial reduction in standard deviation (Figure~\ref{fig:lowd_sd_sub}, Appendix~\ref{app:simu_lowd}) 
in all settings. Also, results from 
setting 1  
confirms Remark~\ref{rem:linear}, where the asymptotic variance reduction 
grows with both $\sigma_z$ and $\sigma_y$. 

A noteworthy exception is $\sigma_z=0.1$ in setting 3, where the RMSE of 
modular regression with random forests is worse than that of the OLS estimator.
This is because with small sample size ($n=200$) and 
low noise (hence the uncertainty in $\hat\theta_n^\ols$ is very small), 
the bias introduced by the random forest regression 
is large compared to the reduction in variance. 
However, when $\sigma_z$ becomes larger, the bias  (Figure~\ref{fig:lowd_bias_sub}) has a smaller magnitude; 
this may be because we enter a signal-to-noise ratio regime that favors tree-based approaches. 
The impact of bias is also less substantial when we increase the sample size. 
Figure~\ref{fig:lowd_rmse_sub_k} in Appendix~\ref{app:simu_lowd} 
plots the RMSE for estimating $\theta_1^*$ 
when $n=2000$, where we see a much better performance of modular regression 
with random forests. Thus, we recommend using machine learning regressors 
for larger sample sizes.

We also note that modular regression with linear regression (green) 
performs  better than the plain OLS in all settings 
(even when the true relation is nonlinear),  although it is sometimes outperformed by other modular methods.  
This phenomenon is universal in our simulation (see Appendix~\ref{app:simu_lowd} 
for RMSE for all entries in all settings), 
\revise{and verifies the relaxed consistency condition we 
observe after Theorem~\ref{thm:lowd}.
Although this is not a general rule, 
we  still recommend linear regression 
in practice  especially for small sample size.}
One can also add a few transformed
regressors into the linear regression to further
adapt to nonlinearity.

\revise{Finally, to show the asymptotic behavior of modular regression, 
in Figure~\ref{fig:lowd_rmse_n} we plot the aggregated RMSE 
for varying 
sample sizes $n\in \{200,500,1000,2000\}$ with $\sigma_z=0.5$ 
and $\sigma_y=1$ in all settings. 
Modular regression outperforms OLS in nearly all settings, 
and linear models (lm, Lasso and Ridge) as base estimators 
show robust performance. The performance 
of random forest is less stable for small sample size ($n=200$) in 
setting 3, hence we  recommend using flexible machine learning 
models with larger sample sizes.}

\begin{figure}[ht]
\centering
\includegraphics[width=5.5in]{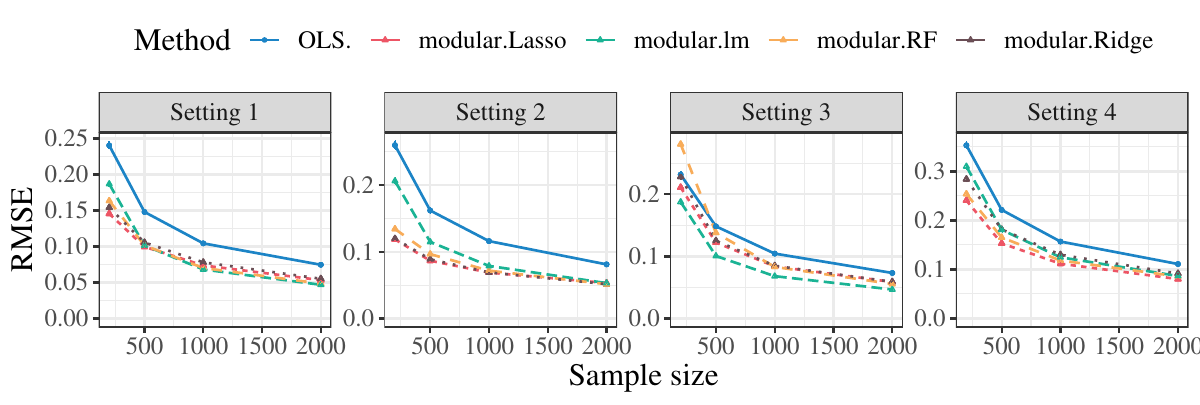}
\caption{RMSE averaged over $N=1000$ runs for various sample size, 
and $\sigma_z=0.5$, $\sigma_y$=1. Modular regression outperforms OLS in all settings, and its performance improves with $n$.}
\label{fig:lowd_rmse_n}
\end{figure}

\subsection{High-dimensional setting}


We now consider two data generating processes in the high-dimensional setting, 
where the conditional independence structure 
holds only in one of them.  
We show that our methods outperform 
the Lasso in 
Setting 1 (with conditional independence), 
and is robust against the 
violation of conditional independence 
in Setting 2.
Following the preceding notations, 
$X\in \RR^{p_x}$ is the covariates available in the prediction phase, 
while $Z\in \RR^{p_z}$ is only available for training data, 
and $Y\in \RR$ is the outcome. 
The data generating processes are visualized in 
Figure~\ref{fig:simu_dist_fig}.

\begin{figure}[ht]
\centering 
\begin{subfigure}[t]{0.45\linewidth}
    \centering
\begin{tikzpicture}[->,>=stealth', thick, main node/.style={circle,draw}]

\node[main node, text=black, circle, draw=black, fill=black!5, scale=1.1] (2) at  (0,0) {\small $Z$};
\node[main node, text=black, circle, draw=black, fill=black!5, scale=1.1] (1) at  (-1.5,-1) {\small $X $}; 
\node[main node, text=black, circle, draw=black, fill=black!5, scale=1.1] (3) at  (1.5,-1) {\small $Y$}; 

\draw[->] (1) edge [draw=black] (2);
\draw[->] (2) edge [draw=black] (3); 
 
\end{tikzpicture} 
\caption{Setting 1: conditional independence}
\end{subfigure} \hspace{0.2in}
\begin{subfigure}[t]{0.45\linewidth}
    \centering
    \begin{tikzpicture}[->,>=stealth', thick, main node/.style={circle,draw}]

        \node[main node, text=black, circle, draw=black, fill=black!5, scale=1.1] (2) at  (0,0) {\small $Z$};
        \node[main node, text=black, circle, draw=black, fill=black!5, scale=1.4] (1) at  (-1.5,-0.6) {\small $~$ }; 
        \node[main node, text=black, circle, draw=black, fill=black!5, scale=1.1] (4) at  (-1.5,-1.4) {\small $~$ }; 
        \node[main node, text=black, circle, draw=black, fill=black!5, scale=1.1] (3) at  (1.5,-1) {\small $Y$}; 
        \node[ ] (6) at (-1.85,-1.8) {$X$};
        \node[rectangle,draw, densely dashed, minimum width=1cm, minimum height=2cm] (5) at (-1.6,-1.1) {};
        \draw[->] (1) edge [draw=black] (2);
        \draw[->] (2) edge [draw=black] (3); 
        \draw[->] (4) edge [draw=black] (3); 
        \end{tikzpicture} 
        \caption{Setting 2: approximate conditional independence}
    \end{subfigure}
\caption{Illustration of the two data generating processes. }
\label{fig:simu_dist_fig}
\end{figure}
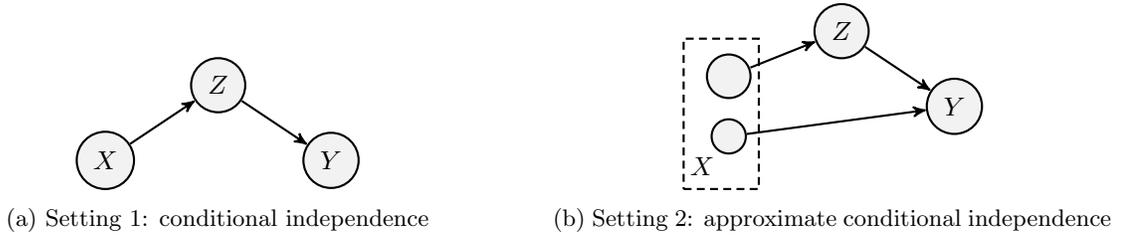

In settting 1, we generate $X\in \RR^{p_x}$ where each entry is  i.i.d.~from 
$N(0,1)$; 
then 
we generate  $Z = BX  + \epsilon_z\in \RR^{p_x}$ with 
i.i.d.~noise $\epsilon_z\sim N(0,1)$ given a parameter matrix $B\in \RR^{p_z\times p_x}$. 
Finally, we generate
$Y = Z^\top \gamma + \epsilon_y$ for i.i.d.~random noise $\epsilon_y \sim N(0,4)$ 
and some $\gamma\in \RR^{p_z}$. 
To ensure sparsity, we let $B_{ij}=0$ for all $j>s$, 
such that $X_j$'s for $j>s$ are irrelevant for the prediction. 
Then for each $j\leq s$, we randomly select $2s$ entries in 
the $j$-th column of $B$ with $B_{ij} = 0.25$, 
the remainings with $B_{ij}=0$. 
We set $\gamma_i=0.5$ for $1\leq i\leq s$, 
and $\gamma_i=0$ for $i>s$. 
In this way, $Y = X^\top \theta^* + (\gamma^\top \epsilon_z + \epsilon_y)$ 
where $\theta^* = B^\top \gamma$, and $X\indep Y\given Z$. 

In setting 2, we ensure a small subset of covariates in $X$ 
to have direct impact on $Y$ (the link from $X$ to $Y$ in Figure~\ref{fig:simu_dist_fig}(b)). 
To be specific, we generate $X$ and 
$Z=BX + \epsilon_z$, where $\{B_{ij}\colon j\leq s\}$ and $\epsilon_z$ are 
the same as setting 1, and 
generate $Y = Z^\top \gamma + X^\top \tilde\gamma + \epsilon_y$ 
for i.i.d.~random noise $\epsilon_y \sim N(0,4)$; 
the direct coefficient $\tilde\gamma$ satisfies 
$\tilde\gamma_{i}=0$ for $i\leq s$ and $\sum_{i=1}^{p_x}\ind\{\tilde\gamma_i\geq 0\} = 5$. 
In this setting, 
the high dimensional model 
$Y=X^\top \theta^* + (\gamma^\top \epsilon_z+\epsilon_y)$ holds 
with $\theta^* = B^\top \gamma + \tilde\gamma$, but 
$X\indep Y\given Z$ does not hold exactly. 

We compare our method in Section~\ref{subsec:highd_method}
to the Lasso; to ensure fair comparison, 
we run the Lasso using our modular method 
while setting $\hat{f}_y(Z_i) := Y_i$ and $\hat{f}_x(Z_i):=X_i$ for all $i$, 
so that~\eqref{eq:def_modular_highd} reduces to the Lasso. 
We also evaluate an oracle modular regression algorith, that is, 
we set $\hat\mu_y:=\EE[Y\given Z]$ and 
 $\hat \mu_x := \EE[X_j\given Z]$ as the ground truth. 
The regularization parameter $\lambda$ 
for $\ell_1$-penalty is chosen by 5-fold 
cross validation on the training data for all methods. 
In our modular regression algorithm, we fit $\hat\mu_x$ and $\hat\mu_y$ 
by ridge regression using  the \texttt{cv.glmnet} function 
from the \texttt{glmnet} R-package~\citep{glmnet}. 
The procedures are evaluated for $p_x=p_z=100$ and $s=10$. 
The training sample is $n=500$ and 
we evaluate the prediction on $n_\test=1000$ test samples 
for a relatively accurate estimate.

\subsubsection{Performance under conditional independence}

In setting 1 with a well-specified high-dimensional linear model 
and exact conditional independence $X\indep Y\given Z$, 
we evaluate 
(i) the parameter estimation error $\|\hat\theta_j - \theta_j^*\|_2 $ 
where $\hat\theta$ is the output of modular regression or the Lasso, 
as well as 
(ii) prediction performance: excess risk 
$\frac{1}{n_\test}\sum_{i=1}^{n_\test} (X_i^\top \hat\theta  - X_i^\top \theta^*)^2$ and 
mean squared error (MSE) $\frac{1}{n_\test}\sum_{i=1}^{n_\test}(X_i^\top \hat\theta  - Y_i)^2$ 
on the test samples. 

\paragraph{Parameter estimation.}
For each $j\in [p_x]$, we evaluate 
the rooted mean squared error (RMSE) $(\hat\theta_j - \theta^*_j)^2$
and bias $|\EE[\hat\theta_j]-\theta_j^*|$ over $N=100$ replicates, and visualize  the RMSE (left) and bias (right)
via boxplots in Figure~\ref{fig:simu_1}.

\begin{figure}[ht]
\centering
\begin{subfigure}[t]{0.49\linewidth}
    \centering
    \includegraphics[width=\linewidth]{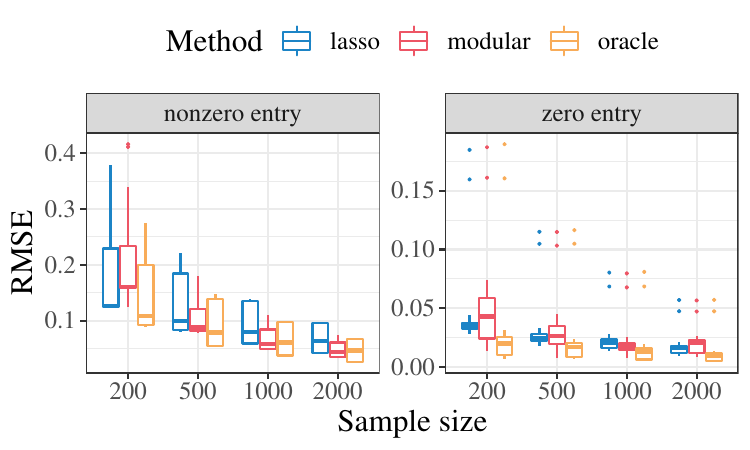}
\end{subfigure} 
\begin{subfigure}[t]{0.49\linewidth}
    \centering
    \includegraphics[width=\linewidth]{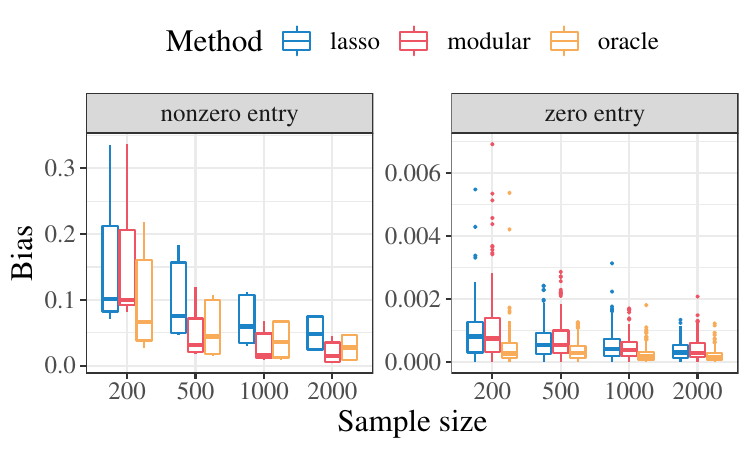}
\end{subfigure} 
\caption{Boxplot of RMSE (left) and bias (right) of $\{\hat\theta_j \colon j\in[p_x]\}$, 
averaged
over all replicates in setting 1, separately  
for $\theta_j^*\neq 0$ (nonzero entry) and $\theta_j^*=0$ (zero entry). 
The $x$-axis indicates the method to obtain $\hat\theta_j$. Modular regression achieves smaller RMSE 
and smaller bias than the Lasso
due to a different bias-variance trade-off.}
\label{fig:simu_1}
\end{figure}

For 
nonzero entries (the blue boxplots), 
we observe a significant reduction in RMSE 
compared to the Lasso \revise{under various sample sizes}; 
despite the estimation error in fitting the conditional mean, 
our method is  comparable to its oracle counterpart, 
even with a smaller overall RMSE (though less stable across entries). 
This might be due to the instability in cross-fitting with ridge regression \revise{or achieving a better bias-variance tradeoff by cross-validation. 
Also, as $n$ increases, modular regression gets more and more stable.}
For zero entries,  
the RMSE are similar across three methods: 
while the oracle yields lower RMSE 
than the Lasso, the slight inflation of RMSE in modular prediction 
might be due to the estimation error of $\hat\mu_x$ and $\hat\mu_y$.

We find that the reduction in RMSE for nonzero entries 
mainly comes from the reduction in bias, as illustrated by 
the right panel of Figure~\ref{fig:simu_1}. 
This is consistent with our theory in 
Theorem~\ref{thm:highd}: The reduced variance 
of the proxy $\hat{C}_{\textrm{lm}}$ 
allows the cross-validation step to 
choose a model with smaller bias.
Due to limited space, we defer the corresponding plot of 
standard deviations to Figure~\ref{fig:simu_sd} in Appendix~\ref{app:simu_lowd}.

\paragraph{Prediction.}
We plot the $N=1000$ excess risks 
averaged on the test samples for various training sample size $n$
in Figure~\ref{fig:simu_1_pred}. 
It shows 
significant improvement in prediction accuracy 
of our method compared to plain Lasso; 
modular regression is slightly inferior to the oracle counterpart 
but the difference is relatively moderate.  
\revise{The error of modular regression also gets smaller as the sample size increases.}

\begin{figure}[ht]
    \centering
    \begin{subfigure}[t]{0.49\linewidth}
        \centering
        \includegraphics[width=0.9\linewidth]{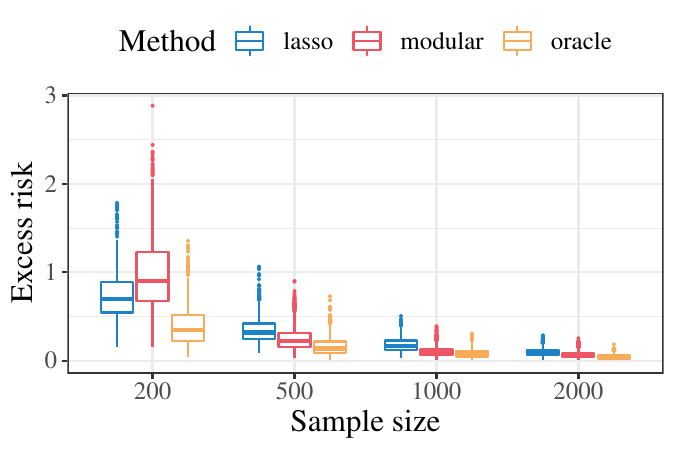}
    \end{subfigure} 
    \begin{subfigure}[t]{0.49\linewidth}
        \centering
        \includegraphics[width=0.9\linewidth]{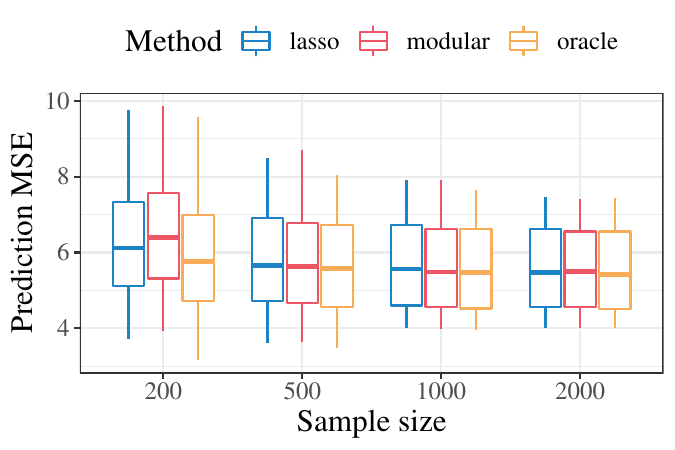}
    \end{subfigure} 
\caption{Boxplot of $N=1000$ empirical excess risks $\EE[(X_i^\top \hat\theta - X_i^\top \theta^*)^2]$ 
and MSEs $\EE[(X_i^\top\hat\theta-Y_i)^2]$ 
in all replicates for setting 1.  
Modular regression achieves smaller excess risk than the Lasso. 
The improvement in prediction MSE is less visible due to
the irreducible error. 
}
\label{fig:simu_1_pred}
\end{figure}

\subsubsection{Robustness to approximate conditional independence}

In the following, we test 
the robustness of our method 
against potential violation of the conditional independence 
assumption.
In setting 2 where the conditional independence only 
approximately holds, we additionally conduct a 
structure learning step using Lasso. 

We first run a cross-validated Lasso 
of $Y$ on $(X,Z)$
using the \texttt{cv.glmnet} function~\citep{glmnet} 
for model selection; 
all $X_j$ that are selected by this Lasso step 
is then merged into $Z$. 
For any selected $X_j$, we will 
skip the regression of $\EE[X_j\given Z]$ and directly set 
$\hat{f}_{x,j}(Z_i) = X_{i,j}$ for all training samples. 
We also evaluate an oracle counterpart which uses 
the ground truth of the structure and 
the true conditional expectations 
for those $X_j\indep Y\given Z$; we skip the regression
for those $X_j$ with a direct impact on $Y$, 
 as outlined in Section~\ref{sec:robustness}. 

Figure~\ref{fig:simu_2} plots the RMSE (left) and bias (right) 
of all coefficients $\hat\theta_j$ 
\revise{with various sample sizes}, averaged over $N=1000$ replicates. 
The plot for standard deviations of $\hat\theta_j$ is in Figure~\ref{fig:simu_sd} 
in Appendix~\ref{app:simu_lowd}. 
We again see an improved RMSE especially for nonzero entries (blue). 
While the RMSE is less stable across 
different entries, 
perhaps because of the 
additional uncertainty introduced in the structure learning step, it is 
in general better than plain Lasso and comparable to the oracle.
Similar to the previous setting, this reduction 
of RMSE mainly comes from a reduced bias as shown in the 
right panel of Figure~\ref{fig:simu_2}. 
In general,  our method is able to 
adapt to approximate conditional independence and maintain 
certain efficiency gain. 

\begin{figure}[ht]
    \centering
    \begin{subfigure}[t]{0.49\linewidth}
        \centering
        \includegraphics[width=\linewidth]{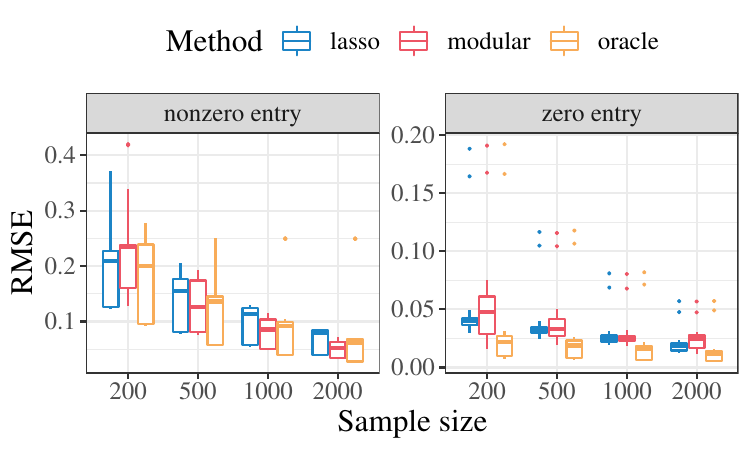}
    \end{subfigure} 
    \begin{subfigure}[t]{0.49\linewidth}
        \centering
        \includegraphics[width=\linewidth]{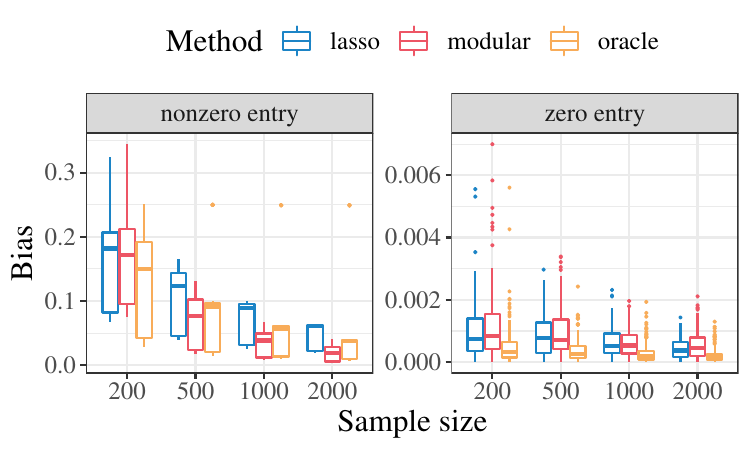}
    \end{subfigure} 
\caption{Boxplot of RMSE (left) and bias (right) of $\hat\theta_j$, $j\in[p_x]$, averaged
over all replicates, for
$\theta_j^*\neq 0$ (nonzero entry) and $\theta_j^*=0$ (zero entry) separately.  Modular regression achieves smaller estimation RMSE and bias than the Lasso, due to a different bias-variance trade-off.}
\label{fig:simu_2}
\end{figure}

We summarize 
the prediction performance in Figure~\ref{fig:simu_2_pred}. 
The excess risk of modular prediction 
 lies between that of the oracle counterpart and the 
 Lasso. Still, the relative improvement in terms of 
prediction MSE is present yet smaller due to the irreducible noise. 

\begin{figure}[htbp]
    \centering
    \begin{subfigure}[t]{0.49\linewidth}
        \centering
        \includegraphics[width=0.9\linewidth]{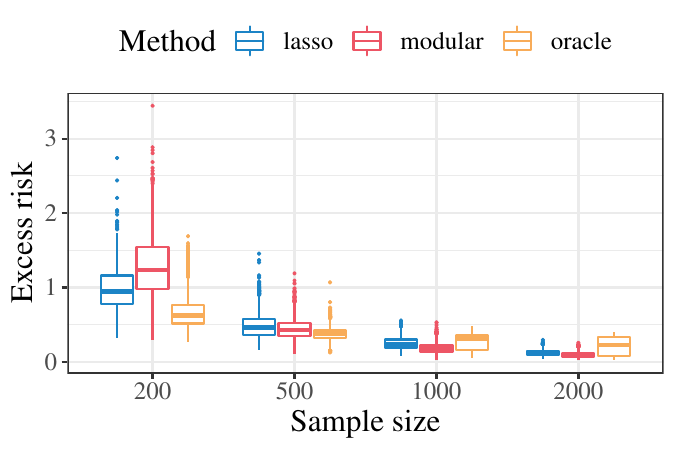}
    \end{subfigure} 
    \begin{subfigure}[t]{0.49\linewidth}
        \centering
        \includegraphics[width=0.9\linewidth]{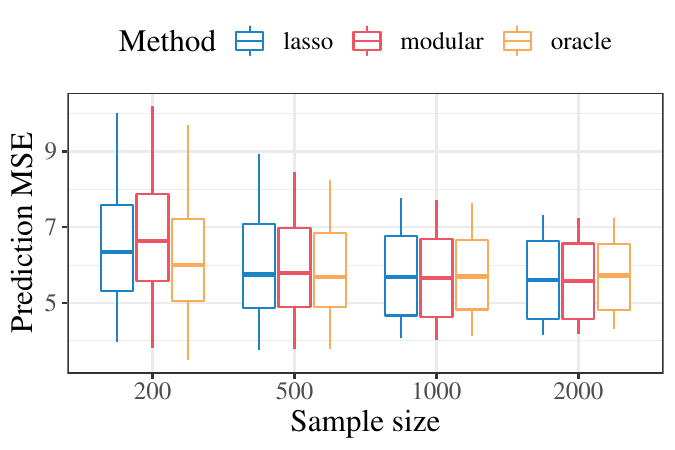}
    \end{subfigure} 
\caption{Boxplot of $N=100$ empirical excess risks $(X_i^\top \hat\theta - X_i^\top \theta^*)^2$ (left) 
and MSEs $(X_i^\top \hat\theta - Y_i)^2$ (right) in all replicates for setting 2.  
Modular regression with structure learning achieves smaller prediction excess risk than the Lasso.}
\label{fig:simu_2_pred}
\end{figure}

\vspace{-1em}

\section{Real data analysis}
\label{sec:real}

\subsection{Dataset overview}
We apply our method to the 
English Longitudinal Study of Ageing dataset~\citep{ukageing}. 
The waves of data consistently 
measure several modules of 
features such as health trajectories, disability 
and healthy life expectancy, the  economic and financial situations, 
cognition and mental health, etc. 
We use the Wave 7 and Wave 9 data, 
collected in 2014 and 2018, respectively, 
restricted to people 
who are present in both waves. 

We consider  
predicting the future health  outcomes of people based 
on their current available features.  
The Wave 7 data is used as covariates. 
We divide all variables into two  categories: 
health (both mental and physical) and social conditions 
(including household demographics, financial,   work and social situations). 
After pre-processing through one-hot encoding for 
categorial and string-valued variables 
and filtering out some highly imbalanced variables, 
the health and social categories contain $184$ and $888$ features, respectively. 
We take the \texttt{hehelf}  variable from Wave 9 data 
as the response $Y$: 
the reported overall health situation ranging from 1 (excellent) 
to 5 (poor). We treat the outcome as  continuous.

\subsection{Real data application}
\label{subsec:real}

We take the covariates in the social category 
as $X\in \RR^{p_x}$ for $p_x=888$, and 
those in the health category as $Z\in \RR^{p_z}$ for $p_z=184$.  
 Our procedure is geared towards settings with high noise. To simulate such a setting, we smooth the discrete response $Y$ by adding i.i.d.~noise drawn from $N(0,4)$.
The task is to predict $Y$ using $X$, while $Z$ may be available during the 
training process. 
This is practical setting where 
health conditions ($Z$) may be more costly or difficult to evaluate, 
hence only available in pre-collected data. 

We consider a  scenario where 
one only has access to a limited number $n$ of $(X,Y,Z)$ triples 
as well as $n_{xz}$ observations for $(X,Z)$ and $n_{yz}$ observation for 
$(Z,Y)$ pairs 
in the training phase. This mimics a realistic scenario 
where it is difficult to obtain full observations simultaneously
but modular data are more easily accessible. 
When $n_{xz}=n_{yz}=0$, it reduces to the standard full-observation setting. 
While it is more difficult to test for conditional independence 
and the modeling assumptions with limited joint observations, 
we could still use our framework 
to merge the individual datasets and improve out-of-sample
prediction. We  focus on the prediction MSE on the test sample
{because no ground truth is available}. 
{We consider two scenarios:}

\begin{enumerate}[label=(\roman*)]
    \item Fixed $n$ and varying $n_{xz}$ and $n_{yz}$. 
    We fix $n=200$, 
    and the number of pair observations varies as $n_{xz} = n_{yz} = n\cdot \rho$ 
    for $\rho \in \{0.5, 1, 5, 10\}$. 
    \item Fixed $n+n_{xz}+n_{yz}$ and varying proportion.  
    We fix the total sample size
    $n+n_{xz}+n_{yz} = 1000$, while 
    varying the proportion of joint observations by 
    $n=1000\cdot \rho$, $n_{xz}=n_{yz}$ for 
    $\rho\in\{0.05, 0.1, 0.2, 0.5, 0.8, 1\}$. 
\end{enumerate}

We evaluate our modular regression approach outlined in Section~\ref{sec:partial} 
that utilizes the partial observations, where we use 
2-fold cross-fitting 
to obtain $\hat{\mu}_x$ and $\hat\mu_y$ 
from 1) the Lasso using \texttt{cv.glmnet} (10-fold cross-validation), 
2) ridge regression using \texttt{cv.glmnet} (10-fold cross-validation), 
and 3) random forests using \texttt{grf} R-package. 
The parameter $\lambda$ for $\ell_1$-regularization is chosen
by 10-fold cross-validation 
with the \texttt{1se} criterion, implemented 
in the same way as that in the \texttt{cv.glmnet} R function, i.e., we
selects the largest $\lambda$ within one \texttt{se} of CV error from 
the smallest CV error.
We use the modular regression without structure learning. 
These implementations are compared to the default Lasso 
using \texttt{cv.glmnet} fitted over the $(X,Y)$ joint observations, also 
using \texttt{1se} criterion 
for 10-fold validation. Here we 
omit the results for our methods and the Lasso using the \texttt{min} option (selecting minimum CV error) for cross-validation
because the Lasso performs far less stable in this case.
The average prediction MSEs over $N=100$ independent runs 
are in Figure~\ref{fig:real_pobs}.  

\begin{figure}[htbp]
\centering 
\includegraphics[width=5in]{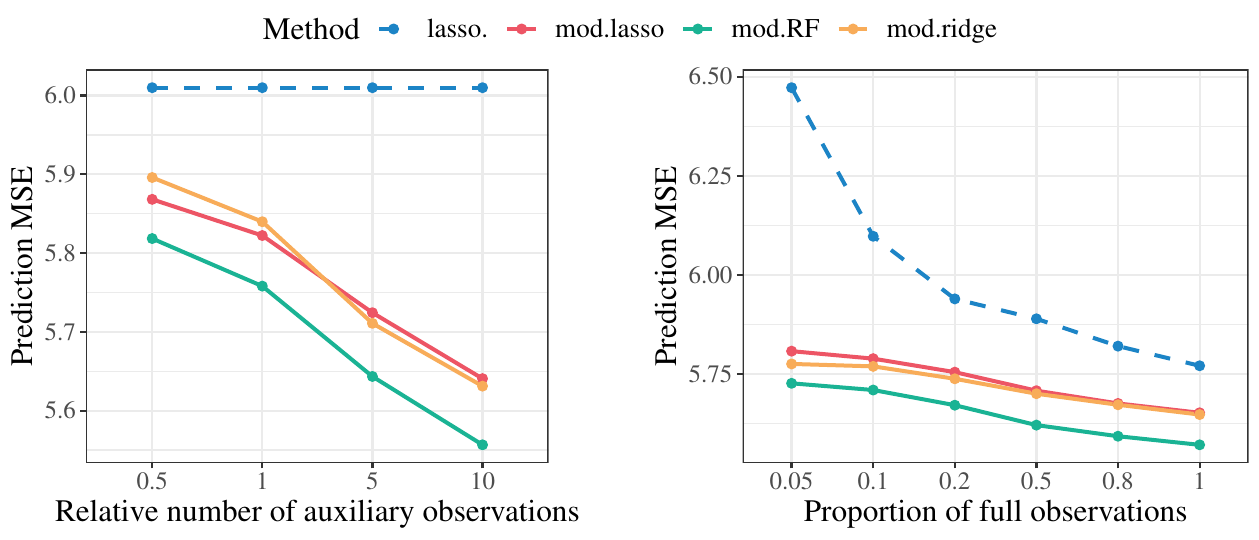}
\caption{Average prediction MSEs on the test data in settings (i) left and (ii) right. All methods 
are \texttt{1se} cross-validated. The $x$-axis represents the value of $\rho$ 
in both plots. Modular regression 
with all   three base learners
substantially reduce  the prediction MSE.}
\label{fig:real_pobs}
\end{figure}

The left panel in Figure~\ref{fig:real_pobs}
shows the results for setting (i), where Lasso uses a fixed number 
of joint observations. 
As the number of auxiliary observations increases, 
our modular regression achieves smaller prediction error, 
showing quite substantial improvement due to incorporating auxiliary observations. 

The right panel presents
those for setting (ii). 
Naturally, the performance of the Lasso (blue, dashed line) improves as $\rho$, 
the proportion of joint observations, increases. 
Our modular regression, with all of the three regressors, 
outperforms the Lasso by utilizing auxiliary observations, including  
$\rho=1$ without missing data. 
Surprisingly, keeping the total sample size fixed, 
we do not see much variation in the performance of modular regression (all solid lines) as
$\rho$ varies: 
the performance with only $5\%$ joint observations 
is comparable to that with more than $50\%$ joint observations. 
Our method achieves very similar effective sample size as 
full observations on this dataset. 
On the other hand, 
this phenomenon also indicates that we are in a regime where 
the irreducible error in $Y$ is large compared to the learnable part. 
In the next part, we are to utilize semi-synthetic data 
to evaluate the performance of our method in a setting with slighly stronger signal.

\subsection{Semi-synthetic data}

As discussed in Section~\ref{subsec:proj}, a naive implementation of our procedure is computationally prohibitive in high-dimensional settings. Thus, in the following, we evaluate the shortcut described in Section~\ref{subsec:proj} and compare it with
the standard cross-fitting implementation. 
We keep the choice of $X$ and $Z$ as before, 
and randomly subsample without replacement 
the training and test folds, where we observe $(X,Z,Y)$ 
for  $n=1000$ training data, 
but only $X$ for $n_{\test}=1794$ test sample. 
This is a scenario where 
only those easier-to-measure social-related 
covariates are available at 
the time of prediction, while the pre-collected training data 
contain both health and social covariates.

\paragraph{Data generating process.}
As the signal-to-noise ratio in the original data 
(for both $(X,Y)$ and $(Z,Y)$ regression) 
is extremely low, 
we enhance the signal with the following synthetic data generating process 
to draw a more informative comparison. 
We standardize all features using the original dataset, 
from which we subsample a set of observations aside from 
the training and test data. On this set, 
we run the Lasso for $Y$ given $Z$ which finds $13$ nonzero regression coefficients, 
and for $Y$ given $X$ which finds $12$ nonzero coefficients; 
we then reorder the features so that $Z_{1:13}$ and $X_{1:12}$ 
have with nonzero coefficients. 
For each $j\in\{1,\dots,12\}$, we run a Lasso for $X_{j}$ over $Z_{1:13}$ 
on this fold, and store all coefficients in 
the $j$-th column of a matrix $\hat B\in\RR^{13\times 12}$. 
 
We generate the training and testing data by $Z_i = Z_i^{\textrm{org}}+ \epsilon_i^z$, 
where $Z_i^{\textrm{org}}$ is the original observation, and $\epsilon_i^z\sim N(0,\sigma_z^2)$ 
is independent noise. 
We then compute $\mu_x(Z_i) = 2.5\cdot\hat{B}^\top Z_{i,1:13}$, 
and generate $X_{i,1:12} = \mu_x(Z_i) + \epsilon_i^x$, 
where $\epsilon_i^x\in \RR^{12}$ is i.i.d.~noise from $N(0,0.25\cdot \mathbf{1}_{12})$ 
to match the standardized variance in $X$, 
and $X_{13:p_x}$ are obtained by permuting each columns in the (standardized) 
original data matrices. 
Finally, we generate $Y_i=\mu_y(Z_i) + \epsilon_i^y$, where 
$\mu_y(Z_i)=Z_i^\top \gamma$ where the first $12$ entires in $\gamma\in \RR^{p_z}$ 
equals $0.5$ while the others equal to zero, and $\epsilon_i^y\sim N(0,4)$ 
is the i.i.d.~random noise.  
This setup ensures $X\indep Y \given Z$, but the 
sparse linear model $\EE[Y\given X]=X^\top\theta^*$ 
does not necessarily hold, and the true parameters $\theta^*$ are not 
available. 
We thus focus on the prediction performance. 
We vary the signal-to-noise ratio by setting $\sigma_z\in\{1,2\}$.

\paragraph{Methods.}
We evaluate two implementations of the modular regression: 
\begin{enumerate}[label=(\roman*)]
    \item Cross fitting in Section~\ref{subsec:highd_method}. 
    We use two-fold cross-fitting with cross-validated Lasso 
    and ridge regression to form $\hat\mu_x$ and $\hat\mu_y$, 
    and then use 10-fold cross-validation to 
    decide the penalty parameter $\lambda$ in~\eqref{eq:def_modular_highd} by either 
    (a) \texttt{min}: minimal CV error or (b) \texttt{1se}: the same as the 
    default implementation in \texttt{cv.glmnet} R function which 
    selects the largest $\lambda$ within one \texttt{se} of CV error for stable performance. 
    \item Projection shortcut in Section~\ref{subsec:proj}. 
    We use ridge projection with regularization parameters $(\eta_x,\eta_y)$ 
    for $\bPi_x,\bPi_y$, and then run \texttt{cv.glmnet} (with both (a) \texttt{min} and 
    (b) \texttt{1se} choice of cross-validation)
    for $X$ and $(\bPi_y+\bPi_x-\bPi_x\bPi_y)Y$, where  
    $(\eta_x,\eta_y)$ are chosen by 10-fold cross-validation 
    to minimize CV error. 
\end{enumerate}

The above two implementations are compared to two baselines: 
\begin{enumerate}[label=(\roman*)]
    \setcounter{enumi}{2}
    \item Oracle modular regression. Set  
    $\hat\mu_x$ and $\hat\mu_y$ as  ground truth,  
    and then the same as (i).
    \item Lasso: Run \texttt{cv.glmnet} on $(X,Y)$ with both (a) \texttt{min}
    and (b) \texttt{1se} cross-validation. 
\end{enumerate}

\paragraph{Results.}
The boxplots for all methods with $N=100$ independent runs 
are in Figure~\ref{fig:real_synthetic}, 
where we compare the performance under different cross-validation options. 

\begin{figure}[htbp]
    \centering 
    \includegraphics[width=6in]{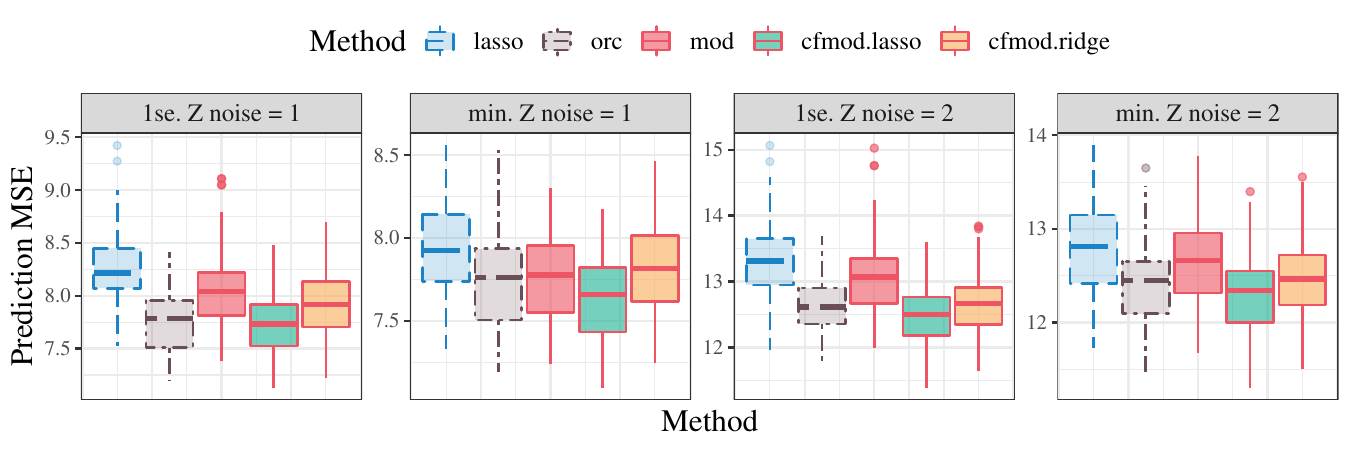}
    \caption{Boxplots of prediction MSEs on the test data. 
    Each subplot summarizes methods with one
    cross validation option (\texttt{1se} or \texttt{min}) under one value of $\sigma_z$. 
    \texttt{Method} stands for the projection shortcut (\texttt{mod}), 
    and cross-fitting with Lasso (\texttt{cfmod.Lasso})
    and ridge regression (\texttt{cfmod.ridge}).
    The Lasso (\texttt{Lasso}) 
    and oracle modular regression (\texttt{orc})
    with the same cv option are plotted for comparison. The oracle modular regression substantially reduces MSE; the cross-fitting implementation with Lasso and ridge regression is comparable to the oracle; 
    the projection shortcut is slightly inferior but still 
    improves upon the Lasso.}
    \label{fig:real_synthetic}
\end{figure}

The patterns across different values of $\sigma_z$ are similar. 
Among the baselines, the Lasso with minimal CV error 
is more accurate than \texttt{1se} (see the blue boxplots in 
the last two columns versus in
the first two columns), while the 
oracle modular regression always achieves smaller prediction MSE 
than the Lasso with the corresponding CV option (grey).

Our modular regression method performs 
reasonably well when the conditional mean functions $\mu_x(\cdot)$ and $\mu_y(\cdot)$ 
are estimated. 
When they are estimation by (i) ridge projection shortcut, 
modular regression with both \texttt{min} (see the last column) 
and \texttt{1se} (see the second column) improves upon 
the Lasso, although it is sometimes less accurate than the oracle. 
This shows the projection shortcut in Section~\ref{subsec:proj} 
is a reliable alternative to the more computationally intensive cross-fitting approach. 
When they are estimated by (ii) cross fitting (see the first and third columns), 
our method improves upon the original Lasso with both 
Lasso (green) and ridge regression (yellow)
as the regressor, and the performance is comparable to the oracle. 
The Lasso as the regressor is slightly better than the ridge regression; this 
may be due to the true sparse linear data generating process. 
In general, for the (ii) implementation, 
cross-validation with \texttt{min} CV error achieves smaller test MSE 
than \texttt{1se}, while the latter sees larger improvement upon the Lasso.

\section{Discussion}

In this work, we propose the 
modular regression framework and show that 
conditional independence structures between variables can be used to decompose statistical tasks into sub-tasks. 
We develop decomposition techniques for linear models 
in both low and high dimensional settings. 
We show that such decomposition can 
improve efficiency and 
allow to combine different datasets 
for a single estimation 
or prediction task with rigorous statistical guarantees. 
In practice, the conditional independence conditions for 
decomposition may be violated, leading to a 
bias-variance trade-off. 
We also develop a robust implementation of our method 
to adapt to potentially more complicated dependence structures. 

Looking ahead, statistical tasks that allow for 
decomposition may go well beyond the cases studied in this work, 
and the assumptions for decomposition may vary with the nature 
of the tasks. For instance, 
in high dimensional graphical models, 
the edges between variables that indicate independence may be sparse,  
and conditional independence may not hold exactly for 
two disjoint sets (e.g., our $X\in \RR^{p_x}$ and $Z\in \RR^{p_z}$). 
In biological applications, 
there may exist several paths from $X$ to $Y$ rather than 
being fully
mediated by $Z$. 
The dependence among the features and the response 
may still be sparse, but 
additional efforts are needed in order to  
leverage potential independence structures.
In addition, 
the data fusion technique may be further 
extended:  
In practice, one may have access to many auxiliary datasets 
that cover  different sets of features. Developing a framework 
to systematically combine multiple datasets 
may also be an interesting direction. 
Finally, our theoretical results in high dimensions 
only cover the sparse setting 
($p\gg n\gg \log p$ with $s\leq \sqrt{n}$). 
In this regime, both the LASSO and the modular estimator 
are consistent, and 
the role 
of conditional independence is to lower the variance 
of residuals. 
Its role in modern asymptotic regimes such as 
$p/n\to \rho$ for some fixed $\rho>0$ remains 
an interesting  question. 

\bibliographystyle{apalike}
\bibliography{references}

\newpage 
\appendix 


\section{Deferred theoretical results} 
\label{app:glm}

This section presents the omitted theorem and proof for 
modular generalized linear regression in Section~\ref{subsec:glm}. 

\begin{assumption}\label{assump:glm}
$\theta^*\in \Theta$ for a compact set $\Theta$. 
Also, $\ell(x,y,\theta)$ is three-times-differentiable and convex in $\theta\in \Theta$. 
Let $\nabla_\theta h(x,\theta)\in \RR^{p_x}$ 
denote the first-order derivative of $h(x,\theta)$ in $\theta$, 
and similarly $\nabla_\theta^2 h(x,\theta)\in \RR^{p_x\times p_x}$ the second-order, 
and $\nabla_\theta^3 h(x,\theta)\in \RR^{p_x\times p_x\times p_x}$ the third-order ones.  
There exists some $L\colon \cX\to \RR$
such that $\EE[L(X)]<\infty$, and
$\|\nabla_\theta h(x,\theta) - \nabla_\theta h(x,\theta')\|\leq L(x) \cdot \|\theta-\theta'\|$ 
for any $\theta,\theta'\in \Theta$. 
Also, there exists some $m\colon \cX\times\Theta\to \RR$ 
such that $\EE[m(X)]<\infty$ and 
$\|\nabla_\theta^3 h(x,\theta) - \nabla_\theta^3 h(x,\theta')\|_{\normalfont \textrm{Fro}}
\leq m(x) \cdot \|\theta-\theta'\|$ for any $\theta,\theta'\in \Theta$. 
\end{assumption}

\begin{theorem}
    \label{thm:glm}
Suppose Assumptions~\ref{assump:ci} and~\ref{assump:glm} hold, 
and $\|\hat\mu_{x}^{(k)}-\mu_x\|_{L_2(\PP_Z)}\|\hat\mu_{y}^{(k)}-\mu_y\|_{L_2(\PP_Z)}=o_P(1/\sqrt{n})$ 
for $k=1,2$. 
Let $\hat\theta_n^\mod$ be the unique minimizer of~\eqref{eq:def_mod_glm}, 
and $\theta^*$ be the unique minimizer of $\EE\big[ \ell(X_i,Y_i,\theta^*) \big]$.
Let $\mu_x(\cdot) = \EE[f(X_i)\given Z_i=\cdot]$ and $\mu_y(\cdot) = \EE[Y_i\given Z_i=\cdot]$. 
Define the influence function  
$\phi(x,y,z) = \mu_x(z)g(y) + f(x) \mu_y(z) - \mu_x(z)\mu_y(z) + \nabla_\theta h(x,\theta^*)$.
Then $\sqrt{n}(\hat\theta_n^\mod - \theta^*)\stackrel{d}{\to}N(0,\Cov(\phi(X_i,Y_i,Z_i)))$ 
and $\sqrt{n}(\hat\theta_n^\mod - \theta^*) = \frac{1}{\sqrt{n}}\sum_{i=1}^n \phi(X_i,Y_i,Z_i) + o_P(1/\sqrt{n})$ as $n\to \infty$. 
Furthermore, $\phi(X_i,Y_i,Z_i)$ is the efficient influence function 
for estimating $\theta^*$ under the model space obeying Assumption~\ref{assump:ci}.
\end{theorem}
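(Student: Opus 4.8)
The plan is to treat $\hat\theta_n^\mod$ as the root of a convex estimating equation and mimic the argument behind Theorem~\ref{thm:lowd}, the new ingredient being a Taylor expansion to handle the nonlinear $h$. Write $\psi(x,y,z;\mu_x,\mu_y):=x\,\mu_y(z)+\mu_x(z)\,y-\mu_x(z)\mu_y(z)$ for the modular cross-term integrand (with $x,y$ replaced by the transformed variables $f(x),g(y)$ in the exponential-regression variant), so that $\hat C_\lm=\frac1n\sum_i\psi(X_i,Y_i,Z_i;\hat\mu_x^{(k_i)},\hat\mu_y^{(k_i)})$ where $k_i$ is the cross-fitting fold of observation $i$. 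By the first-order optimality condition for the convex program~\eqref{eq:def_mod_glm}, $\hat\theta_n^\mod$ solves $\hat M_n(\theta):=\frac1n\sum_i\nabla_\theta h(X_i,\theta)+\hat C_\lm=0$, whose population analogue $M(\theta):=\EE[\nabla_\theta h(X,\theta)]+C$ with $C=\EE[XY]$ vanishes at the unique point $\theta^*$ by~\eqref{def:theta*_glm}.

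First I would prove consistency, $\hat\theta_n^\mod\stackrel{P}{\to}\theta^*$. This rests on (i) $\hat C_\lm\stackrel{P}{\to}C$, which uses Assumption~\ref{assump:ci} — under which $C=\EE[X\mu_y(Z)]+\EE[\mu_x(Z)Y]-\EE[\mu_x(Z)\mu_y(Z)]$ — together with cross-fitting and the consistency of $\hat\mu_x^{(k)},\hat\mu_y^{(k)}$; and (ii) a standard convex $Z$-estimation argument: $\hat M_n$ is the gradient of a convex objective converging pointwise, hence uniformly on the compact $\Theta$ (Assumption~\ref{assump:glm}), to the gradient of a strictly convex limit whose only root is $\theta^*$. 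Convexity of $h$ is what makes this clean despite $\hat C_\lm$ being data-dependent. Next, a second-order Taylor expansion of $\hat M_n$ around $\theta^*$ gives $0=\hat M_n(\theta^*)+\bar H_n(\hat\theta_n^\mod-\theta^*)$ for a mean-value Hessian $\bar H_n=\frac1n\sum_i\nabla_\theta^2 h(X_i,\tilde\theta_n)$; by consistency and the Lipschitz control on $\nabla_\theta^3 h$ in Assumption~\ref{assump:glm}, $\bar H_n\stackrel{P}{\to}H:=\EE[\nabla_\theta^2 h(X,\theta^*)]\succ0$, so $\sqrt n(\hat\theta_n^\mod-\theta^*)=-H^{-1}\sqrt n\,\hat M_n(\theta^*)+o_P(1)$.

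It then remains to show $\sqrt n\,\hat M_n(\theta^*)=\frac1{\sqrt n}\sum_i\{\nabla_\theta h(X_i,\theta^*)+\psi(X_i,Y_i,Z_i;\mu_x,\mu_y)\}+o_P(1)$. For this I would use the usual cross-fitting decomposition of $\hat C_\lm-\frac1n\sum_i\psi(\cdot;\mu_x,\mu_y)$ into an empirical-process remainder plus a bias term. Conditionally on the auxiliary fold, $\psi(\cdot;\hat\mu)-\psi(\cdot;\mu)$ is an average of independent, mean-subtracted, bounded-variance terms whose conditional variance is $o_P(1)$, so this remainder is $o_P(n^{-1/2})$ — sample splitting substituting for a Donsker condition. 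The bias term $\EE[\psi(X,Y,Z;\hat\mu_x,\hat\mu_y)-\psi(X,Y,Z;\mu_x,\mu_y)\mid\hat\mu_x,\hat\mu_y]$ collapses, by $\EE[X\mid Z]=\mu_x(Z)$, $\EE[Y\mid Z]=\mu_y(Z)$, Assumption~\ref{assump:ci}, and the tower property, to $-\EE[(\hat\mu_x(Z)-\mu_x(Z))(\hat\mu_y(Z)-\mu_y(Z))]$, which is $o_P(n^{-1/2})$ by Cauchy--Schwarz and the product-rate hypothesis $\|\hat\mu_x^{(k)}-\mu_x\|_{L_2(\PP_Z)}\|\hat\mu_y^{(k)}-\mu_y\|_{L_2(\PP_Z)}=o_P(1/\sqrt n)$. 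Combining, $\sqrt n(\hat\theta_n^\mod-\theta^*)$ is asymptotically linear with influence function $\phi(x,y,z)=-H^{-1}\{\nabla_\theta h(x,\theta^*)+\mu_x(z)g(y)+f(x)\mu_y(z)-\mu_x(z)\mu_y(z)\}$ — the influence function displayed in the statement, up to the leading factor $H^{-1}$, and exactly the $\phi^\mod$ of Theorem~\ref{thm:lowd} in the OLS special case — and the i.i.d.\ CLT gives the stated Gaussian limit with covariance $\Cov(\phi(X_i,Y_i,Z_i))$.

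For semiparametric efficiency under the model $\mathcal{P}=\{\PP:X\indep Y\mid Z\}$: since the density factors as $p(z)p(x\mid z)p(y\mid z)$, the tangent space $\mathcal{T}$ is the orthogonal sum of mean-zero scores in $L_2(\PP_Z)$, $L_2(\PP_{X\mid Z})$, $L_2(\PP_{Y\mid Z})$. I would (a) verify pathwise differentiability of $\PP\mapsto\theta(\PP)$ by differentiating $M_\PP(\theta(\PP))=0$ along regular submodels, exhibiting $\phi$ as a gradient, and (b) check that $\phi$ itself lies in $\mathcal{T}$: the pieces $\mu_y(z)(f(x)-\mu_x(z))$ and $\mu_x(z)(g(y)-\mu_y(z))$ are conditionally mean-zero given $Z$ and hence sit in $L_2(\PP_{X\mid Z})$ and $L_2(\PP_{Y\mid Z})$, while the $(X,Z)$-measurable remainder decomposes against $L_2(\PP_Z)$ and $L_2(\PP_{X\mid Z})$; a gradient lying in the tangent space is the canonical gradient, i.e.\ the efficient influence function. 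The main obstacle is twofold: technically, controlling the Taylor remainder and upgrading the Hessian convergence $\bar H_n\stackrel{P}{\to}H$ to hold uniformly under only the Lipschitz-$\nabla_\theta^3 h$ condition (this is the genuinely new work relative to the OLS proof, where the analogue of $\bar H_n$ is already an i.i.d.\ average $\frac1n\sum X_iX_i^\top$); and conceptually, the clean identification of the tangent space under $X\indep Y\mid Z$ and the verification that $\phi$ is its projection, which is routine for readers versed in semiparametrics but is where the efficiency half of the statement is actually earned.
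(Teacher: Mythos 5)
Your proposal is correct and follows essentially the same route as the paper's proof: consistency via uniform convergence of the convex estimating equation on the compact $\Theta$, a second-order Taylor expansion with the Lipschitz-$\nabla_\theta^3 h$ condition controlling the mean-value Hessian, the cross-fitting decomposition of $\hat C_\lm$ into conditionally mean-zero fluctuations plus a bias term killed by Cauchy--Schwarz and the product-rate hypothesis, and the tangent-space argument $\cT=\cT_1\oplus\cT_2\oplus\cT_3$ for efficiency (the paper projects the ML influence function onto $\cT$, whereas you verify that $\phi$ is a gradient lying in $\cT$ --- these are equivalent). You are also right to flag the missing $\EE[\nabla_\theta^2 h(X,\theta^*)]^{-1}$ prefactor: the paper's own derivation produces it, and its omission in the displayed $\phi$ of the theorem statement appears to be a typo.
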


\begin{proof}[Proof of Theorem~\ref{thm:glm}]
The proof idea is similar to 
that of~\citet[Proposition E.2 and Theorem 3.11]{jin2021attribute}. 
Write $\hat{\ell}(X_i,Y_i,Z_i,\theta)= \big[\hat{\mu}_x (Z_i) g(Y_i) + f(X_i) \hat{\mu}_y (Z_i) - \hat{\mu}_x (Z_i)\hat{\mu}_y (Z_i)\big]^\top\theta + h(X_i,\theta)$. 
We first show that $\hat\theta_n^\mod\stackrel{P}{\to} \theta^*$ as $n\to \infty$. 
Since the score function $s$ is differentiable and convex in $\theta$, 
equivalently, $\hat\theta_n^\mod\in \Theta$ is the unique solution to
\$
\hat{L}_n^\mod(\theta)&:=  \frac{1}{n}\sum_{i=1}^n \big[\hat{\mu}_x (Z_i) g(Y_i) + f(X_i) \hat{\mu}_y (Z_i) - \hat{\mu}_x (Z_i)\hat{\mu}_y (Z_i)+ \nabla_\theta h(X_i,\theta)\big] 
= 0,
\$
while the population parameter $\theta^*$ is the unique solution to 
\$
L(\theta):=\EE\big[ \nabla_\theta \ell(X_i,Y_i,\theta)\big]=
\EE\big[ g(Y_i)f(X_i) +\nabla_\theta h(X_i,\theta)\big] = 0.
\$
We also define the empirical score at any $\theta\in \Theta$ as 
\$
\hat{L}_n(\theta) = \frac{1}{n}\sum_{i=1}^n \big[ g(Y_i)f(X_i) + \nabla_\theta h(X_i,\theta)\big].
\$
Then for any fixed $\theta\in \Theta$, we note that 
\$ 
&\hat{L}_n^\mod(\theta) - \hat{L}_n(\theta) 
= \frac{1}{n}\sum_{i=1}^n \big[\hat{\mu}_x (Z_i) g(Y_i) + f(X_i) \hat{\mu}_y (Z_i) - \hat{\mu}_x (Z_i)\hat{\mu}_y (Z_i) - g(Y_i)f(X_i)\big] \\
&= \frac{1}{n}\sum_{k=1}^2 \sum_{i\in \cI_k}\big[\hat{\mu}^{(k)}_x (Z_i) g(Y_i) + f(X_i) \hat{\mu}^{(k)}_y (Z_i) - \hat{\mu}^{(k)}_x (Z_i)\hat{\mu}^{(k)}_y (Z_i) - g(Y_i)f(X_i)\big].
\$
Similar to the proof of Theorem~\ref{thm:lowd} (see equation~\eqref{eq:lowd_err}), 
under the given 
consistency condition for $\hat\mu_x^{(k)}$ and $\hat\mu_y^{(k)}$, 
we know $\sup_{\theta\in\Theta} \big|\hat{L}_n^\mod(\theta) - \hat{L}_n(\theta)\big|= o_P(1)$. Furthermore, the law of large numbers implies 
$\frac{1}{n}\sum_{i=1}^n |\hat{L}_n(\theta)-L(\theta)|=o_P(1)$ 
for any fixed $\theta\in\Theta$. 
Therefore, $ \big|\hat{L}_n^\mod(\theta) -  {L} (\theta)\big| = o_P(1)$ 
for any fixed $\theta\in \Theta$. 

Since $\Theta$ is compact and $\nabla_\theta h(x,\theta)$ 
is $L(x)$-Lipschitz in $\theta\in\Theta$ for any $x\in \cX$, we know that 
for any $\theta,\theta'\in \Theta$, by the triangular inequality, 
\$
\big| \hat{L}_n(\theta) - \hat{L}_n(\theta') \big| \leq 
\frac{1}{n}\sum_{i=1}^n L(X_i) \cdot \|\theta- \theta'\|, 
\$
where $\frac{1}{n}\sum_{i=1}^n L(X_i) =O_P(1)$ since $\EE[L(X)]<\infty$. 
Thus, for any fixed $\epsilon>0$, 
there exists a finite subset $\{\theta_1,\dots,\theta_{N_\epsilon}\}$ of $\Theta$ 
such that $\sup_\theta\inf_{1\leq i\leq N_\epsilon}|\hat{L}_n(\theta)-\hat{L}_n(\theta_i)|\leq \epsilon$. 
Similar arguments applied to $L(\theta)$, together with the convergence, 
implies $\sup_{\theta\in\Theta} |\hat{L}_n(\theta) - L(\theta)| = o_P(1)$. 
Thus we have $\sup_{\theta\in\Theta}|\hat{L}_n^\mod(\theta)-L(\theta)|=o_P(1)$. 
On the other hand, the compactness of $\Theta$ and 
the uniqueness of $\theta^*$ as a solution to 
$L(\theta)=0$ implies the well-separatedness condition 
(c.f.~\citet[Theorem 5.9]{van2000asymptotic}): 
for any $\epsilon>0$, one could find some $\delta>0$ such that 
$\inf_{\theta\colon \|\theta-\theta^*\|\geq \delta} \big|L(\theta)\big|
> 2\epsilon$. 
Since $\sup_{\theta\in\Theta}|\hat{L}_n^\mod(\theta)-L(\theta)|=o_P(1)$,  
for any fixed $\epsilon>0$, one could find some $\delta>0$ 
such that for $n$ sufficiently large, 
$\PP \big( \inf_{\theta\colon \|\theta-\theta^*\|\geq \delta} \frac{1}{n}|\hat{L}_n^\mod(\theta)|
> \epsilon \big)\geq 1-\epsilon$, i.e., 
$\PP(\|\hat\theta_n^\mod - \theta^*\|\leq \delta)\geq 1-\epsilon$. 
This proves $\hat\theta_n^\mod - \theta^* = o_P(1)$. 

We now proceed to show the asymptotic normality of $\hat\theta_n^\mod$. 
Recall that $\hat{L}_n^\mod(\hat\theta_n^\mod)=0$. Taylor expansion 
around $\theta^*$ then gives  
\#\label{eq:glm_exp}
0 = \hat{L}_n^\mod(\theta^*) + \nabla_\theta \hat{L}_n^\mod(\theta^*)   (\hat\theta_n^\mod - \theta^*) + 1/2\cdot (\hat\theta_n^\mod - \theta^*)\nabla_\theta^2 \hat{L}_n^\mod(\tilde\theta_n)(\hat\theta_n^\mod - \theta^*)
\#
for some $\tilde\theta_n$ that lies on the segment between $\hat\theta_n^\mod$ 
and $\theta^*$, which implies $\tilde\theta_n=\theta^*+o_P(1)$. 
Here $\nabla_\theta \hat{L}_n^\mod \in \RR^{p_x\times p_x}$ 
and $\nabla_\theta^2 \hat{L}_n^\mod \in \RR^{p_x\times p_x\times p_x}$ 
are second and third order derivative of $\sum_{i=1}^n \hat{\ell}(X_i,Y_i,Z_i,\theta)$. 
By definition 
\$
\nabla_\theta \hat{L}_n^\mod(\theta) = \frac{1}{n}\sum_{i=1}^n \nabla_\theta^2 h(X_i,\theta),
\quad 
\nabla_\theta^2 \hat{L}_n^\mod(\theta) = \frac{1}{n}\sum_{i=1}^n \nabla_\theta^3 h(X_i,\theta). 
\$
The law of large numbers imply 
$\nabla_\theta \hat{L}_n^\mod(\theta^*) = \EE[\nabla_\theta^2 h(X_i,\theta^*)] + o_P(1)$ 
where $o_P(1)$ means a matrix whose every entry is $o_P(1)$, 
hence $\nabla_\theta \hat{L}_n^\mod(\theta^*)$ is invertible for sufficiently large $n$. 
Also, by the Lipschitz condition of $\nabla_\theta^3 h(x,\theta)$ 
and the triangular inequality, we have 
\$
\big|\nabla_\theta^2 \hat{L}_n^\mod(\theta^*) - \nabla_\theta^2 \hat{L}_n^\mod(\tilde\theta_n) \big|
\leq \frac{1}{n}\sum_{i=1}^n m(X_i) \|\theta^*-\tilde\theta_n\| = o_P(1)
\$
since $\EE[m(X_i)]<\infty$. Hence $\nabla_\theta^2\hat{L}_n^\mod(\tilde\theta_n)=O_P(1)$. 
Returning to~\eqref{eq:glm_exp}, we have 
\$
0 &= \hat{L}_n^\mod(\theta^*) + \big\{ \EE[\nabla_\theta^2 h(X_i,\theta^*)] + o_P(1)\big\} 
(\hat\theta_n^\mod - \theta^*) + O_P\big(  \|\hat\theta_n^\mod -\theta^*\|^2 \big) \\ 
&= \hat{L}_n^\mod(\theta^*) + \big\{ \EE[\nabla_\theta^2 h(X_i,\theta^*)] + o_P(1)\big\} 
(\hat\theta_n^\mod - \theta^*), 
\$
and thus 
\#\label{eq:glm_1}
\hat\theta_n^\mod - \theta^* = \EE\big[\nabla_\theta^2 h(X_i,\theta^*)\big] ^{-1} \hat{L}_n^\mod(\theta^*) + o_P\big(\hat{L}_n^\mod(\theta^*)\big).
\#
Finally, we note that under the consitency condition of 
$\hat\mu_x^{(k)}$ and $\hat\mu_y^{(k)}$, 
similar to the proof of Theorem~\ref{thm:lowd},  
one could show that 
\$
\hat{L}_n^\mod(\theta^*) 
&=  \frac{1}{n}\sum_{i=1}^n \big[ {\mu}_x (Z_i) g(Y_i) + f(X_i) {\mu}_y (Z_i) - {\mu}_x (Z_i) {\mu}_y (Z_i)+ \nabla_\theta h(X_i,\theta^*)\big] + o_P(1/\sqrt{n}).
\$
Because $X\indep Y\given Z$, we have 
$\EE\big[ {\mu}_x (Z_i) g(Y_i) + f(X_i) {\mu}_y (Z_i) - {\mu}_x (Z_i) {\mu}_y (Z_i)+ \nabla_\theta h(X_i,\theta^*)\big] = \EE[\nabla_\theta s(X_i,Y_i,\theta^*)]=0$, hence 
$\hat{L}_n^\mod(\theta^*) = O_P(1/\sqrt{n})$, which, combined with~\eqref{eq:glm_1}, implies 
\$
&\hat\theta_n^\mod - \theta^* 
=   \EE\big[\nabla_\theta^2 h(X_i,\theta^*)\big] ^{-1} \hat{L}_n^\mod(\theta^*) + o_P(1/\sqrt{n}) \\ 
&=  \frac{1}{n}\sum_{i=1}^n\EE\big[\nabla_\theta^2 h(X_i,\theta^*)\big] ^{-1} \big[ {\mu}_x (Z_i) g(Y_i) + f(X_i) {\mu}_y (Z_i) - {\mu}_x (Z_i) {\mu}_y (Z_i)+ \nabla_\theta h(X_i,\theta^*)\big] + o_P(1/\sqrt{n}).
\$
We thus complete the proof of the asymptotic expansion of $\hat\theta_n^\mod$ in Theorem~\ref{thm:glm}. 

Finally, the efficient influence function for estimating $\theta^*$ among 
all models obeying $X\indep Y\given Z$ can be obtained by, similar to Theorem~\ref{thm:lowd}, 
projecting the influence function of $\hat\theta_n^\ml$ onto the 
tangent space $\cT=\cT_1\oplus \cT_2 \oplus \cT_3$ 
where $\cT_j$ is defined as in~\eqref{eq:tj_space}. 
Standard M-estimator theory~\citep{van2000asymptotic} gives 
$\hat\theta_n^\ml - \theta^* = \frac{1}{n}\sum_{i=1}^n \phi^\ml(X_i,Y_i)+o_P(1/\sqrt{n})$ 
where 
\$
\phi^\ml (x,y) = \EE\big[\nabla_\theta^2 h(X_i,\theta^*)\big] ^{-1} \big[ g(Y_i)f(X_i) + \nabla_\theta h(X_i,\theta^*) \big].
\$
The projection follows exactly the same idea as Theorem~\ref{thm:glm} 
and one could see $\phi$ is the efficient influence function. 
We thus complete the proof of Theorem~\ref{thm:glm}. 
\end{proof}

\section{Deferred discussion}

\subsection{Deferred discussion for Remark~\ref{rem:finite_sample}}
\label{app:detail_finite_sample}
Below, we provide a (simplified) finite-sample  analysis 
of the two methods: (i) OLS, and (ii) modular OLS, where 
$Y\given Z$ and $X\given Z$ are estimated using OLS, 
under a joint Gaussian distribution. This discussion elaborates on 
the computation details for Remark~\ref{rem:finite_sample}.
%

To be more specific, the cross-term $C=\EE[XY]$ is estimated via $\hat{C}_\ols:=\frac{1}{n}\sum_{i=1}^nX_iY_i$ 
or $\hat{C}_\mod = \frac{1}{n}\sum_{i=1}^n Y_i \hat\mu_x(Z_i) + X_i \hat\mu_y(Z_i) - \hat\mu_x(Z_i)\hat\mu_y(Z_i)$. 
Hereafter, we override the notations and use $X,Y,Z$ to denote data vectors in $\RR^{n}$. 
With OLS regression, 
we have $\hat\mu_x(z)=\hat\theta_x^\top z$, 
where $\hat\theta_x = (Z^\top Z)^{-1} Z^\top X$ 
is the fitted OLS coefficient of $X$ on $Z$,  
and similarly $\hat\mu_y(z) = \hat\theta_y^\top z$, 
where $\hat\theta_y = (Z^\top Z)^{-1}  Z^\top Y$. 
Then, our modular estimator for the cross-term 
$\hat{C}_{\lm}$  is 
\$
&\hat{C}_{\mod}
= \frac{1}{n}\sum_{i=1}^n Y_i Z_i^\top \hat\theta_x + X_iZ_i^\top \hat\theta_y - Z_i^\top \hat\theta_x Z_i^\top \hat\theta_y\\ 
&= \frac{1}{n} \Big(Y^\top Z\hat\theta_x + X^\top Z \hat\theta_y - \hat\theta_x^\top Z^\top Z \hat\theta_y \Big)\\
&= \frac{1}{n} \Big(Y^\top Z(Z^\top Z)^{-1}  Z^\top X + X^\top Z (Z^\top Z)^{-1}  Z^\top Y - X^\top Z(Z^\top Z)^{-1}   Z^\top Z (Z^\top Z)^{-1}  Z^\top Y \Big) \\ 
&=  \frac{1}{n} \Big(Y^\top Z(Z^\top Z)^{-1}  Z^\top X + X^\top Z (Z^\top Z)^{-1}  Z^\top Y - X^\top Z(Z^\top Z)^{-1}  Z^\top Y \Big) \\ 
&= \frac{1}{n}  Y^\top Z(Z^\top Z)^{-1}  Z^\top X ,
\$
where the third row plugs in the definition of 
$\hat\theta_x$ and $\hat\theta_y$, and the last line 
uses the fact that $Y^\top Z(Z^\top Z)^{-1}  Z^\top X = X^\top Z(Z^\top Z)^{-1}  Z^\top Y$. 
The two estimators are 
$\hat\theta_n^\ols=(X^\top X/n)^{-1}\hat{C}_\ols$, 
and $\hat\theta_n^\mod = (X^\top X/n)^{-1}\hat{C}_\mod$, respectively. 

Suppose $X,Y,Z$ are all one-dimensional and 
jointly Gaussian. 
Then, there exists a decomposition 
$X=\alpha Z+ \epsilon_x$ and $Y=\beta Z+\epsilon_y$, 
where conditional on $Z$, 
$\epsilon_x$ and $\epsilon_y$ are normal with 
mean zero 
and mutually independent 
under the assumption 
$X\indep Y\given Z$.  
We further assume $\epsilon_x\given Z\sim N(0,\sigma_x^2)$ and $\epsilon_y\given Z\sim N(0,\sigma_y^2)$. 
Then, the population OLS parameter is 
$\theta^* = \alpha\beta/(\alpha^2+\sigma_x^2)$. Also, 
\$
\EE[\hat\theta_n^\mod] = \EE\bigg[\EE\bigg[ \frac{Y^\top Z \cdot Z^\top X}{X^\top X \cdot Z^\top Z} \bigggiven X,Z\bigg] \bigg]
= \beta \EE\bigg[  \frac{Z^\top Z \cdot Z^\top X}{X^\top X \cdot Z^\top Z}  \bigg]
= \beta \EE\bigg[ \EE\bigg[  \frac{ Z^\top X}{X^\top X } \bigggiven X\bigg] \bigg],
\$
where we repeatedly used the tower property. 
Using the joint Gaussianity of $(X,Z)$, 
we know that $\EE[Z\given X]=\frac{\alpha}{\alpha^2+\sigma_x^2}X$, which gives 
$\EE[\hat\theta_n^\mod] = \theta^*$, hence 
$\hat\theta_n^\mod$ is unbiased. 
Similarly one can show $\hat\theta_n^\ols$ is unbiased. 

We now study the variances of the two estimators. 
By the decomposition of variance, 
\$
\Var(\hat\theta_n^\mod) 
&= \EE[\Var(\hat\theta_n^\mod\given X,Z)]
+ \Var(\EE[\hat\theta_n^\mod\given X,Z]) \\ 
&= \EE\bigg[ \frac{(Z^\top X)^2 \sum_{i=1}^n Z_i^2 \sigma_y^2 }{(X^\top X \cdot Z^\top Z)^2}\bigg]
+ \Var\bigg( \frac{\beta Z^\top X}{X^\top X}  \bigg)\\
&=\sigma_y^2 \EE\bigg[ \frac{(Z^\top X) ^2 }{(X^\top X)^2 \cdot Z^\top Z }\bigg]
+ \beta^2 \Var\bigg( \frac{Z^\top X}{X^\top X}  \bigg).
\$
Similarly, using the fact that $Y\indep X\given Z$,
\$
\Var(\hat\theta_n^\ols) 
&= \EE[\Var(\hat\theta_n^\ols\given X,Z)]
+ \Var(\EE[\hat\theta_n^\ols\given X,Z]) \\ 
&= \EE\bigg[ \frac{ \sum_{i=1}^n X_i^2 \sigma_y^2 }{(X^\top X)^2}\bigg]
+ \Var\bigg( \frac{\beta Z^\top X}{X^\top X}  \bigg) =\sigma_y^2 \EE\bigg[ \frac{1}{X^\top X  }\bigg]
+ \beta^2 \Var\bigg( \frac{Z^\top X}{X^\top X}  \bigg).
\$
This gives (using $X=\alpha Z+ \epsilon_x$)
\$
\Var(\hat\theta_n^\ols) -\Var(\hat\theta_n^\mod) 
&= \sigma_y^2 \EE\bigg[ \frac{ Z^\top Z \cdot X^\top X - (Z^\top X) ^2 }{(X^\top X)^2 \cdot Z^\top Z }\bigg]\\
&= \sigma_y^2 \EE\bigg[ \frac{ Z^\top Z \cdot \epsilon_x^\top \epsilon_x - (Z^\top \epsilon_x) ^2 }{(X^\top X)^2 \cdot Z^\top Z }\bigg] \geq 0
\$
by the Cauchy-Schwarz inequality 
(it is a strict inequality as long as $\epsilon_x$ is not deterministic given $Z$). 
That is, in finite sample, 
both estimators are unbiased, while 
the modular estimator is non-inferior to OLS in terms of variance.  

\subsection{Deferred discussion on alternative estimators}
\label{app:discuss_alternative}

In this part, we elaborate our discussion on the alternative estimators in Remark~\ref{rem:alternative}. {In the following, we consider using powerful machine learning algorithms (black-boxes)  to estimate the nuisance components $\mu_x$ and $\mu_y$.
They may achieve a smaller MSE than the OLS estimator, potentially at the cost of some bias. 
The semiparametric efficiency in Theorem 2 of our manuscript does not account for such cases. In the following, we conduct additional experiments to explore this regime.}



Surprisingly, we found that  (a) outcome regression  can achieve a smaller MSE than the OLS, and that (b) orthogonal regression is often less accurate.  
Nonetheless, our modular estimator often achieves the smallest (or nearly smallest) MSE. 

In Figure~\ref{fig:simu_1}, we show results for OR (standing for outcome regression), PS (standing for orthogonal regression), OLS, and modular regression under several simple data-generating processes, where the true regression functions only involve $\leq 4$ entries. 
We vary the signal-to-noise ratio 
and the smoothness of the underlying regression functions across settings. 
We label the first row as (a-c), and the second as (d-f).  
Settings (a) and (b) has $Z\in \RR^{p_{20}}$, and the true regression functions are a combination of indicator functions and smooth exponential functions. 
Setting (c) involves a larger number of combinations of polynomial and exponential functions for $Z\in\RR^{20}$. 
Settings (d) and (e) adds a term $\sqrt{n}Z_3Z_4$ to fixed regression functions that only involve the first 4 variables in $Z\in\RR^{100}$. 
Setting (f) uses a similar model as (a-c) but kept $Z\in \RR^{100}$. 
(d-f) has  larger noise level. We omit details on the regression functions for brevity. 

\begin{figure}[h]
\centering
\begin{subfigure}[t]{0.32\linewidth}
    \centering
    \includegraphics[width=\linewidth]{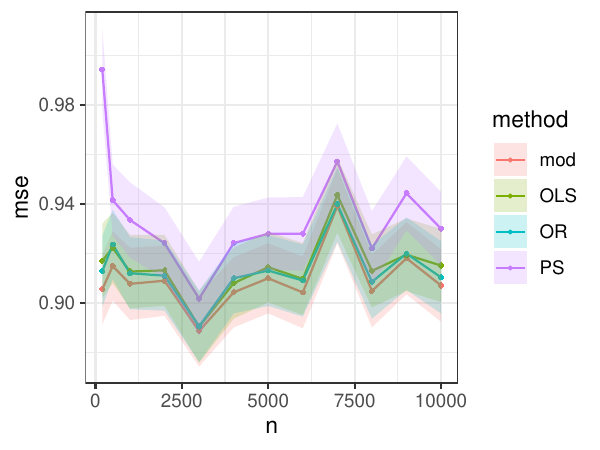} 
\end{subfigure} 
\begin{subfigure}[t]{0.32\linewidth}
    \centering
    \includegraphics[width=\linewidth]{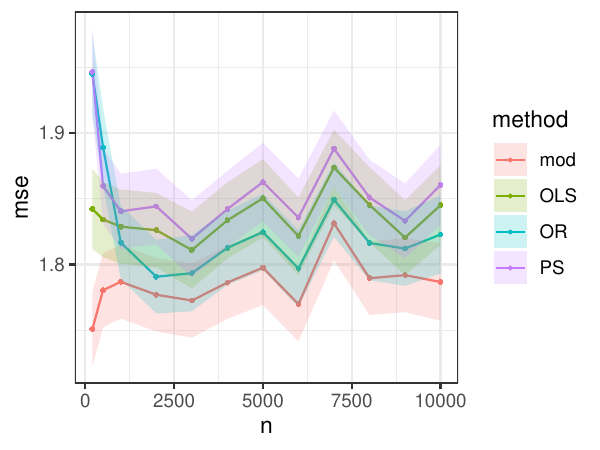}  
\end{subfigure} 
\begin{subfigure}[t]{0.32\linewidth}
    \centering
    \includegraphics[width=\linewidth]{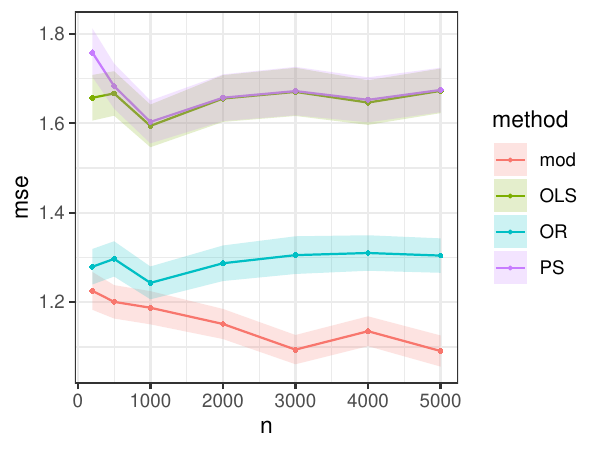}  
\end{subfigure} 
\begin{subfigure}[t]{0.32\linewidth}
    \centering
    \includegraphics[width=\linewidth]{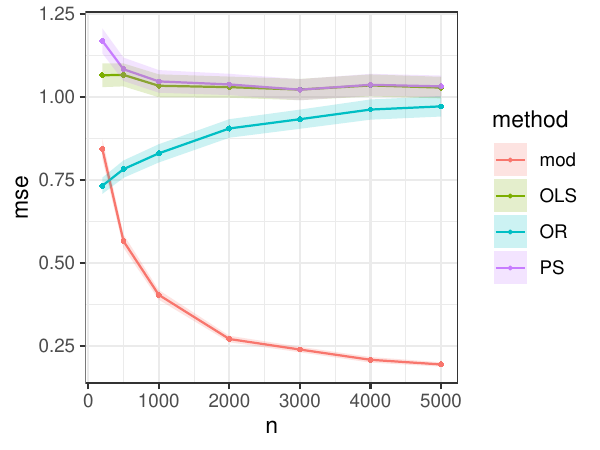} 
\end{subfigure} 
\begin{subfigure}[t]{0.32\linewidth}
    \centering
    \includegraphics[width=\linewidth]{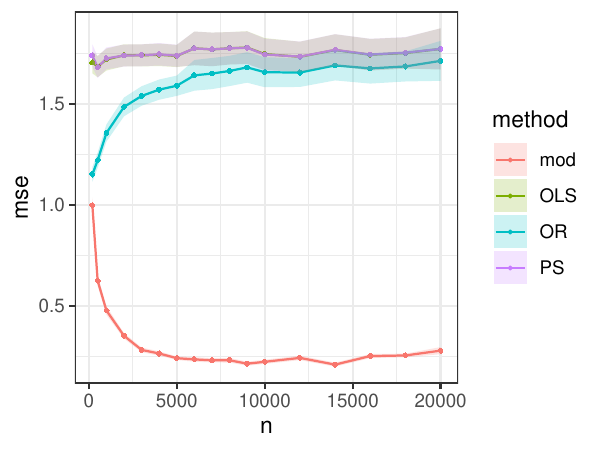}  
\end{subfigure} 
\begin{subfigure}[t]{0.32\linewidth}
    \centering
    \includegraphics[width=\linewidth]{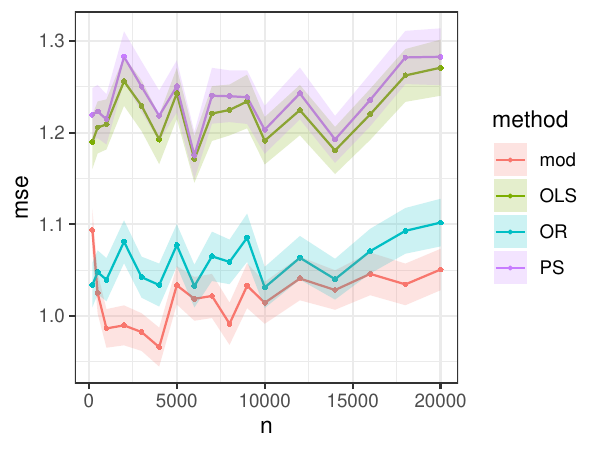}  
\end{subfigure} 
\caption{Root-mean-squared error scaled by $\sqrt{n}$, for all methods under different sample sizes.}
\label{fig:simu_1}
\end{figure}

We see that OR often performs better in terms of MSE; we conjecture that this is because the tuned ML regressor achieves a good bias-variance tradeoff. 
The PS approach is usually not very accurate, but surprisingly comparable to OLS. 
A potential explanation is that we are in the regime of ``undersmoothing'', i.e., the ML estimator has extremely small bias but larger variance, so that it is still possible to achieve good performance for both OR and PS approaches. 
Finally, however, the modular estimator always performs the best across all settings;
we conjecture that this is because it takes the best of the two worlds, powerful prediction machines and 
rigorous debiasing ideas from semiparametric statistics.

\subsection{Discussion on the estimation of $\hat{J}$}
\label{app:subsec_conditions}

Here, we discuss the conditions for estimating $\hat{J}$ and its compatibility with Assumption~\ref{assump:consistency}. 

Consistent estimation of $J$ require strong conditions such as the beta-min conditions or the irrepresentable conditions~\cite{zhao2006model}. However, it is also known that Lasso selects a superset of ``relevant'' features, i.e., entries whose coefficient is sufficently bounded away from zero (roughly, these are coefficients with absolute values larger than $O( \sqrt{s_0/n\cdot \log p }$, where $s_0$ is the number of nonzero coefficients) under less restrictive eigenvalue conditions of the covariates $X$ (see, e.g.,~\cite{van2009conditions,buhlmann2011statistics,buhlmann2010proposing}). When the entries in $J$ are sufficiently large, then a superset of $J$ can be recovered with high probability. Otherwise, entries with small coefficients would not incur too much bias in this process. 

On the other hand, consistency estimation of $J$ typically asks for $\log p_z/\sqrt{n}\to 0$, while Assumption~\ref{assump:consistency} asks for $\|\hat\mu_{x,z}^{(k)}(\cdot)-\mu_{x,z}^{(k)}(\cdot)\|_{L_2(\PP_Z)} \preceq n^{-1/4}$. When the regression function $\hat\mu_{x,z}$ is obtained from high-dimensional regression such as the Lasso, the convergence rate is typically $\sqrt{s_{x,z}\log p_z/n}$, where $s_{x,z}$ is the sparsity level for an entry in $X$ over $Z$. These conditions on $p_z$ are typically compatible for the two parts (as they are both satisfied for relatively small $p_z$). For example, if the sparsity $s_{x,z}$ is bounded, and $\log p_{z} = o(n^{1/4})$, then both conditions are satisfied. 

\section{Technical proofs}

\subsection{Proof of Theorem~\ref{thm:lowd}}
\label{app:subsec_thm_lowd}

\begin{proof}[Proof of Theorem~\ref{thm:lowd}]
Recall that $\theta^*$ is the least-squares estimator 
$\theta^* = \argmin_{\theta\in\RR^{p_x}} \EE[(Y-X^\top\theta)^2]$. The estimation equation gives 
\$
\hat\theta_n^\ols - \theta^* 
&= \Big(\sum_{i=1}^n X_iX_i^\top\Big)^{-1} \sum_{i=1}^n X_iY_i- \theta^* 
=\Big(\frac{1}{n}\sum_{i=1}^n X_iX_i^\top\Big)^{-1} \frac{1}{n}\sum_{i=1}^n X_i(Y_i-X_i^\top \theta^*).
\$
Since $X_i(Y_i-X_i^\top\theta^*)$ has 
finite second moment and we know 
$\EE[X(Y-X^\top\theta^*)]=0$ by the optimality of $\theta^*$, 
we have $\frac{1}{n}\sum_{i=1}^n X_i(Y_i-X_i^\top \theta^*)=O_P(1/\sqrt{n})$, where 
$O_P(1/\sqrt{n})$ represents a random vector 
whose each entry is $O_P(1/\sqrt{n})$. 
Since $X_iX_i^\top$ has finite expectation 
$\EE[XX^\top]\succ 0$, we have 
\$
\Big(\frac{1}{n}\sum_{i=1}^n X_iX_i^\top\Big)^{-1}
= \Big( \EE[XX^\top] + o_P(1)\Big)^{-1} = 
\big(\EE[XX^\top]\big)^{-1} + o_P(1),
\$
where $o_P(1)$ stands for a random matrix whose 
all entries converge in probability to zero. 
We thus have  
\$
\hat\theta_n^\ols - \theta^* 
&= \Big( \big(\EE[XX^\top]\big)^{-1} + o_P(1) \Big) \frac{1}{n}\sum_{i=1}^n X_i(Y_i-X_i^\top \theta^*) \\ 
&= \frac{1}{n}\sum_{i=1}^n \big(\EE[XX^\top]\big)^{-1}  X_i(Y_i-X_i^\top \theta^*) + o_P(1) \times O_P(1/\sqrt{n}) \\ 
&= \frac{1}{n}\sum_{i=1}^n\phi^\ols(X_i,Y_i)+o_P(1/\sqrt{n}),
\$
where 
$
\phi^\ols(x,y) = \EE[XX^\top]^{-1} x(y-x^\top \theta^*). 
$
For our modular estimator, we first note the closed form solution 
\$
\hat\theta_n^\mod -\theta^* = \Big(\sum_{i=1}^n X_iX_i^\top\Big)^{-1} \sum_{i=1}^n C_i -\theta^* = \Big(\sum_{i=1}^n X_iX_i^\top\Big)^{-1} \sum_{i=1}^n \big( C_i - X_iX_i^\top\theta^* \big).
\$
We now show that under the conditions in Theorem~\ref{thm:lowd}, 
one has 
\#\label{eq:lowd_err}
\frac{1}{n}\sum_{i=1}^n \big( C_i - X_iX_i^\top\theta^* \big)
= \frac{1}{n} \sum_{i=1}^n \big( C_i^* - X_iX_i^\top\theta^* \big) +o_P(1/\sqrt{n}),
\#
where $C_i^* = X_i\mu_y(Z_i)+\mu_x(Z_i)Y_i - \mu_x(Z_i)\mu_y(Z_i)$. 
To see this, by rearranging the terms, we have 
$\frac{1}{n}\sum_{i=1}^n (C_i-C_i^*)=(\textrm{i}) + (\textrm{ii})+ (\textrm{iii})$, 
where we define 
\$ 
\textrm{(i)} &=  \frac{1}{n}\sum_{k=1}^2 \sum_{i\in \cI_k} 
    \{X_i-\mu_x (Z_i)\} \{\hat\mu_y^{(k)} (Z_i) - \mu_y(Z_i)\}, \\ 
\textrm{(ii)} &=  \frac{1}{n}\sum_{k=1}^2 \sum_{i\in \cI_k}   \{\hat\mu_x^{(k)} (Z_i)-\mu_x(Z_i)\}\{Y_i - \mu_y(Z_i)\}   , \\ 
\textrm{(iii)} &=  \frac{1}{n}\sum_{k=1}^2 \sum_{i\in \cI_k}   \{\hat\mu_x^{(k)} (Z_i)-\mu_x(Z_i)\}\{\hat\mu_y^{(k)} (Z_i)  - \mu_y(Z_i)\} .
\$
We first bound the summation in (i). 
For each $k$, because $\hat\mu_y^{(k)}$ is obtained from 
the independent fold $\cI\backslash \cI_k$, 
for any $i\in \cI_k$, by the tower property, 
\$
\EE\big[  \{X_i-\mu_x (Z_i)\} \{\hat\mu_y^{(k)} (Z_i) - \mu_y(Z_i)\} \biggiven \cI\backslash \cI_k\big] = \EE\big[  \EE[ X_i-\mu_x (Z_i) \given Z_i ]   \{\hat\mu_y^{(k)} (Z_i) - \mu_y(Z_i)\}   \biggiven \cI\backslash \cI_k  \big] = 0,
\$
and they are i.i.d.~copies conditional on $\cI\backslash\cI_k$
with conditional variance 
\$
\EE\big[ \{X_i-\mu_x (Z_i)\}^2 \{\hat\mu_y^{(k)} (Z_i) - \mu_y(Z_i)\}^2   \biggiven \cI\backslash \cI_k  \big] = o_P(1).
\$
The Markov's inequality thus implies 
\$
\frac{1}{n}\sum_{i\in \cI_k}  \{X_i-\mu_x (Z_i)\} \{\hat\mu_y^{(k)} (Z_i) - \mu_y(Z_i)\} = o_P(1/\sqrt{n}),
\$
hence $\textrm{(i)}=o_P(1/\sqrt{n})$. 
Similar arguments also apply to (ii) and yield $\textrm{(ii)}=o_P(1/\sqrt{n})$. 
Finally, conditional on $\cI\backslash\cI_k$, 
for any $i\in \cI_k$, by the Cauchy-Schwarz inequality, 
\$
&\bigg| \sum_{i\in \cI_k}\{\hat\mu_x^{(k)} (Z_i)-\mu_x(Z_i)\}\{\hat\mu_y^{(k)} (Z_i)  - \mu_y(Z_i)\}  \bigg| \\
&\leq \bigg[  \sum_{i\in \cI_k}\{\hat\mu_x^{(k)} (Z_i)-\mu_x(Z_i)\}^2 \bigg]^{1/2}
\cdot \bigg[  \sum_{i\in \cI_k}\{\hat\mu_y^{(k)} (Z_i)  - \mu_y(Z_i)\}^2 \bigg]^{1/2}.
\$
By the Markov's inequality, we have  
\$
\sum_{i\in \cI_k}\{\hat\mu_x^{(k)} (Z_i)-\mu_x(Z_i)\}^2
&= O_P\big( |\cI_k| \cdot \EE\big[\{\hat\mu_x^{(k)} (Z_i)-\mu_x(Z_i)\}^2\biggiven \cI\backslash \cI_k\big] \big) \\
&= O_P\big( n \cdot \|\hat\mu_x^{(k)}-\mu_x\|_{L_2(\PP_Z)}^2 \big).
\$
With similar arguments applied to the second summation, we arrive at 
\$
\bigg| \sum_{i\in \cI_k}\{\hat\mu_x^{(k)} (Z_i)-\mu_x(Z_i)\}\{\hat\mu_y^{(k)} (Z_i)  - \mu_y(Z_i)\}  \bigg|
\leq O_P\big(n\cdot  \|\hat\mu_x^{(k)}-\mu_x\|_{L_2(\PP_Z)} \|\hat\mu_y^{(k)}-\mu_y\|_{L_2(\PP_Z)} \big).
\$
Combining $k=1,2$, we have 
\$
\textrm{(iii)} = O_P\big( \|\hat\mu_x^{(k)}-\mu_x\|_{L_2(\PP_Z)} \|\hat\mu_y^{(k)}-\mu_y\|_{L_2(\PP_Z)} \big) = o_P(1/\sqrt{n})
\$
by the conditions in Theorem~\ref{thm:lowd}.
Putting together our bounds on the three terms, we prove 
the claim in~\eqref{eq:lowd_err}. 
Note that $\EE[C_i^*-X_iX_i^\top\theta^*]=0$, hence 
$\frac{1}{n}\sum_{i=1}^n (C_i^*-X_iX_i^\top\theta^*) = O_P(1/\sqrt{n})$, thus 
\$
\hat\theta_n^\mod - \theta^* 
&= \Big[ \frac{1}{n}\sum_{i=1}^n X_iX_i^\top\Big]^{-1} \cdot \frac{1}{n}\sum_{i=1}^n \big( C_i - X_iX_i^\top\theta^* \big) \\
&= \big[  \EE[XX^\top]  ^{-1} + o_P(1)\big] 
\cdot \bigg[ \frac{1}{n}\sum_{i=1}^n \big( C_i^* - X_iX_i^\top\theta^* \big) +o_P(1/\sqrt{n})  \bigg] \\
&=\frac{1}{n}\sum_{i=1}^n  \EE[XX^\top]  ^{-1}\big( C_i^* - X_iX_i^\top\theta^* \big)
+ o_P(1/\sqrt{n}).
\$
That is, we obtain the asymptotic linear expansion 
$\hat\theta_n^\mod - \theta^* = \frac{1}{n}\sum_{i=1}^n \phi^\mod(X_i,Y_i,Z_i) + o_P(1/\sqrt{n})$, where the influence function is given by   
\$
\phi^\mod(x,y,z)  &= \EE[XX^\top]^{-1} \big(x\mu_y(z) + \mu_x(z) y - \mu_x(z)\mu_y(z) - xx^\top \theta^*\big). 
\$
The Central Limit Theorem thus gives the asymptotic distribution 
$\sqrt{n}(\hat\theta_n^\mod - \theta^*)\stackrel{d}{\to} N(0,\Sigma^\mod)$ 
where $\Sigma^\mod = \Cov(\phi^\mod(X_i,Y_i,Z_i))$. 

We now proceed to show that $\phi^\mod(X_i,Y_i,Z_i)$ 
is the efficient influence function for estimating $\theta^*$ under 
the current distribution $\PP$. 
Our argument is similar to that of~\cite[Lemma 9]{rotnitzky2019efficient} 
and~\cite[Theorem 4.5]{tsiatis2006semiparametric}. 
We let $\cP$ denote the collection of all 
distributions obeying Assumption~\ref{assump:ci}, so 
$\PP\in \cP$. 
For any $P\in \cP$, let $p$ denote the 
density of $P$ with respect to the base measure $\mu$. 
Then the joint density $p(x,y,z)$ decomposes as
$
p(x,y,z) = p(z) p(x\given z) p(y\given z), 
$
where $p(z)$ is the marginal density of $Z$, 
$p(x\given z)$ is the conditional density of $P_{X\given Z}$, and $p(y\given z)$ is the 
conditional density of $P_{Y\given Z}$. 
By~\citep[Lemma 1.6]{van2003unified}, the tangent 
space of $\cP$ at $P$ is given by 
$\cT = \cT_1 \oplus \cT_2 \oplus \cT_3$, 
where $\{\cT_j\}_{j=1}^3$ are orthogonal spaces. 
More specifically, $\cT_1$ is the closed linear 
span of socres of one-dimensional regular parametric 
submodel 
$
\gamma \mapsto p(z;\gamma) p(x\given z) p(y\given z),
$
$\cT_2$ is that of the parametric submodel 
$
\gamma \mapsto p(x\given z;\gamma) p(z) p(y\given z),
$
and $\cT_3$ is that of the parametric submodel 
$
\gamma \mapsto p(y\given z;\gamma) p(z) p(x\given z). 
$
Similar to~\citep[Theorem 4.5]{tsiatis2006semiparametric}, these 
subspaces can be equivalently represented as 
\#\label{eq:tj_space}
\cT_1 &= \big\{  f(Z)\mapsto \RR^{p_x} \colon \EE[f(Z)]=0       \big\}, \notag\\
\cT_2 &= \big\{  f(X,Z)\mapsto \RR^{p_x} \colon \EE [f(X,Z)\given Z ] = 0\big\} \notag\\
\cT_3 &= \big\{  f(Y,Z)\mapsto \RR^{p_x} \colon \EE [f(Y,Z)\given Z ] = 0\big\},
\#
where all the functions in them are additionally square integrable. 

By standard semiparametric theory~\citep{tsiatis2006semiparametric}, 
the efficient influence function, 
denoted as $\phi^*$, can be obtained by the projection of
$\phi^{\ols}$ onto the tangent space $\cT$. 
That is, $\phi^*(x,y,z) = \phi^*_1(z)+\phi^*_2(x,z)+\phi^*_3(y,z)$, 
where $\phi_j^*$ is the projection of $\phi^\ols$ 
onto $\cT_j$ defined in~\eqref{eq:tj_space}. 
Hence 
\$
\phi^*(X,Y,Z) &= 
\EE\big[ \phi^\ols(X,Y) \biggiven Z \big] -  \EE\big[ \phi^\ols(X,Y)\big]
+ \EE\big[ \phi^\ols(X,Y) \biggiven Y,Z \big] \\
&\qquad  -  \EE\big[ \phi^\ols(X,Y)\biggiven Z\big] 
+ \EE\big[ \phi^\ols(X,Y) \biggiven X,Z \big] -  \EE\big[ \phi^\ols(X,Y)\biggiven Z\big] \\ 
&= \EE[XX^\top]^{-1} \big\{  \EE[X(Y-X^\top \theta^*) \given X,Z] 
+  \EE[X(Y-X^\top \theta^*) \given Y,Z] \\ 
&\qquad \qquad \qquad \qquad  - 
\EE[X(Y-X^\top \theta^*) \given Z] - \EE[X(Y-X^\top \theta^*) ]\big\} \\ 
&= \EE[XX^\top]^{-1} \big\{ X\EE[Y\given Z] - XX^\top \theta^* 
+  \EE[X\given Z] Y \\ 
&\qquad \qquad \qquad \qquad - \EE[XX^\top \theta^* \given Z]  - \EE[XY\given Z] + \EE[XX^\top\theta^*\given Z] 
    \big\} \\ 
    &= \EE[XX^\top]^{-1} \big\{ X\EE[Y\given Z] - XX^\top \theta^* 
    +  \EE[X\given Z] Y   - \EE[X \given Z] \EE[Y\given Z]   \big\} \\ 
    &= \phi^\mod(X,Y,Z),
\$
where the first and second equalities are from the projection 
onto subsapces, the third equality uses 
the fact that $\EE[X(Y-X^\top\theta^*)]=0$ 
and the tower property, and the fourth equality uses the 
conditional independence in Assumption~\ref{assump:ci}. 
We thus conclude the proof of Theorem~\ref{thm:lowd}.
\end{proof}

\subsection{Proof of Theorem~\ref{thm:highd}}
\label{app:subsec_thm_highd}

\begin{proof}[Proof of Theorem~\ref{thm:highd}]
For notational simplicity, throughout the proof 
we write $\hat\theta_n^\mod$ as $\hat\theta$. 
Denote $\Delta = \hat\theta - \theta^*$, 
and let $X\in \RR^{n\times p_x}$ be the design matrix. 
We define the vector 
$
G = \sum_{i=1}^n C_i \in \RR^{p_x},
$
where we recall the definition of $C_i$ in~\eqref{eq:C_default}.
Since $\hat\theta$ is the minimizer of~\eqref{eq:def_modular_highd}, 
we have 
\$
\frac{1}{2n}\hat\theta^\top X^\top X \hat\theta - \frac{1}{n}G^\top \hat\theta +\lambda \|\hat\theta\|_1 
\leq 
\frac{1}{2n} (\theta^*)^\top X^\top X  \theta^* - \frac{1}{n}G^\top \theta^* +\lambda \| \theta^*\|_1, 
\$
which implies 
\$
\frac{1}{2n} \Delta^\top X^\top X\Delta \leq \frac{1}{n}\big\langle (\theta^*)^\top X^\top X - G, \Delta\big\rangle + \lambda \|\theta^*\|_1 - \lambda \|\hat\theta\|_1.
\$
Let $S$ be the set of indices for nonzero entries of $\theta^*$, 
and $\theta_S$ be the subvector containing all entries with indices in $S$. 
Then 
\$
\|\theta^*\|_1 - 
\|\hat\theta\|_1 =\|\theta^*_S\|_1 -  \|\hat\theta_S\|_1 - \|\hat\theta_{S^c}\|_1 
\leq  \|\Delta_S\|_1 - \|\Delta_{S^c}\|_1,
\$
where the last inequality uses 
triangular inequality and the fact that $\theta_{S^c}^*\equiv 0$. 
We thus have 
\#\label{eq:1}
\frac{1}{2n} \Delta^\top X^\top X\Delta 
&\leq \frac{1}{n}\big\langle (\theta^*)^\top X^\top X - G, \Delta\big\rangle + \lambda \|\Delta_S\|_1  - \lambda \|\Delta_{S^c}\|_1 \notag \\
&\leq \frac{1}{n}\|\Delta\|_1 \big\| (\theta^*)^\top X^\top X - G \big\|_\infty + \lambda \|\Delta_S\|_1  - \lambda \|\Delta_{S^c}\|_1.
\#
The first implication is that, as the left-handed side is non-negative, once $\lambda \geq \frac{2}{n} \| (\theta^*)^\top X^\top X - G  \|_\infty$, we have
$
0 \leq \frac{\lambda}{2}\|\Delta\|_1 + \lambda\|\Delta_S\|_1 - \lambda \|\Delta_{S^c}\|_1,
$
hence 
$
\Delta \in \CC_3 = \big\{x\in \RR^p\colon \|x_{S^c}\|_1 \leq 3\|x_{S}\|_1\big\}.
$
By Assumption~\ref{assump:RSC},~\eqref{eq:1} further implies 
\$
\zeta \|\Delta\|_2^2 \leq \frac{\lambda}{2} \|\Delta\|_1 + \lambda \|\Delta_S\|_1 - \lambda \|\Delta_{S^c}\|_1
\leq \frac{3\lambda}{2} \|\Delta_S\|_1 \leq \frac{3\lambda}{2}\sqrt{k}\|\Delta\|_2,
\$
where $k =|S| = \|\theta^*\|_0$ is the sparsity level. 
To summarize, under Assumption~\ref{assump:RSC}, 
the solution $\hat\theta$ satisfies
\#\label{eq:first_bd}
\|\hat\theta - \theta^*\|_2 \leq \frac{3\lambda \sqrt{k}}{2\zeta}
\#
for any regularization parameter $\lambda \geq  \frac{2}{n} \| (\theta^*)^\top X^\top X - G  \|_\infty$. 
We now write $\hat\mu_y\in \RR^{n}$ 
as the vector whose $i$-th entry is 
$\hat\mu_y^{(k)}(Z_i)$ for $i\in \cI_k$, 
and $\hat\mu_x\in \RR^{n\times p_x}$ whose $(i,j)$-th entry 
is $\hat\mu_{x,j}^{(k)}(Z_i)$ for $i\in \cI_k$. 
Similarly, $\mu_y\in \RR^{n}$ and $\mu_x\in \RR^{n\times p_x}$ 
record the ground truth of these regression functions. 
The error term is then 
\#\label{eq:err_decomp}
(\theta^*)^\top X^\top X - G
&= (\theta^*)^\top X^\top X - \hat\mu_y^\top X 
- Y^\top \hat\mu_x + \hat\mu_y^\top \hat\mu_x \notag \\
&= (Y- \mu_y)^\top (X-\mu_x) + (X\theta^*-Y)^\top X \notag \\
&\qquad + (\mu_y - \hat\mu_y)^\top (X-\mu_x) 
+ (Y-\mu_y)^\top (\mu_x-\hat\mu_x) + (\mu_y-\hat\mu_y)^\top (\mu_x-\hat\mu_x).
\#
For any constant $\delta\in(0,1)$, we define the event 
\#\label{eq:est_bound}
\cE_{\textrm{est}}^{n,\delta} = \Bigg\{\big\|\hat\mu_{x,j}^{(k)} -\mu_{x,j} \big\|_{L_2(\PP_Z)} 
,~ \big\|\hat\mu_{y}^{(k)} -\mu_{y}  \big\|_{L_2(\PP_Z)} \leq \frac{4 c_n \log(3p_x/\delta)}{n^{1/4}},~\forall 1\leq j\leq p_x,~\forall k=1,2\Bigg\}.
\#
Then under Assumption~\ref{assump:consistency}, 
for any constant $\delta\in(0,1)$, 
taking a union bound 
over $2p_x+2\leq 3p_x$ estimated functions 
for $1\leq j\leq p_x$ and $k=1,2$, 
we know that $\PP(\cE_\textrm{est}^{n,\delta})\geq 1-\delta$.  

We now proceed to analyze each entry $j\in \{1,\dots,p_x\}$ 
of the above error term. 
First, 
\$
\frac{1}{n}\big[(\mu_y - \hat\mu_y)^\top (X-\mu_x) \big]_j
= \frac{1}{n}\sum_{k=1}^2 \sum_{i\in \cI_k} \big(\mu_y(Z_i)-\hat\mu_y^{(k)}(Z_i) \big)
\big( X_{i,j} - \mu_{x,j}(Z_i)\big) 
\$
For each $k=1,2$, conditional on $\cI\backslash \cI_k$, 
the terms in the summation 
$(\mu_y(Z_i)-\hat\mu_y^{(k)}(Z_i) )
( X_{i,j} - \mu_{x,j}(Z_i))$, $i\in \cI_k$ 
are mutually independent with mean zero, since 
\$
&\EE\Big[  \big(\mu_y(Z_i)-\hat\mu_y^{(k)}(Z_i) \big)
\big( X_{i,j} - \mu_{x,j}(Z_i)\big)   \Biggiven \cI\backslash \cI_k \Big] \\
&= \EE\Big[   \big(\mu_y(Z_i)-\hat\mu_y^{(k)}(Z_i) \big)\cdot 
\EE\big[  X_{i,j} - \mu_{x,j}(Z_i)\biggiven Z_i,\cI\backslash \cI_k\big]      \Biggiven \cI\backslash \cI_k \Big] =0 
\$
by the tower property. 
Also, the absolute value of each term is 
bounded below $4c_0$  
by the boundedness in Assumption~\ref{assump:bounded}. 
We now define the event 
\$
\cE_{\textrm{cr}}^{1,k}(\delta) = \Bigg\{   
    \Big|\sum_{i\in \cI_k} \big(\mu_y(Z_i)-\hat\mu_y^{(k)}(Z_i) \big)
\big( X_{i,j} - \mu_{x,j}(Z_i)\big)\Big| 
\leq \max\Big\{ 16c_0 \log(2/\delta)/3,~ 2\hat\epsilon_{1}\sqrt{|\cI_k|\log(2/\delta)} \Big\} 
    \Bigg\},
\$
where we define 
\$
\hat\epsilon_{1}^2 =4 \sup_{k=1,2}\big\|\mu_y - \hat\mu_y^{(k)}\big\|_{L_2(\PP_Z)}^2 &\geq \EE\Big[  \big(\mu_y(Z_i)-\hat\mu_y^{(k)}(Z_i) \big)^2
\big( X_{i,j} - \mu_{x,j}(Z_i)\big)^2   \Biggiven \cI\backslash \cI_k\Big] .
\$
The Bernstein's inequality in Lemma~\ref{lem:bernstein}, 
implies $\PP\big( \cE_{\textrm{cr}}^{1,k}  \biggiven \cI\backslash \cI_k \big) \leq \delta$. 
%
Marginalizing out the conditional probability and 
taking a union bound over $k=1,2$, we know that 
$\PP(\cE_{\textrm{cr}}^1(\delta))\geq 1-\delta$ 
for any constant $\delta$, where we define 
$\cE_{\textrm{cr}}^1(\delta) = \cE_{\textrm{cr}}^{1,1} (\delta/2)\cup \cE_{\textrm{cr}}^{1,2}(\delta/2)$. 
That is, with probability at least $1-\delta$, 
\#\label{eq:err_1}
\Big|\frac{1}{n}\big[(\mu_y - \hat\mu_y)^\top (X-\mu_x) \big]_j\Big|
\leq \max\bigg\{ \frac{16c_0 \log(4/\delta)}{3n},~ \hat\epsilon_1\sqrt{\frac{2\log(4/\delta)}{n}} \bigg\}. 
\#
For each fixed $j$, applying exactly the same arguments to 
\$
\frac{1}{n}\big[(Y-\mu_y)^\top (\mu_x-\hat\mu_x)\big]_j 
= \frac{1}{n}\sum_{k=1}^2 \sum_{i\in \cI_k} \big(Y_i- \mu_y (Z_i) \big)
\big(  \mu_{x,j}(Z_i) - \hat\mu_{x,j}^{(k)}(Z_i)\big),
\$
with probability at least $1-\delta$, 
\#\label{eq:err_2}
\bigg|\frac{1}{n}\big[(Y-\mu_y)^\top (\mu_x-\hat\mu_x)\big]_j \bigg|
\leq \max\bigg\{ \frac{16c_0 \log(4/\delta)}{3n},~ c_0\hat\epsilon_{2,j}\sqrt{\frac{2\log(4/\delta)}{n}} \bigg\},
\#
where we define 
$
\hat\epsilon_{2,j} = 2 \sup_{k=1,2}\big\|\mu_{x,j} - \hat\mu_{x,j}^{(k)}\big\|_{L_2(\PP_Z)}.
$
The third term is 
\$
\frac{1}{n}\big[(\mu_y-\hat\mu_y)^\top (\mu_x-\hat\mu_x)\big]_j 
= \frac{1}{n}\sum_{k=1}^2 \sum_{i\in \cI_k} \big(\mu_y(Z_i)-\hat\mu_y^{(k)}(Z_i) \big)
\big(  \mu_{x,j}(Z_i) - \hat\mu_{x,j}^{(k)}(Z_i)\big).
\$
Here for each $k=1,2$, conditional on $\cI\backslash\cI_k$, 
each term 
$(\mu_y(Z_i)-\hat\mu_y^{(k)}(Z_i))
(  \mu_{x,j}(Z_i) - \hat\mu_{x,j}^{(k)}(Z_i))$, $i\in \cI_k$ 
in the above summation is i.i.d.~whose expectation is bounded as 
\$
&\bigg|\EE\Big[ \big(\mu_y(Z_i)-\hat\mu_y^{(k)}(Z_i) \big)
\big(  \mu_{x,j}(Z_i) - \hat\mu_{x,j}^{(k)}(Z_i)\big) \Biggiven \cI\backslash \cI_k  \Big]\bigg| \\
&\leq \big\|\mu_y - \hat\mu_y^{(k)}\big\|_{L_2(\PP_Z)} \big\|\mu_{x,j} - \hat\mu_{x,j}^{(k)}\big\|_{L_2(\PP_Z)} 
\leq \hat\epsilon_1 \cdot \hat\epsilon_{2,j}/4.
\$
Meanwhile, their second moments are bounded as 
\$
\EE\Big[ \big(\mu_y(Z_i)-\hat\mu_y^{(k)}(Z_i) \big)^2
\big(  \mu_{x,j}(Z_i) - \hat\mu_{x,j}^{(k)}(Z_i)\big)^2 \Biggiven \cI\backslash \cI_k  \Big]
\leq 4\big\|\mu_y - \hat\mu_y^{(k)}\big\|_{L_2(\PP_Z)}^2 
\leq \hat\epsilon_1^2 
\$
according to the bounededness in Assumption~\ref{assump:bounded}. 
Combining the above  bounds and invoking the 
Bernstein's inequality in Lemma~\ref{lem:bernstein}, 
it holds with probability at least $1-\delta$ that 
\$
   & \Big|\sum_{i\in \cI_k} \big(\mu_y(Z_i)-\hat\mu_y^{(k)}(Z_i) \big)
\big(  \mu_{x,j}(Z_i) - \hat\mu_{x,j}^{(k)}(Z_i)\big) \Big| \\
&\leq |\cI_k| \hat\epsilon_1  \hat\epsilon_{2,j}/4 
+ \max\Big\{ 16c_0 \log(2/\delta)/3,~ 2\hat\epsilon_{1}\sqrt{|\cI_k|\log(2/\delta)} \Big\} .
\$
Taking a union bound for $k=1,2$ implies that 
with probability at least $1-\delta$, 
\#\label{eq:err_3}
\Big|\frac{1}{n}\big[(\mu_y-\hat\mu_y)^\top (\mu_x-\hat\mu_x)\big]_j \Big| \leq \frac{\hat\epsilon_1 \hat\epsilon_{2,j}}{4}
+ \max\bigg\{ \frac{16c_0 \log(4/\delta)}{3n},~ \hat\epsilon_{1}\sqrt{\frac{2\log(4/\delta)}{n}} \bigg\}.
\#
Putting together~\eqref{eq:err_1},~\eqref{eq:err_2} and~\eqref{eq:err_3}, we know that with probability at least $1-\delta/2$, 
\$
&\big\|(\mu_y - \hat\mu_y)^\top (X-\mu_x) 
+ (Y-\mu_y)^\top (\mu_x-\hat\mu_x) + (\mu_y-\hat\mu_y)^\top (\mu_x-\hat\mu_x)\big\|_\infty \\
&\leq \frac{\hat\epsilon_1 \hat\epsilon_{2,j}}{4} + 2\max\bigg\{ \frac{16c_0 \log(24/\delta)}{3n},~ \hat\epsilon_1\sqrt{\frac{2\log(24/\delta)}{n}} \bigg\} \\ 
&\qquad 
+ \max\bigg\{ \frac{16c_0 \log(24/\delta)}{3n},~ c_0\hat\epsilon_{2,j}\sqrt{\frac{2\log(24/\delta)}{n}} \bigg\}.
\$
Further taking a union bound over the above event 
and~\eqref{eq:est_bound} for $\cE_{\textrm{est}}^{n,\delta/2}$, 
and using the fact that $\max\{a,b\}\leq a+b$ for $a,b\geq 0$, 
we know it holds with probability at least $1-\delta$ that 
\$
&\big\|(\mu_y - \hat\mu_y)^\top (X-\mu_x) 
+ (Y-\mu_y)^\top (\mu_x-\hat\mu_x) + (\mu_y-\hat\mu_y)^\top (\mu_x-\hat\mu_x)\big\|_\infty \\
&\leq\frac{4c_n^2 (\log(3p_x/\delta))^2}{\sqrt{n}} + \frac{16c_0 \log(24/\delta)}{n}+ \frac{(2+c_0)\log(3p_x/\delta)\sqrt{2\log(24/\delta)}}{n^{3/4}} .
\$
Recalling~\eqref{eq:err_decomp} and letting 
\$
\bar c_n = 4c_n^2 + 16c_0/\sqrt{n} + (2+c_0) /n^{1/4},
\$
we have $\bar c_n\to 0$ as $n\to \infty$ and 
with probability at least $1-\delta$, 
\$
\frac{1}{n}\big\|(\theta^*)^\top X^\top X - G \big\|_\infty \leq 
\frac{1}{n} \big\| (Y- \mu_y)^\top (X-\mu_x) + (X\theta^*-Y)^\top X \big\|_\infty +  \frac{\bar c_n (\log(3p_x/\delta))^2}{\sqrt{n}}.
\$
Combining this bound with~\eqref{eq:first_bd} 
and writing $D:=(Y- \mu_y)^\top (X-\mu_x) + (X\theta^*-Y)^\top X $,  
\$
&\PP\bigg( \|\hat\theta - \theta^*\|_2 \leq \frac{3\lambda \sqrt{k}}{2\zeta},~\forall ~\lambda 
\geq \frac{2 \|D\|_\infty}{n} +  \frac{2 \bar c_n (\log(3p_x/\delta))^2}{\sqrt{n}}  \bigg) \\
&\geq \PP\bigg( \frac{ \|(\theta^*)^\top X^\top X - G  \|_\infty}{n}
\leq \frac{\| D\|_\infty}{n} +  \frac{\bar c_n (\log(3p_x/\delta))^2}{\sqrt{n}}  \bigg)\geq 1-\delta,
\$
which proves~\eqref{eq:thm_highd}. 
\end{proof}

\subsection{Proof of Proposition~\ref{prop:miss}}
\label{app:subsec_proof_miss}

\begin{proof}[Proof of Proposition~\ref{prop:miss}]
Similar to the proof of Theorem~\ref{thm:lowd}, we are to show that $\hat{C}_\miss = C_{\miss}  + o_P(1/\sqrt{n_{xz}+n_{xyz}}) + o_P(1/\sqrt{n_{yz}+n_{xyz}})$, 
where 
$
C_\miss = {C}_{\miss}^{xz} +  {C}_{\miss}^{yz} -  {C}_{\miss}^{zz},
$ 
\$
 {C}_{\miss}^{xz} &=\frac{1}{n_{xz}+n_{xyz}}\sum_{i\in \cI^{xz}\cup\cI^{xyz}} X_i \mu_y (Z_i), \\
 {C}_{\miss}^{yz} &=\frac{1}{n_{yz}+n_{xyz}}\sum_{i\in \cI^{yz}\cup\cI^{xyz}} Y_i \mu_x (Z_i),\\
 {C}_{\miss}^{zz} &=\frac{1}{n_{xz}+n_{yz}+n_{xyz}}\sum_{i\in \cI^{xz}\cup \cI^{yz}\cup\cI^{xyz}} \mu_y (Z_i)\mu_x (Z_i).
\$
To see this, we note that $\hat{C}_\miss - C_{\miss} = \textrm{(i)} +\textrm{(ii)} + \textrm{(iii)}$, 
where 
\$
\textrm{(i)} &= \frac{1}{n_{xz}+n_{xyz}}\sum_{k=1}^2 \sum_{i\in \cI_k^{xz}\cup\cI_k^{xyz}}  \{X_i-\mu_x (Z_i)\} \{\hat\mu_y^{(k)}  (Z_i) - \mu_y(Z_i)\}, \\
\textrm{(ii)} &= \frac{1}{n_{yz}+n_{xyz}}\sum_{k=1}^2 \sum_{i\in \cI_k^{yz}\cup\cI_k^{xyz}}  \{Y_i-\mu_y (Z_i)\} \{\hat\mu_x^{(k)}  (Z_i) - \mu_x(Z_i)\}, \\ 
\textrm{(iii)} &= - \frac{1}{n_{xz}+n_{yz}+n_{xyz}}\sum_{i\in \cI^{xz}\cup \cI^{yz}\cup\cI^{xyz}} \{\hat\mu_x^{(k)}  (Z_i) - \mu_x(Z_i)\}\{\hat\mu_y^{(k)}  (Z_i) - \mu_y(Z_i)\}, \\ 
\textrm{(iv)} &= \frac{1}{n_{xz}+n_{xyz}}\sum_{k=1}^2 \sum_{i\in \cI_k^{xz}\cup\cI_k^{xyz}}   \mu_x (Z_i)  \{\hat\mu_y^{(k)}  (Z_i) - \mu_y(Z_i)\} \\ 
&\qquad - \frac{1}{n_{xz}+n_{yz}+n_{xyz}}\sum_{k=1}^2\sum_{i\in \cI_k^{xz}\cup \cI_k^{yz}\cup\cI_k^{xyz}}   \mu_x(Z_i) \{\hat\mu_y^{(k)}  (Z_i) - \mu_y(Z_i)\}, \\ 
\textrm{(v)} &= \frac{1}{n_{yz}+n_{xyz}}\sum_{k=1}^2 \sum_{i\in \cI_k^{yz}\cup\cI_k^{xyz}}   \mu_y (Z_i)  \{\hat\mu_x^{(k)}  (Z_i) - \mu_x(Z_i)\} \\ 
&\qquad - \frac{1}{n_{xz}+n_{yz}+n_{xyz}}\sum_{k=1}^2\sum_{i\in \cI_k^{xz}\cup \cI_k^{yz}\cup\cI_k^{xyz}}   \mu_y(Z_i) \{\hat\mu_x^{(k)}  (Z_i) - \mu_x(Z_i)\}.
\$
Note that each data in the $k$-th fold is independent of the fitted functions applied to them. Thus, each summation term for $i\in \cI_k^{xz}\cup\cI_k^{xyz}$ in (i) is i.i.d.~and mean zero conditional on $\hat\mu_y^{(k)}$, whose conditional variance is 
$\EE[(\hat\mu^{(k)}_y(Z)-\mu_y(Z))^2(X-\mu_x(Z))^2\given \hat\mu_y^{(k)}] = o_P(1)$. The Markov's inequality thus implies $\textrm{(i)}=o_P(1/\sqrt{n_{xz}+n_{xyz}})$. 
Similarly, we know $\textrm{(ii)}=o_P(1/\sqrt{n_{yz}+n_{xyz}})$. 
Also, since $\|\hat\mu_{x}^{(k)}-\mu_x\|_{L_2(\PP_Z)}\cdot\|\hat\mu_{y}^{(k)}-\mu_y\|_{L_2(\PP_Z)}=o_P(1/\sqrt{n})$, we have $\textrm{(iii)}=o_{P}(1/\sqrt{n_{xz}+n_{xyz}}+1/\sqrt{n_{yz}+n_{xyz}})$. 
In addition, let  $D_i:= \mu_x (Z_i)  \{\hat\mu_y^{(k)}  (Z_i) - \mu_y(Z_i)\} - \EE\big[ \mu_x (Z_i)  \{\hat\mu_y^{(k)}  (Z_i) - \mu_y(Z_i)\}\big]$ for $i\in\cI_k^{xz}\cup\cI_k^{xyz}$, where the expectation is conditional on data out of the $k$-th fold. 
Note that 
\$
\textrm{(iv)} &= \frac{1}{n_{xz}+n_{xyz}}\sum_{k=1}^2 \sum_{i\in \cI_k^{xz}\cup\cI_k^{xyz}} D_i   - \frac{1}{n_{xz}+n_{yz}+n_{xyz}}\sum_{k=1}^2\sum_{i\in \cI_k^{xz}\cup \cI_k^{yz}\cup\cI_k^{xyz}} D_i.
\$
For each $k$, 
conditional on $\{(Y_i,Z_i)\colon i\in (\cI^{xyz}\backslash \cI^{xyz}_k) \cup(\cI^{yz}\backslash \cI_k^{yz})\}$, 
we know that $\{D_i\}$ are i.i.d.~with mean zero. 
This implies 
\$
\big|\textrm{(iv)}\big| &\leq O_P(\|D_i\|_{L_2(\PP)}/\sqrt{n_{xz}+n_{xyz}}) + O_P(\|D_i\|_{L_2(\PP)}/\sqrt{n_{xz}+n_{yz}+n_{xyz}}) \\ 
&\leq o_P(1/\sqrt{n_{xz}+n_{xyz}}) + o_P(1/\sqrt{n_{xz}+n_{yz}+n_{xyz}}).
\$
The same bounds hold for $\textrm{(v)}$. Putting them together, we obtain 
\$
\hat{C}_\miss = C_{\miss} + o_P(1/\sqrt{n_{xz}+n_{xyz}}+1/\sqrt{n_{yz}+n_{xyz}}).
\$
Recall that $n_{xz}/(n_{xz}+n_{yz}+n_{xyz})\to \rho_{xz}$, 
and $n_{yz}/(n_{xz}+n_{yz}+n_{xyz})\to \rho_{yz}$.
Rearranging the terms in $C_\miss$ gives $C_\miss = C_\miss^*+o_P(1/\sqrt{n_{xz}+n_{xyz}}+1/\sqrt{n_{yz}+n_{xyz}})$, where
\$
C_{\miss}^* &= \sum_{i\in \cI^{xz}} \frac{X_i\mu_y(Z_i)}{n_{xz}+n_{xyz}} - \frac{\mu_y(Z_i)\mu_x(Z_i)}{n_{xz}+n_{yz}+n_{xyz}}   + \sum_{i\in \cI^{yz}} \frac{Y_i\mu_x(Z_i)}{n_{yz}+n_{xyz}} - \frac{\mu_y(Z_i)\mu_x(Z_i)}{n_{xz}+n_{yz}+n_{xyz}} \\ 
&\quad + \sum_{i\in \cI^{xyz}} \frac{X_i\mu_y(Z_i)}{n_{xz}+n_{xyz}} + \frac{Y_i\mu_x(Z_i)}{n_{yz}+n_{xyz}} - \frac{\mu_y(Z_i)\mu_x(Z_i)}{n_{xz}+n_{yz}+n_{xyz}} \\ 
&= \frac{1}{n_{xz}}\sum_{i\in \cI^{xz}} \Big( \frac{\rho_{xz}}{1-\rho_{yz}} X_i\mu_y(Z_i) - \rho_{xz}\mu_{y}(Z_i)\mu_x(Z_i)\Big) \\ 
&\quad + \frac{1}{n_{yz}}\sum_{i\in \cI^{yz}} \Big( \frac{\rho_{yz}}{1-\rho_{xz}} Y_i\mu_x(Z_i) - \rho_{yz}\mu_{y}(Z_i)\mu_x(Z_i)\Big) \\ 
&\quad + \frac{1}{n_{xyz}}\sum_{i\in \cI^{xyz}}\Big( \frac{ X_i\mu_y(Z_i)}{1-\rho_{yz}} + \frac{1}{1-\rho_{xz}} Y_i\mu_x(Z_i) -  \mu_{y}(Z_i)\mu_x(Z_i)\Big).
\$
Furthermore, we note that $C_\miss^* = \EE[XY]+O_P(1/\sqrt{n_{xz}+n_{xyz}}+1/\sqrt{n_{yz}+n_{xyz}})$. 
Similar to the arguments in the proof of Theorem~\ref{thm:lowd}, we have 
\$
\hat\theta_n^\mod -\theta^* = \EE[XX^\top]^{-1} (C_\miss^* - \hat\Sigma_{\miss}\theta^*).
\$
This concludes the proof of Proposition~\ref{prop:miss}.
\end{proof}

\section{Supporting lemmas}

\begin{lemma}[Bernstein's inequality]
\label{lem:bernstein}
Suppose $X_1,\dots,X_n$ are independent zero-mean 
random variables such that 
$|X_i|\leq M$ almost surely for some constant $M>0$.
Then for any constant $t>0$, 
\$
\PP\bigg( \Big|\sum_{i=1}^n X_i \Big|\geq t\bigg) 
\leq 2\exp\bigg(- \frac{t^2}{2\sum_{i=1}^n \EE[X_i^2] + 2Mt/3}\bigg).
\$
That is, for any $\delta\in(0,1)$, with probability at least $1-\delta$, it holds that 
\$
\big|\sum_{i=1}^n X_i \big| 
\leq \max \big\{ 2\sqrt{{\textstyle \sum_{i=1}^n} \EE[X_i^2] \log(2/\delta)} , 4M\log(2/\delta)/3 \big\}.
\$
\end{lemma}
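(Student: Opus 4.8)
The plan is to follow the classical exponential-moment (Chernoff) argument. First I would fix $\lambda>0$ and apply Markov's inequality to $e^{\lambda\sum_i X_i}$; using independence this gives
\[
\PP\Big(\sum_{i=1}^n X_i\ge t\Big)\le e^{-\lambda t}\prod_{i=1}^n\EE\big[e^{\lambda X_i}\big].
\]
The heart of the proof is a bound on each factor. Since $\EE[X_i]=0$ and $|X_i|\le M$ almost surely, expanding $e^{\lambda X_i}$ in a Taylor series and using $\EE[X_i^k]\le \EE[X_i^2]\,M^{k-2}$ for $k\ge 2$ together with the elementary inequality $k!\ge 2\cdot 3^{\,k-2}$, I would obtain, for every $\lambda\in(0,3/M)$,
\[
\EE\big[e^{\lambda X_i}\big]\le 1+\frac{\lambda^2\,\EE[X_i^2]/2}{1-\lambda M/3}\le \exp\!\Big(\frac{\lambda^2\,\EE[X_i^2]/2}{1-\lambda M/3}\Big).
\]

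Writing $v:=\sum_{i=1}^n\EE[X_i^2]$ and multiplying these bounds over $i$ yields $\PP(\sum_i X_i\ge t)\le \exp\!\big(-\lambda t+\tfrac{\lambda^2 v/2}{1-\lambda M/3}\big)$. I would then minimize the exponent over $\lambda\in(0,3/M)$; the choice $\lambda=t/(v+Mt/3)$, which one checks lies in the admissible interval (this reduces to $v>0$, the case $v=0$ being trivial), gives
\[
\PP\Big(\sum_{i=1}^n X_i\ge t\Big)\le \exp\!\Big(-\frac{t^2}{2v+2Mt/3}\Big).
\]
Running the same argument on $-X_1,\dots,-X_n$ (which satisfy the identical hypotheses) and summing the two one-sided tail bounds gives the two-sided inequality with the factor $2$.

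For the second display, I would start from the tail bound just established and bound the denominator by $2v+2Mt/3\le \max\{4v,\,4Mt/3\}$, so that the exponent is at least $\min\{t^2/(4v),\,3t/(4M)\}$. Consequently, if $t\ge \max\{2\sqrt{v\log(2/\delta)},\,4M\log(2/\delta)/3\}$ then both terms in this minimum are at least $\log(2/\delta)$, so $\PP(|\sum_i X_i|\ge t)\le 2e^{-\log(2/\delta)}=\delta$; taking $t$ equal to that maximum gives the stated high-probability bound.

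The step that I expect to need the most care is the moment generating function estimate: justifying the termwise bound $\EE[X_i^k]\le \EE[X_i^2]M^{k-2}$, the factorial inequality $k!\ge 2\cdot 3^{k-2}$, resumming the resulting geometric series, and verifying that the optimal $\lambda$ indeed stays below $3/M$. The remaining ingredients --- Markov's inequality, factorization over independent variables, symmetrization to a two-sided statement, and the translation into the $1-\delta$ form --- are entirely routine.
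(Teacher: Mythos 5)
Your proof is correct, and all the constants check out: the moment bound $\EE[X_i^k]\le M^{k-2}\EE[X_i^2]$, the inequality $k!\ge 2\cdot 3^{k-2}$, the geometric resummation, the choice $\lambda=t/(v+Mt/3)$ (admissible whenever $v>0$), and the final translation via $2v+2Mt/3\le\max\{4v,4Mt/3\}$ all work as you describe. The paper states this lemma without proof as a standard supporting result, and your Chernoff/moment-generating-function argument is exactly the classical derivation, so there is nothing to compare beyond confirming that your route is the canonical one.
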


\section{Deferred simulation results}
\label{app:simu_lowd}

\subsection{Low-dimensional simulation}

\begin{figure}[h]
\centering
\includegraphics[width=5in]{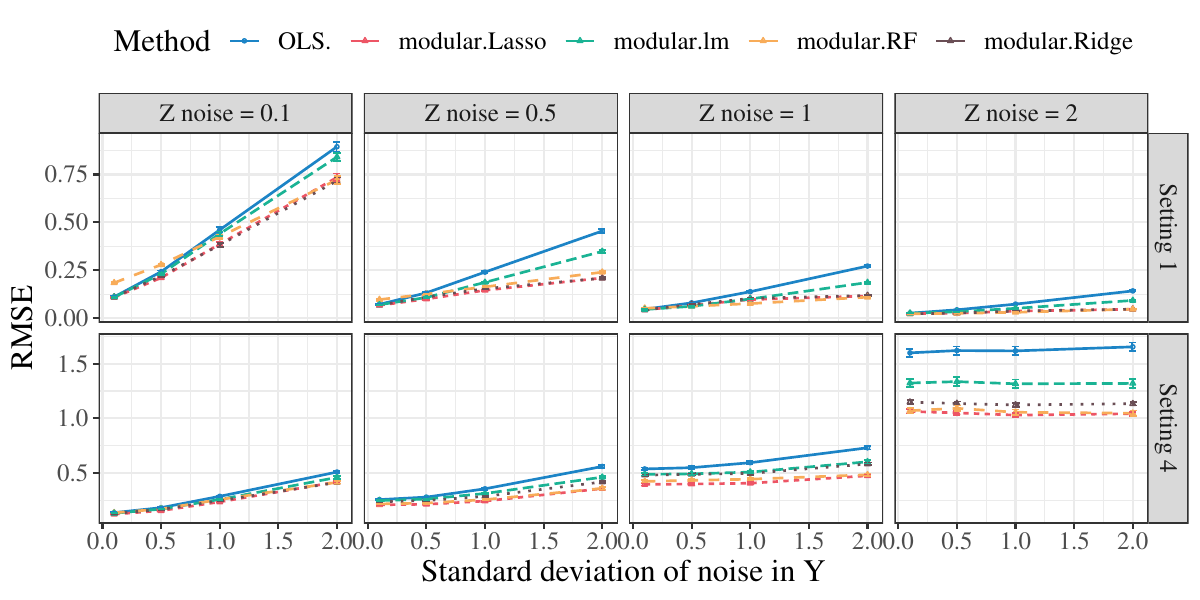}
\caption{Aggregate RMSE in settings 1 and 4. Details are
otherwise the same as Figure~\ref{fig:lowd_rmse_sub}.}
\label{fig:lowd_rmse_14_sub}
\end{figure}

\begin{figure}[h]
\centering
\includegraphics[width=5in]{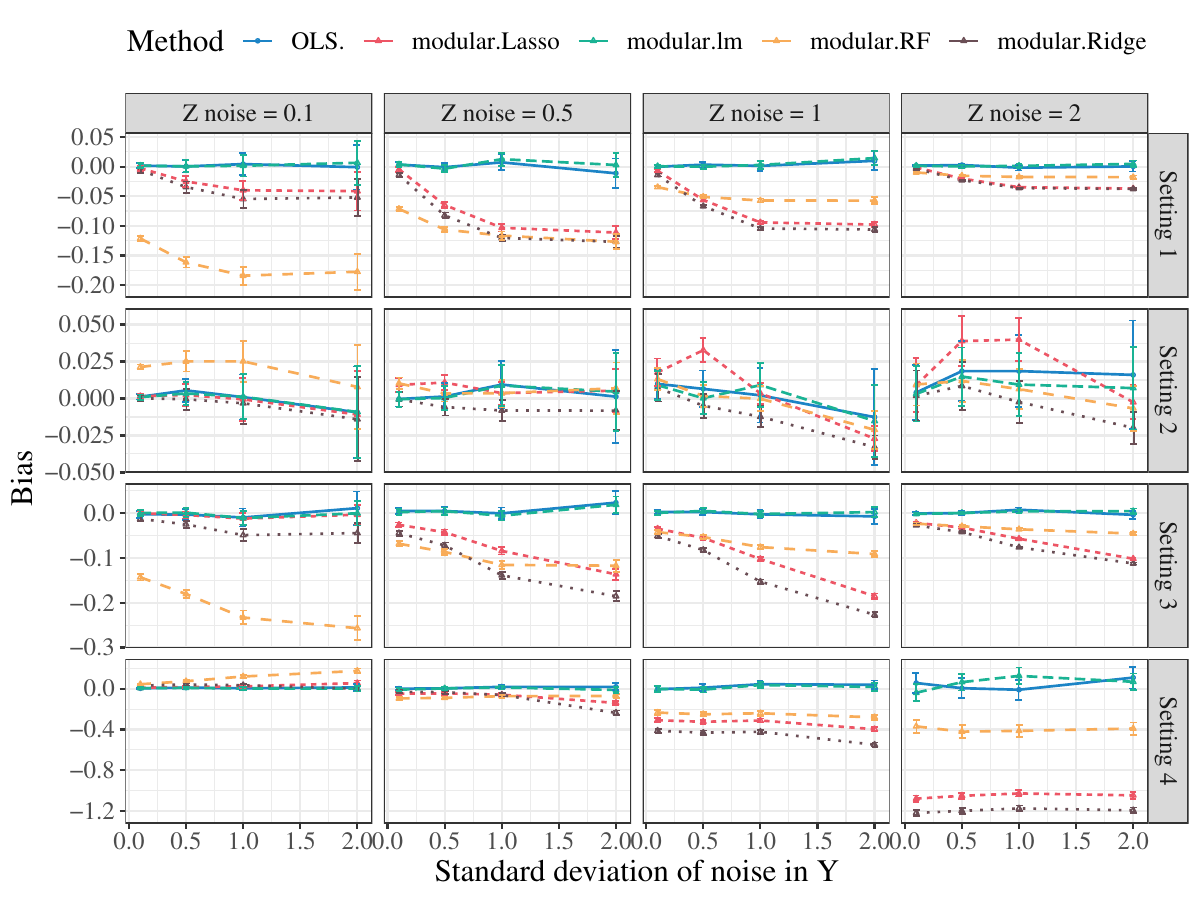}
\caption{Bias summed over all entries in all settings. Details are
otherwise the same as Figure~\ref{fig:lowd_rmse_sub}. 
Modular regression incurs a slightly larger bias 
than OLS when the sub-tasks are learned by flexible  machine learning algorithms.}
\label{fig:lowd_bias_sub}
\end{figure}
 
\begin{figure}[h]
\centering
\includegraphics[width=5in]{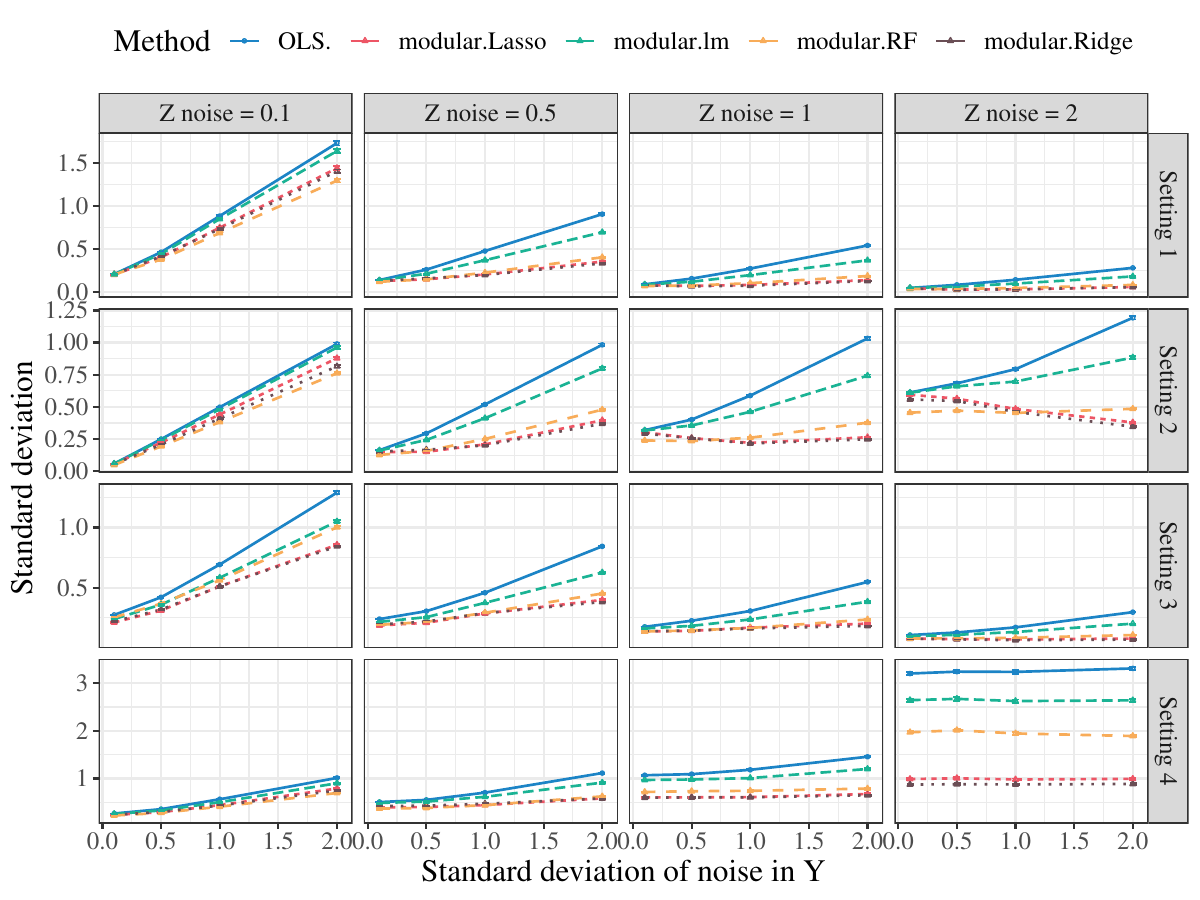}
\caption{Aggregated SD in all settings.
Details are
otherwise the same as Figure~\ref{fig:lowd_rmse_sub}.}
\label{fig:lowd_sd_sub}
\end{figure}

\begin{figure}[h]
    \centering
    \includegraphics[width=5in]{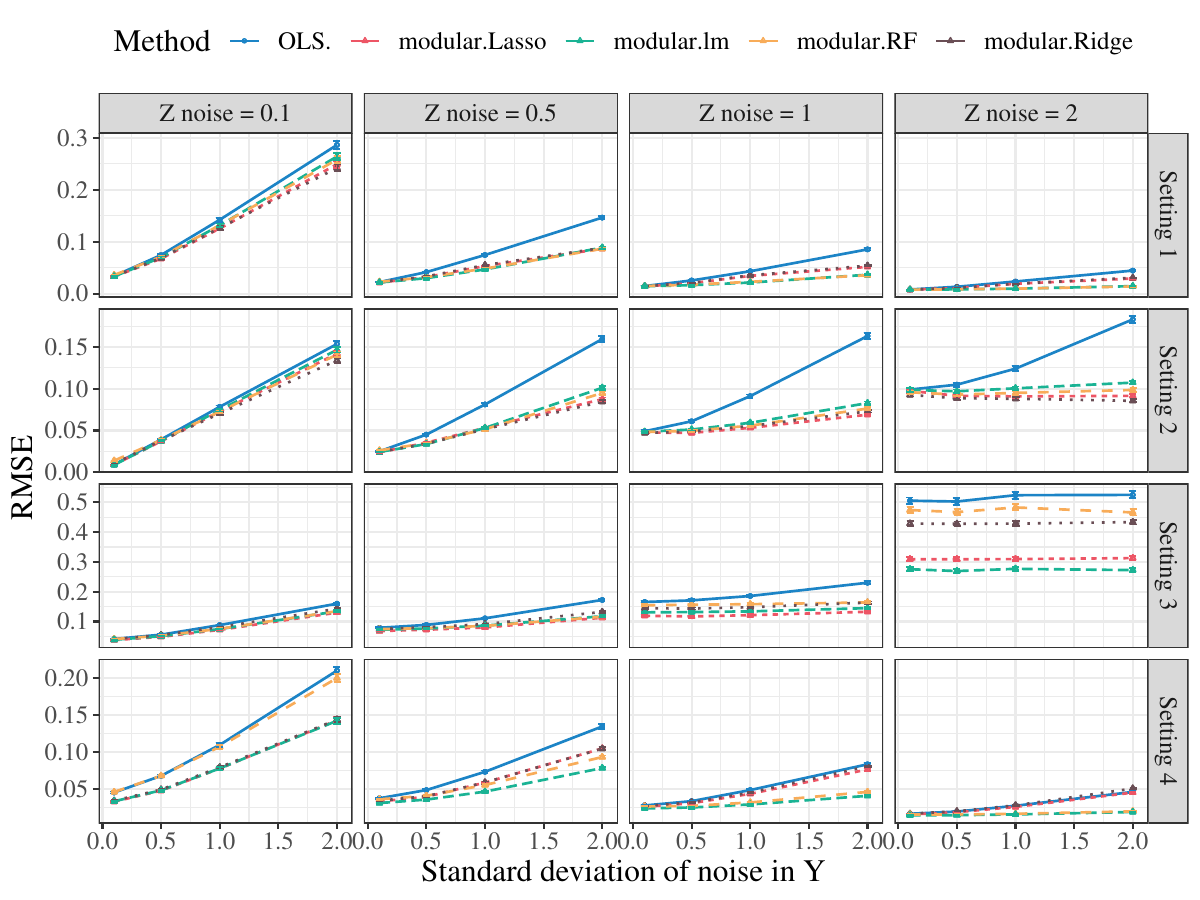}
    \caption{Aggregated RMSE when $n=2000$. Details are
    otherwise the same as Figure~\ref{fig:lowd_rmse_sub}.}
    \label{fig:lowd_rmse_sub_k}
\end{figure}

\begin{figure}[h]
\centering
\includegraphics[width=5in]{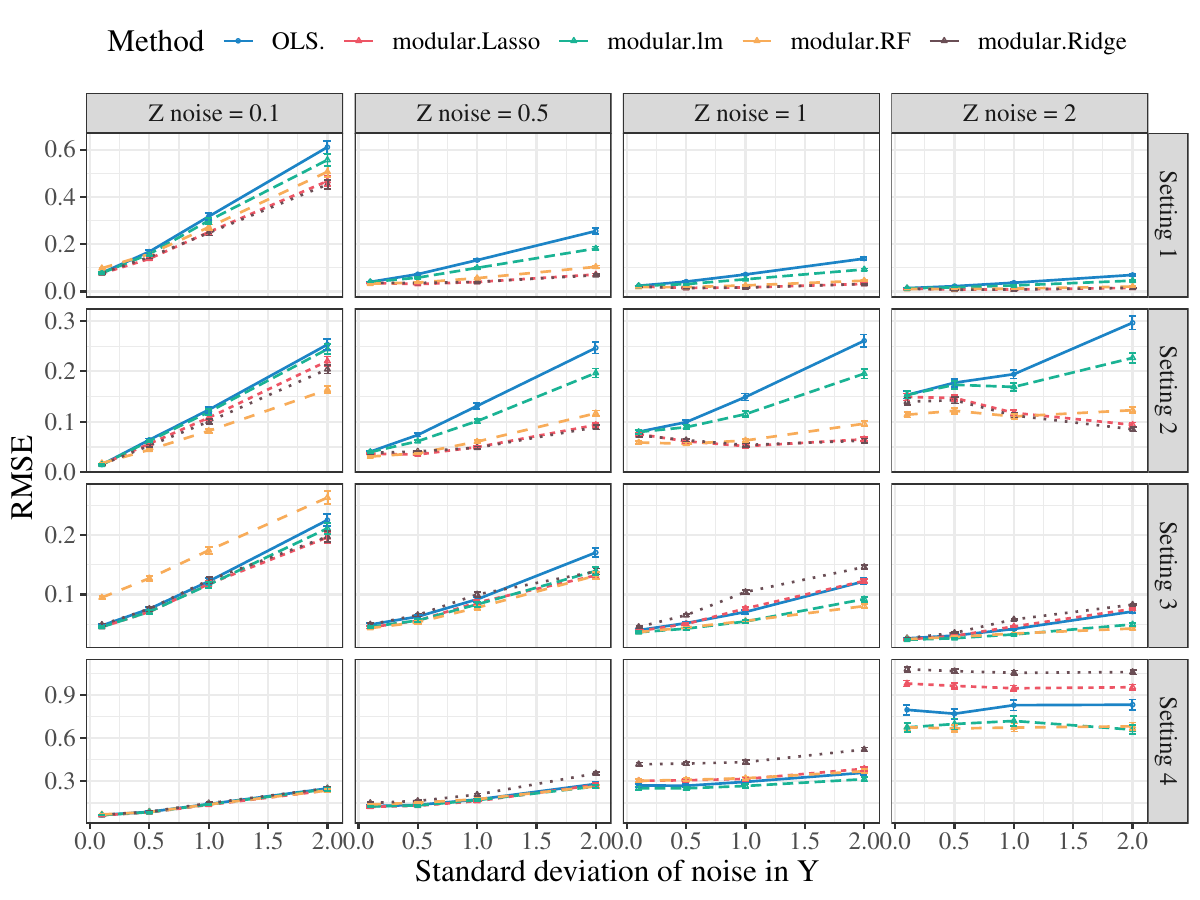}
\caption{RMSE for estimating $\theta^*_1$ in all settings. Details are the same as Figure~\ref{fig:lowd_rmse_sub}.}
\label{fig:rmse_lowd_1}
\end{figure}

\begin{figure}[h]
\centering
\includegraphics[width=5in]{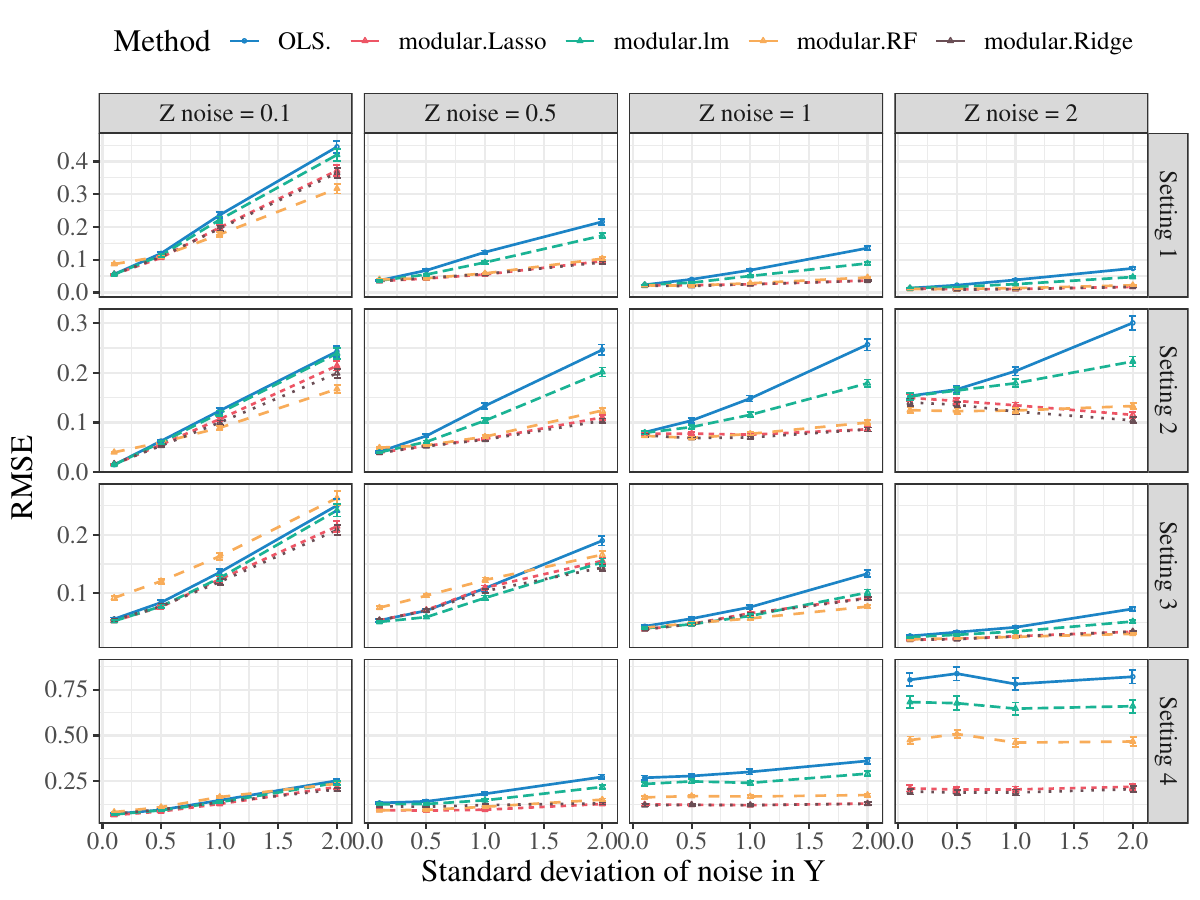}
\caption{RMSE for estimating $\theta^*_2$ in all settings. Details are the same as Figure~\ref{fig:lowd_rmse_sub}.}
\label{fig:rmse_lowd_2}
\end{figure}

\begin{figure}[h]
\centering
\includegraphics[width=5in]{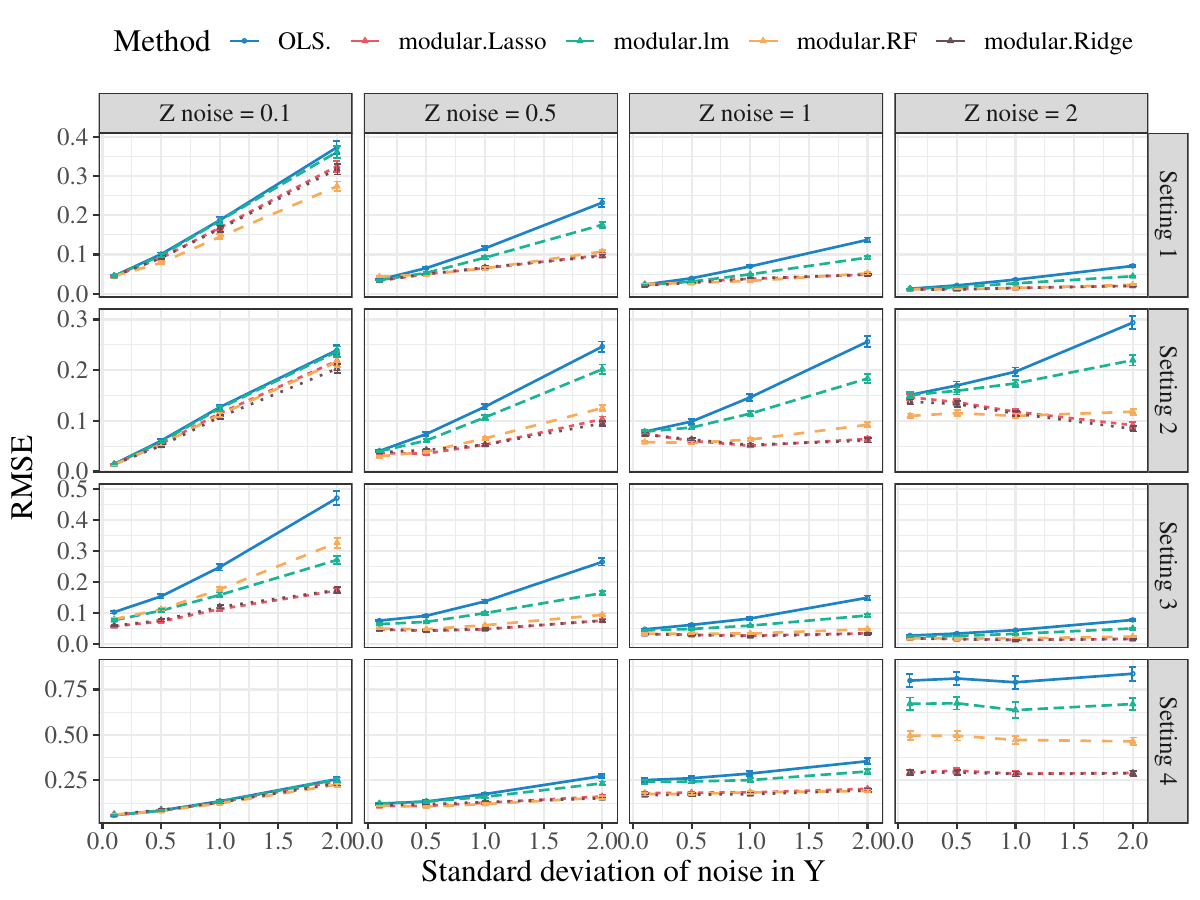}
\caption{RMSE for estimating $\theta^*_3$ in all settings. Details are the same as Figure~\ref{fig:lowd_rmse_sub}.}
\label{fig:rmse_lowd_3}
\end{figure}

\begin{figure}[h]
\centering
\includegraphics[width=5in]{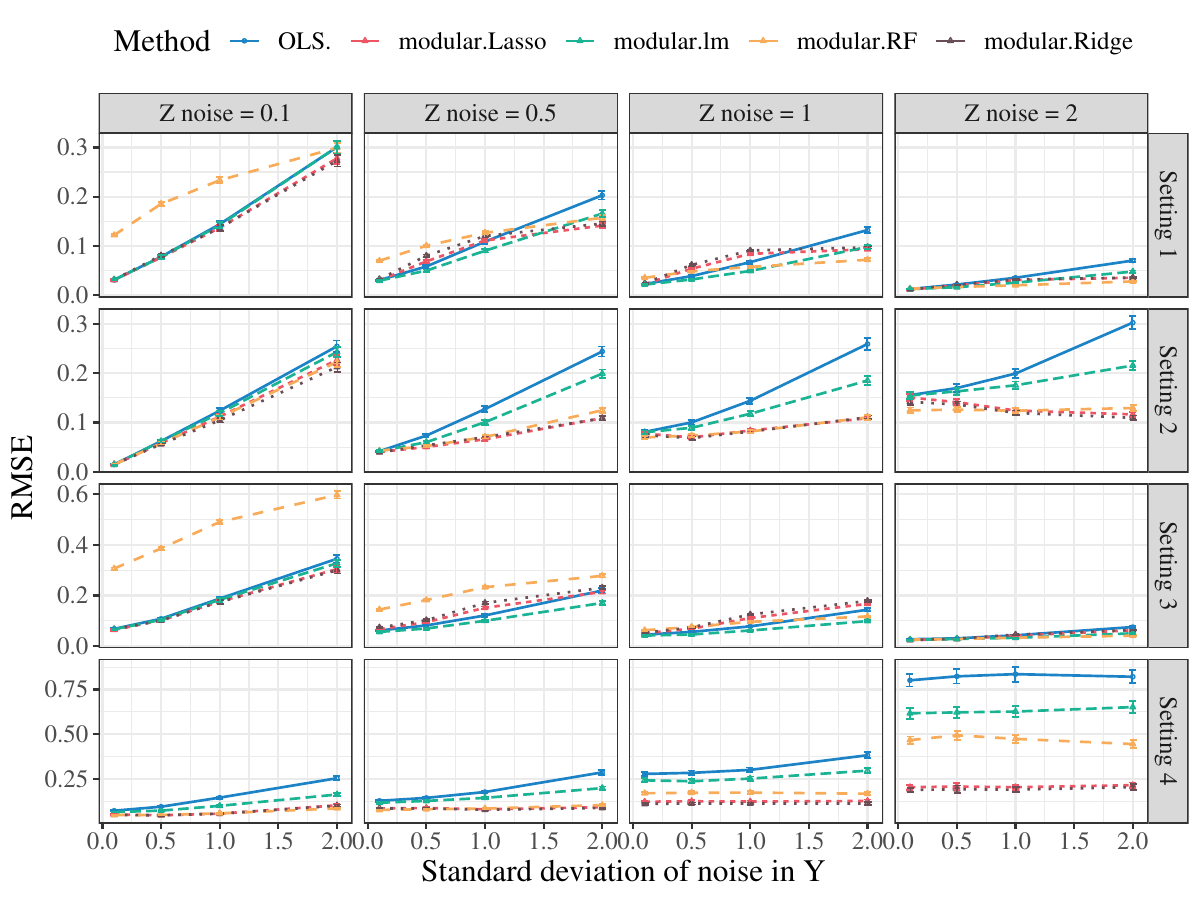}
\caption{RMSE for estimating $\theta^*_4$ in all settings. Details are the same as Figure~\ref{fig:lowd_rmse_sub}.}
\label{fig:rmse_lowd_4}
\end{figure}

\subsection{High-dimensional simulation}

\begin{figure}[h]
    \centering
    \begin{subfigure}[t]{0.48\linewidth}
        \centering
        \includegraphics[width=0.95\linewidth]{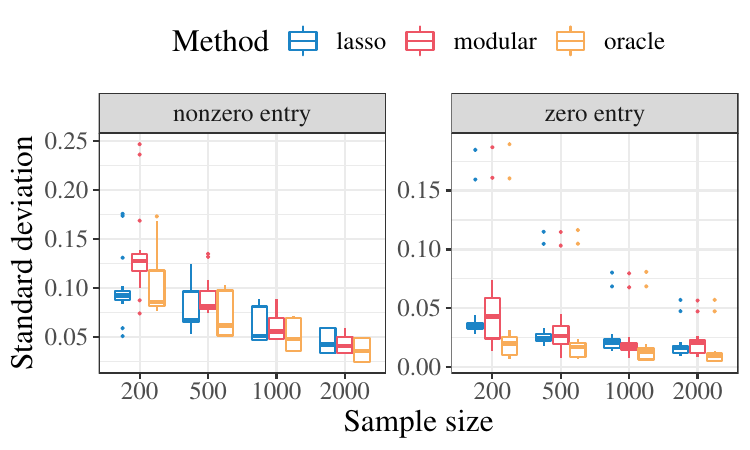}
    \end{subfigure} 
    \begin{subfigure}[t]{0.48\linewidth}
        \centering
        \includegraphics[width=0.95\linewidth]{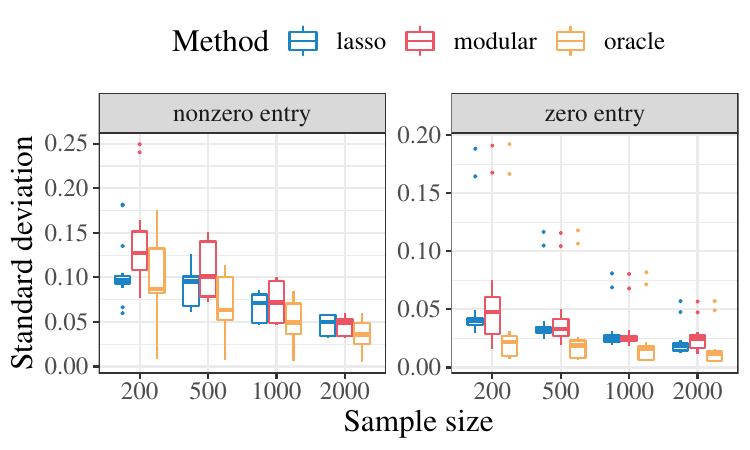}
    \end{subfigure} 
\caption{Boxplot of standard deviations for $\{\hat\theta_j\colon j\in[p_x]\}$, averaged
over $N=1000$ replicates in setting 1 (left) 
and setting 2 (right). Other details  are the same as Figure~\ref{fig:simu_1}. }
\label{fig:simu_sd}
\end{figure}

\end{document}